\newtheorem{lemma}{Lemma} 
\newtheorem{corollary}{Corollary}
\newtheorem{proposition}{Proposition}
\newcommand{\zero}{\mathbf{0}} 
\newcommand{\eye}{\mathbf{1}} 
\newcommand{\R}{\mathbf{R}} 
\title{Extended quantum circuit diagrams}
\author{William Schober\thanks{william.schober@usi.ch} \\ Faculty of Informatics -- Università della Svizzera italiana, 6900 Lugano, Switzerland}
\date{\today}
\begin{document}

\addtocontents{toc}{\protect\setcounter{tocdepth}{0}}

\maketitle

\begin{abstract}
    We present a formulation of quantum circuit diagrams based on the exponential map which provides a new way to calculate graphically with circuits. We present a sound list of rewrite rules for this formulation and demonstrate a variety of example calculations. 
\end{abstract}

\section{Introduction}

Due to the steady progress in NISQ computer hardware over the last decade, the need for efficient compilers has prompted the development of a number of tools for quantum circuit optimization. Of particular interest are diagrammatic languages, which replace matrix-based calculation methods with diagrammatic ones. Diagrammatic calculations consist of manipulating diagrams with a collection of `rewrite rules', diagrammatic axioms which equate any two diagrams that can be transformed into one another by a series of rewrites. Quantum circuit diagrams are an obvious candidate for diagrammatic calculation; two circuits are equivalent if they represent the same unitary matrix, and the rewrite rules are a collection of circuit equivalences. 

Various diagrammatic languages have been developed for circuits. The linear-optics-inspired approach developed at Inria was shown to be complete \cite{clement_complete_2023,clement_quantum_2023} for universal circuits, and later axiomatically minimal as well \cite{clement_minimal_2024}. Weaker diagrammatic languages for various non-universal gate sets have also been studied, including the CNOT-dihedral group of order 16 $\langle\text{CNOT}, X, T\rangle$ \cite{amy_finite_2018}, the Clifford group $\langle H, S, \text{CZ}\rangle$ \cite{selinger_generators_2015}, the real stabilizer group $\langle H, Z, \text{CZ}\rangle$ \cite{makary_generators_2021}, and universal gate sets on a fixed number of qubits \cite{bian_generators_2023-1,bian_generators_2023}. Stronger diagrammatic languages have been studied as well, with the foremost being the ZX-calculus\footnote{along with its sister calculi like ZH-calculus and ZW-calculus.}, which can represent arbitrary linears maps, not just unitary ones \cite{van_de_wetering_zx-calculus_2020,coecke_picturing_2017}.

The following list of rewrite rules for quantum circuit diagrams differs from previous approaches in two main ways. First, multi-controlled gates are a native part of the notation, and second, all of its definitions rely on the exponential map. The former means that circuits involving multi-controlled gates are easier to work with, and the latter means a variety of things: Hamiltonians are more explicitly visible, matrix diagonalization can be done graphically, and speaking more speculatively, the notation may be more amenable to handling diagrammatic derivatives like those used for backpropagation in QML. Nevertheless all of these frameworks ultimately agree on the meaning of all gates, and every identity shown here can be used in any other formalism. 

In Section 2 we present the notation used in extended quantum circuit diagrams. In Section 3 we present its rewrite rules. In Appendix \ref{appendix:examplecalcs} we provide a variety of example calculations, focusing on reproducing known results to demonstrate usage of the language. In Appendix \ref{appendix:soundness} we provide soundness proofs for the rules. This formalism suggests a `semi-classical interpretation' of controlled gates which motivates a formal definition of a controlled operation, which is the subject of Appendix \ref{appendix:onthenotionofcontrol}. Appendix \ref{appendix:hamiltonians} is a brief recap on Hamiltonians for the Computer Science reader unfamiliar with them. Appendix \ref{appendix:referencesheet} is a reference sheet containing all of the notation and rules on one page.

\section{Notation}

Extended quantum circuit diagrams (EQCDs) are a graphical language comprised of three pieces of notation -- the \textbf{symbols} $(\bm{\bullet},\bm{\circ},\bm{\oplus},\bm{\ominus})$, \textbf{vertical lines}, and \textbf{numbers} -- along with a list of rewrite rules describing how these pieces interact with one another. Vertical lines in particular merit a slightly longer, more pedagogical section because they involve the most unusual definition.

As always, horizontal lines represent the identity process for a qubit. This `unit wire' is the default process used to connect symbols together in sequence, and its Hamiltonian is the all-zero matrix $\mathbf{0}$.
\begin{align*}
\begin{quantikz}
    & & &
\end{quantikz}
:= \eye = e^{i\pi \mathbf{0}}
\end{align*}

\subsection{Symbols $\bm{\bullet},\bm{\circ},\bm{\oplus},\bm{\ominus}$}

The following four symbols representing one-qubit gates are the building blocks from which everything else is constructed.\footnote{Technically we could make do with just $\bm{\bullet}$ and $\bm{\oplus}$ since the others can be generated from them via the commutation rules, but for the moment we're concentrating on defining a graphical language that is easy to use rather than notationally minimal.}
\begin{align*}
&\begin{quantikz}
    & \phase{} &
\end{quantikz}
:= e^{i\pi\dyad{1}} = Z
&\begin{quantikz}
    & \targ{} &
\end{quantikz}
&:= e^{i\pi\dyad{-}} = X \\
&\begin{quantikz}
    & \ophase{} &
\end{quantikz}
:= e^{i\pi\dyad{0}} = -Z
&\begin{quantikz}
    & \targm{} &
\end{quantikz}
&:= e^{i\pi\dyad{+}} = -X
\end{align*}
We will sometimes refer to the \textit{type} of a symbol. The $Z$-type (or dot-type) symbols $\bm{\bullet}$ and $\bm{\circ}$ have respective Hamiltonians $\dyad{1}$ and $\dyad{0}$, which are projectors onto the computational basis states $\ket{1}$ and $\ket{0}$ respectively. The $X$-type (or circle-type) symbols $\bm{\oplus}$ and $\bm{\ominus}$ have respective Hamiltonians $\dyad{-}$ and $\dyad{+}$, which are projectors onto the computational basis states $\ket{-}$ and $\ket{+}$ respectively.

\subsection{Vertical lines}

Vertical lines are used to link symbols together to form multi-qubit gates. When two symbols $\bigstar,\bm{\diamond} \in \{\bm{\bullet},\bm{\circ},\bm{\oplus},\bm{\ominus}\}$ with respective Hamiltonians $H_\bigstar,H_\diamond \in \{\dyad{1},\dyad{0},\dyad{-},\dyad{+}\}$ are connected by a vertical line, they together denote a two-qubit gate whose Hamiltonian is $H_\bigstar \otimes H_\diamond$. That is,
\begin{equation*}
\text{if} \quad
\begin{quantikz}
    & \str{} &
\end{quantikz}
= e^{i\pi H_\bigstar} \quad
\text{and} \quad
\begin{quantikz}
    & \diam{} &
\end{quantikz}
= e^{i\pi H_\diamond} \quad \text{then} \quad
\begin{quantikz}
    & \str{}\vqw{1} & \\
    & \diam{} &
\end{quantikz}
:= e^{i\pi H_\bigstar \otimes H_\diamond}.
\end{equation*}
This definition reproduces all of the controlled operations. Let's see some examples, starting with CNOT. CNOT is constructed graphically by joining $\bm{\bullet}$ (which has Hamiltonian $\dyad{1}$) to $\bm{\oplus}$ (which has Hamiltonian $\dyad{-}$), so it has the tensor product Hamiltonian $\dyad{1}\otimes\dyad{-}$.
\begin{equation*}
\begin{quantikz}[column sep=0.1cm,row sep=0.2cm]
    & \ctrl{1} & \\
    & \targ{} & 
\end{quantikz}
=
e^{i\pi\big(\dyad{1}\otimes\dyad{-}\big)}=e^{i\pi\dyad{1-}} = \begin{bmatrix}
   1 & 0 & 0 & 0 \\
   0 & 1 & 0 & 0 \\
   0 & 0 & 0 & 1 \\
   0 & 0 & 1 & 0 
\end{bmatrix}
\end{equation*}
In addition to this diagrammatic form, exponential form, and matrix form, CNOT also has a variety of `semi-classical interpretations'\footnote{`Semi-classical' because, unlike the classical CNOT, the notion of control and target wires is dependent on a local choice of basis, something that is discussed further in Appendix C.} as a controlled operation where one wire is read and the other is written to. 
\begin{equation}
\begin{quantikz}[column sep=0.1cm,row sep=0.2cm]
    & \ctrl{1} & \\
    & \targ{} & 
\end{quantikz}
=e^{i\pi\dyad{1-}} = \begin{array}{l}
\verb|if | \ket{1}_A \\
\verb| do | X_B
\end{array} 
= \begin{array}{l}
\verb|if | \ket{-}_B \\
\verb| do | Z_A
\end{array}
= \begin{array}{l}
\verb|if | \ket{1-}_{AB} \\
\verb| do | -\eye_{AB}
\end{array}
\end{equation}
Typically only the first of these semi-classical interpretations is discussed in literature (\verb|if |$\ket{1}_A$\verb| do |$X_B$), although the other two are mathematically equivalent. The admittance of these collections of semi-classical interpretations, and the resulting interchangeability of control and target wires, turns out to be a general feature of controlled operations; for more discussion, see Appendix \ref{appendix:onthenotionofcontrol}. Here in the main text we restrict ourselves to simply presenting these little blocks of `semi-classical logic' since they are frequently useful for easily simplifying calculations involving gates with many controls.

CZ is one of the few gates where the interchangeability of control and target is well-known, because it is diagonal in the computational basis. Graphically it is created by joining $\bm{\bullet}$ and $\bm{\bullet}$ with a vertical line. CZ has the tensor product Hamiltonian $\dyad{1}\otimes\dyad{1} = \dyad{11}$.

\begin{equation*}
\begin{quantikz}
    & \ctrl{1} & \\
    & \phase{} & 
\end{quantikz}
=
e^{i\pi\dyad{11}}=
\begin{array}{l}
\verb|if | \ket{1}_A \\
\verb| do | Z_B
\end{array} 
= \begin{array}{l}
\verb|if | \ket{1}_B \\
\verb| do | Z_A
\end{array}
= \begin{array}{l}
\verb|if | \ket{11}_{AB} \\
\verb| do | -\eye_{AB}
\end{array}
=
\begin{bmatrix}
   1 & 0 & 0 & 0 \\
   0 & 1 & 0 & 0 \\
   0 & 0 & 1 & 0 \\
   0 & 0 & 0 & -1 
\end{bmatrix}
\end{equation*}

Some more unusual controlled operations can be constructed with the vertical line.
\begin{equation*}
\begin{quantikz}[column sep=0.1cm,row sep=0.2cm]
    & \targ{}\vqw{1} & \\
    & \targ{} & 
\end{quantikz}
=
e^{i\pi\dyad{--}}=
\begin{array}{l}
\verb|if | \ket{-}_A \\
\verb| do | X_B
\end{array} 
= \begin{array}{l}
\verb|if | \ket{-}_B \\
\verb| do | X_A
\end{array}
= \begin{array}{l}
\verb|if | \ket{--}_{AB} \\
\verb| do | -\eye_{AB}
\end{array}
= {\displaystyle \frac{1}{2}}
\begin{bmatrix}
   1 & 1 & 1 & -1 \\
   1 & 1 & -1 & 1 \\
   1 & -1 & 1 & 1 \\
   -1 & 1 & 1 & 1 
\end{bmatrix}
\end{equation*}

\begin{equation*}
\begin{quantikz}[column sep=0.1cm,row sep=0.2cm]
    & \targm{}\vqw{1} & \\
    & \targ{} & 
\end{quantikz}
=
e^{i\pi\dyad{+-}}=
\begin{array}{l}
\verb|if | \ket{+}_A \\
\verb| do | X_B
\end{array} 
= \begin{array}{l}
\verb|if | \ket{-}_B \\
\verb| do | (-X)_A
\end{array}
= \begin{array}{l}
\verb|if | \ket{+-}_{AB} \\
\verb| do | -\eye_{AB}
\end{array}
= {\displaystyle \frac{1}{2}}
\begin{bmatrix}
   1 & 1 & -1 & 1 \\
   1 & 1 & 1 & -1 \\
   -1 & 1 & 1 & 1 \\
   1 & -1 & 1 & 1 
\end{bmatrix}
\end{equation*}

The definition of the vertical line extends inductively to any number of qubits in the straightforward way. If $n$ symbols on $n$ different wires are connected by a vertical line, they together denote an $n$-qubit controlled operation whose Hamiltonian is the $n$-fold tensor product of its constituent Hamiltonians. We will call such objects \textit{conditionals} since their behavior is dependent on satisfying one or more conditions. CNOT, CZ, and the two stranger gates above are all 2-qubit conditionals. Toffoli is a 3-qubit conditional that is created graphically by joining $\bm{\bullet}$, $\bm{\bullet}$ and $\bm{\oplus}$. Its Hamiltonian is $\dyad{1} \otimes \dyad{1} \otimes \dyad{-} = \dyad{11-}$. There is no need to keep track of the order in which the symbols were joined because the tensor product is associative: $\dyad{1} \otimes \big(\dyad{1} \otimes \dyad{-}\big) = \big(\dyad{1} \otimes \dyad{1}\big) \otimes \dyad{-}$. Hence there is no such notion of `join order', and the diagrammatic notation need not reflect it, instead using a simple vertical line through all symbols at once as normal.
\begin{equation*}
\begin{quantikz}[column sep=0.1cm,row sep=0.2cm]
    & \ctrl{1} & \\
    & \ctrl{1} & \\
    & \targ{} & 
\end{quantikz}
= e^{i\pi\dyad{11-}} = 
\begin{bmatrix}
    \eye & \zero & \zero & \zero \\
    \zero & \eye & \zero & \zero \\
    \zero & \zero & \eye & \zero \\
    \zero & \zero & \zero & X 
\end{bmatrix}
\end{equation*}

Toffoli admits seven different semi-classical interpretations, depending on which symbols are considered controls and which are considered targets.

\begin{equation*}
\begin{aligned}
\begin{quantikz}[column sep=0.1cm,row sep=0.2cm]
    & \ctrl{1} & \\
    & \ctrl{1} & \\
    & \targ{} & 
\end{quantikz}
&= \begin{array}{l}
\verb|if | \ket{1}_A \\
\verb| do | \text{CNOT}_{BC}
\end{array} 
= \begin{array}{l}
\verb|if | \ket{1}_B \\
\verb| do | \text{CNOT}_{AC}
\end{array}
= \begin{array}{l}
\verb|if | \ket{-}_C \\
\verb| do | \text{CZ}_{AB}
\end{array} \\
&= \begin{array}{l}
\verb|if | \ket{1-}_{BC} \\
\verb| do | Z_A
\end{array} 
= \begin{array}{l}
\verb|if | \ket{1-}_{AC} \\
\verb| do | Z_B
\end{array}
= \begin{array}{l}
\verb|if | \ket{11}_{AB} \\
\verb| do | X_C
\end{array}
= \begin{array}{l}
\verb|if | \ket{11-}_{ABC} \\
\verb| do | -\eye_{ABC}
\end{array}
\end{aligned}
\end{equation*}

Controlled unitaries are created by joining one or more symbols to an arbitrary unitary on $n$ qubits $U = e^{i\pi H}$. Letting $H = \sum \lambda_j \dyad{u_j}$ where $\ket{u_j}$ are eigenvectors of $H$ and $\lambda_j \in [0,2)$ w.l.o.g.,\footnote{The exponential map from the Hermitian matrices to the unitary matrices is known to be surjective because the Lie group $U(n)$ is compact and connected. The restriction of $H$'s eigenvalues to the interval $[0,2)$ makes the exponential map injective as well, so that every unitary $U = e^{i\pi H}$ is in one-to-one correspondence with a Hamiltonian $H$.} the standard controlled unitary formed by joining $\bm{\bullet}$ to $U$ has tensor product Hamiltonian $\dyad{1}\otimes H = \sum \lambda_j \dyad{1u_j}$. An explicit circuit is shown on the right for the case where $U = W\text{diag}(e^{i\pi\lambda_j})W^\dagger$ is a 2-qubit gate. For a definition of what the numbers $\lambda_j$ mean in the diagrammatic context, see the next section on numbers.
\begin{equation*}
\begin{quantikz}[column sep=0.1cm,row sep=0.2cm]
    & \ctrl{1} & \\
    & \gate{U} & 
\end{quantikz}
=
e^{i\pi\sum \lambda_j \dyad{1u_j}}
=
\begin{array}{l}
\verb|if | \ket{1}_A \\
\verb| do | U_B
\end{array} 
= \begin{array}{l}
\verb|if | \ket{u_j}_B \\
\verb| do | (Z^{\lambda_j})_A
\end{array}
=
\begin{bmatrix}
    \eye & \zero \\
    \zero & U 
\end{bmatrix}
=
\begin{quantikz}[column sep=0.35cm,row sep=0.15cm]
    & \ghost{H} & \ctrl{1} & \ctrl{1} & \ctrl{1} & \ctrl{1} & & \\
    & \gate[2]{W^\dagger}\gategroup[2,steps=6,style={dashed,rounded corners, inner xsep=2pt, inner ysep=2pt},label style={label position=below, yshift=-0.5cm}]{$U$} & \octrl[wire style={"\lambda_0"}]{1} & \octrl[wire style={"\lambda_1"}]{1} & \ctrl[wire style={"\lambda_2"}]{1} & \ctrl[wire style={"\lambda_3"}]{1} & \gate[2]{W} & \\
    & & \ocontrol{} & \control{} & \ocontrol{} & \control{} & &
\end{quantikz}
\end{equation*}

We can create stranger-looking controlled unitaries with the vertical line. Here is a controlled-$U$ that is controlled by a $\ket{-}$ rather than a $\ket{1}$.

\begin{equation*}
\begin{quantikz}[column sep=0.1cm,row sep=0.2cm]
    & \targ{}\vqw{1} & \\
    & \gate{U} & 
\end{quantikz}
=
e^{i\pi\sum \lambda_j \dyad{-u_j}}
=
\begin{array}{l}
\verb|if | \ket{-}_A \\
\verb| do | U_B
\end{array} 
= \begin{array}{l}
\verb|if | \ket{u_j}_B \\
\verb| do | (X^{\lambda_j})_A
\end{array}
\end{equation*}

Here is an operation with two symbols attached to $U$. Reading off from the diagram, since the attached unitaries are $\bm{\circ}$, $U$, and $\bm{\oplus}$, it has Hamiltonian $\dyad{0}\otimes\sum\lambda_j\dyad{u_j}\otimes\dyad{-}$. This gate has a variety of equivalent semi-classical interpretations, with the fifth being the closest to a standard one.

\begin{equation*}
\begin{aligned}
\begin{quantikz}[column sep=0.1cm,row sep=0.2cm]
    & \octrl{1} & \\
    & \gate{U} & \\
    & \targ{}\vqw{-1} &
\end{quantikz}
=
e^{i\pi\sum \lambda_j \dyad{0u_j-}}
&=
\begin{array}{l}
\verb|if | \ket{0}_A \\
\verb| do | {e^{i\pi\sum\lambda_j\dyad{u_j-}}}_{BC}
\end{array} 
=
\begin{array}{l}
\verb|if | \ket{u_j}_B \\
\verb| do | {\overline{\text{C}}\text{NOT}^{\lambda_j}}_{AC}
\end{array} 
=
\begin{array}{l}
\verb|if | \ket{-}_C \\
\verb| do | {e^{i\pi\sum\lambda_j\dyad{0u_j}}}_{AB}
\end{array} \\
&=
\begin{array}{l}
\verb|if | \ket{u_j-}_{BC} \\
\verb| do | {(-Z)^{\lambda_j}}_A
\end{array} 
=
\begin{array}{l}
\verb|if | \ket{0-}_{AC} \\
\verb| do | U_B
\end{array} 
=
\begin{array}{l}
\verb|if | \ket{0u_j}_{AB} \\
\verb| do | {X^{\lambda_j}}_C
\end{array}
=
\begin{array}{l}
\verb|if | \ket{0u_j-}_{ABC} \\
\verb| do | e^{i\pi\lambda_j}\eye_{ABC}
\end{array}
\end{aligned}
\end{equation*}

The most general kind of controlled operation need not involve any symbols explicitly at all. The definition given above extends almost as straightforwardly to joining any two arbitrary unitaries. Let $U,V$ be unitary operations on $n$ and $m$ qubits respectively, with respective Hamiltonians $H_U,H_V$. Joining $U$ and $V$ with a vertical line denotes an operation on $n+m$ qubits whose Hamiltonian is $H_U\otimes H_V$.

\begin{equation*}
\text{if} \quad
\begin{quantikz}
    & \gate{U} &
\end{quantikz}
= e^{i\pi H_U} \quad \text{and} \quad
\begin{quantikz}
    & \gate{V} &
\end{quantikz}
= e^{i\pi H_V} \quad \text{then} \quad
\begin{quantikz}
    & \gate{U}\vqw{1} & \\
    & \gate{V} &
\end{quantikz}
:= e^{i\pi H_U \otimes H_V}
\end{equation*}

For this definition to be well-defined\footnote{The author thanks P. Selinger and S. Perdrix for independently pointing out the problem of ill-definedness, and A. Clemént for providing an initial, alternate condition that also removes the problem of ill-definedness (not shown here because it is a stronger condition than the one presented). The author notes that one can also consider not making any restriction at all, in which case we would just need to label gates by their Hamiltonians rather than the resulting unitaries, since we lose injectivity of the exponential map. The resulting graphical language has some rather unusual properties that quantum circuits do not have, and hence is not discussed here.} we require the same condition as before, namely that all eigenphases of $U$ and $V$ be in the interval $[0,2\pi)$, i.e. that the eigenvalues of $H_U,H_V$ be reduced modulo 2 to the interval $[0,2)$. This loses no generality since the exponential map is still surjective over this domain. This restriction of its domain makes it injective as well, meaning every $U$ has a unique Hamiltonian $H$ such that $U = e^{i\pi H}$.

\subsection{Numbers}
Real numbers $\alpha$ written next to gates denote the $\alpha$th power/root of that gate. That is, for any symbol $\bigstar,\bm{\diamond} \in \{\bm{\bullet},\bm{\circ},\bm{\oplus},\bm{\ominus}\}$ with respective Hamiltonian $H_\bigstar,H_\diamond \in \{\dyad{1},\dyad{0},\dyad{-},\dyad{+}\}$
\begin{equation*}
\text{if} \quad
\begin{quantikz}
    & \str{} &
\end{quantikz}
= e^{i\pi H_\bigstar} \quad \text{then} \quad
\begin{quantikz}
    & \str{\alpha} &
\end{quantikz}
:= e^{i\pi \alpha H_\bigstar}.
\end{equation*}

Let's see some examples of common gates in this notation. We note that any gates without a power, like those from previous sections, are retroactively thought of as implicitly having a power of 1. We call such gates \textit{simple}.

\begin{equation*}
\begin{quantikz}
    & \gate{Z} &
\end{quantikz}
=
\begin{quantikz}
    & \phase{1} &
\end{quantikz}
=
\begin{quantikz}
    & \phase{} &
\end{quantikz} 
\hspace{4cm}
\begin{quantikz}
    & \gate{X} &
\end{quantikz}
=
\begin{quantikz}
    & \targ{1} &
\end{quantikz}
=
\begin{quantikz}
    & \targ{} &
\end{quantikz} 
\end{equation*}

\begin{equation*}
\begin{aligned}
&\begin{quantikz}
    & \gate{S} &
\end{quantikz}
=
\begin{quantikz}
    & \phase{\sfrac{1}{2}} &
\end{quantikz}
&\begin{quantikz}
    & \gate{S^\dagger} &
\end{quantikz}
=
\begin{quantikz}
    & \phase{\sfrac{-1}{2}} &
\end{quantikz}
&
&\begin{quantikz}
    & \gate{V} &
\end{quantikz}
=
\begin{quantikz}
    & \targ{\sfrac{1}{2}} &
\end{quantikz} 
&
&\begin{quantikz}
    & \gate{V^\dagger} &
\end{quantikz}
=
\begin{quantikz}
    & \targ{\sfrac{-1}{2}} &
\end{quantikz} 
\\
&\begin{quantikz}
    & \gate{T} &
\end{quantikz}
=
\begin{quantikz}
    & \phase{\sfrac{1}{4}} &
\end{quantikz}
&\begin{quantikz}
    & \gate{T^\dagger} &
\end{quantikz}
=
\begin{quantikz}
    & \phase{\sfrac{-1}{4}} &
\end{quantikz}
&
&
&
&
\\
&\begin{quantikz}
    & \gate{R(\theta)} &
\end{quantikz}
=
\begin{quantikz}
    & \phase{\theta/\pi} &
\end{quantikz}
&
&
&\begin{quantikz}
    & \gate{R_X(\theta)} &
\end{quantikz}
=
\begin{quantikz}
    & \targ{\theta/\pi} &
\end{quantikz}
&
&
\end{aligned}
\end{equation*}
\begin{equation*}
\begin{quantikz}
    & \gate{\text{Phase}(\theta)} &
\end{quantikz}
=
\begin{quantikz}
    & \phase{\theta/\pi} & \ophase{\theta/\pi} &
\end{quantikz}
=
\begin{quantikz}
    & \targ{\theta/\pi} & \targm{\theta/\pi} &
\end{quantikz}
\end{equation*}

Hadamard is a technically a derived operation in this language, but we include it as syntactic sugar since it's so common and useful in circuits -- it will even appear in the rewrite ruleset. There are many equivalent ways to define it in terms of the symbols. We give three of them here (two Euler decompositions and a diagonalization\footnote{For the curious reader, the three decompositions given here are, in order, $SXSe^{-i\pi/4}$, $e^{-i\pi/4}SXS$ (with the global phase incorporated into the gates, since this graphical language treats global phases as a derived part of the notation), and $V^\dagger T V Z V^\dagger T^\dagger V$.}), although we'll never have cause to actually use them, instead relying only the way $H$ interacts with the symbols via the rewrite ruleset.

\begin{equation*}
\begin{aligned}
\begin{quantikz}
    & \gate{H} &
\end{quantikz}
&:=
\begin{quantikz}
& & \phase{\sfrac{1}{2}} & \targ{\sfrac{1}{2}} & \phase{\sfrac{1}{4}} & \ophase{\sfrac{-1}{4}} & 
\end{quantikz}
=
\begin{quantikz}
& \ophase{\sfrac{-1}{4}} & \phase{\sfrac{1}{4}} & \targ{\sfrac{1}{2}} & \phase{\sfrac{1}{2}} & 
\end{quantikz} \\
&=
\begin{quantikz}
& \targ{\sfrac{1}{2}} & \phase{\sfrac{-1}{4}} & \targ{\sfrac{-1}{2}} & \phase{} & \targ{\sfrac{1}{2}} & \phase{\sfrac{1}{4}} & \targ{\sfrac{-1}{2}} &
\end{quantikz}
\Big(\quad =
\begin{quantikz}
& \gate{Y^{-1/4}} & \phase{} & \gate{Y^{1/4}} &
\end{quantikz}\Big)
\end{aligned}
\end{equation*}

Numbers extend in the obvious way to act as powers for arbitrary unitaries. Let $U$ be a unitary operation on $n$ qubits. Then for all $\alpha \in \R$,
\begin{equation*}
\text{if} \quad
\begin{quantikz}
    & \gate{U} &
\end{quantikz}
= e^{i\pi H_U} \quad \text{then} \quad
\begin{quantikz}
    & \gate{U} & \wire[l][1]["\alpha"{above,pos=0.2,scale=1.3}]{a}
\end{quantikz}
:=
e^{i\pi \alpha H_U}
=
\begin{quantikz}
    & \gate{U^\alpha} &
\end{quantikz}
\end{equation*}

Said another way, $U^\alpha$ is a unitary operation that has the same eigenvectors as $U$, and whose eigenphases have been uniformly scaled $\lambda_j \mapsto \alpha\lambda_j$. That is, if we write $U$ in its spectral decomposition $U = W\Lambda W^\dagger$ where $\Lambda = \text{diag}(e^{i\pi\lambda_j})$, then $\alpha$ written next to $U$ denotes $U^\alpha = W\Lambda^\alpha W^\dagger = W\text{diag}(e^{i\pi\alpha\lambda_j})W^\dagger$.
\begin{equation*}
\text{if} \quad
\begin{quantikz}
    & \gate{U} &
\end{quantikz}
=
\begin{quantikz}
    & \gate{W^\dagger} & \gate{\Lambda} & \gate{W} &
\end{quantikz}
\quad \text{then} \quad
\begin{quantikz}
    & \gate{U} & \wire[l][1]["\alpha"{above,pos=0.2,scale=1.3}]{a}
\end{quantikz}
:=
\begin{quantikz}
    & \gate{U^\alpha} &
\end{quantikz}
=
\begin{quantikz}
    & \gate{W^\dagger} & \gate{\Lambda^\alpha} & \gate{W} &
\end{quantikz}
\end{equation*}

\subsubsection*{Combining vertical lines and numbers}

Now that we're equipped with a definition for powers/roots of multi-qubit gates, let's combine the graphical notions of numbers and vertical lines to see some examples of common multi-qubit gates as they appear in this notation. The matrix $R_k$ shown here is the one used in the standard circuit implementation of the Quantum Fourier Transform.
\begin{equation*}
\begin{aligned}
&\begin{quantikz}
    & \ctrl{1} & \\
    & \gate{S} &
\end{quantikz}
=
\begin{quantikz}
    & \gate{S} & \\
    & \ctrl{-1} &
\end{quantikz}
=
\begin{quantikz}
    & \phase{\sfrac{1}{2}} & \\
    & \ctrl{-1} &
\end{quantikz}
&\begin{quantikz}
    & \ctrl{1} & \\
    & \gate{V} &
\end{quantikz}
=
\begin{quantikz}
    & \phase{\sfrac{1}{2}} & \\
    & \trg{-1} &
\end{quantikz}
\\
&\begin{quantikz}
    & \ctrl{1} & \\
    & \gate{T} &
\end{quantikz}
=
\begin{quantikz}
    & \gate{T} & \\
    & \ctrl{-1} &
\end{quantikz}
=
\begin{quantikz}
    & \phase{\sfrac{1}{4}} & \\
    & \ctrl{-1} &
\end{quantikz}
&\begin{quantikz}
    & \gate{V} & \\
    & \ctrl{-1} &
\end{quantikz}
=
\begin{quantikz}
    & \targ{\sfrac{1}{2}} & \\
    & \ctrl{-1} &
\end{quantikz}
\\
&\begin{quantikz}
    & \ctrl{1} & \\
    & \gate{R_k} &
\end{quantikz}
=
\begin{quantikz}
    & \gate{R_k} & \\
    & \ctrl{-1} &
\end{quantikz}
=
\begin{quantikz}
    & \phase{1/2^{k-1}} & \\
    & \ctrl{-1} &
\end{quantikz}
&
\end{aligned}
\end{equation*}

The example of the CV may have the reader puzzled; surely it's the $V$ that has the power of $\sfrac{1}{2}$ attached to it, since it's the square root of $\bm{\oplus}$, and not the $\bm{\bullet}$ on the top wire. Why then is the root always written at the top? In fact there is no semantic meaning to the vertical position of numbers; they float freely along vertical lines. This is a consequence of the property $A\otimes \alpha B = \alpha A \otimes B$ of the tensor product.\footnote{If there are multiple numbers on the same vertical line, they multiply together. By convention we will always write just one number per gate, to the right of the gate and usually at the top.}
\begin{equation*} 
\begin{quantikz}
    & \ctrl{1} & \\
    & \targ{\alpha} &
\end{quantikz}
= e^{i\pi\dyad{1}\otimes (\alpha \dyad{-})} = e^{i\pi(\alpha \dyad{1})\otimes\dyad{-}} =
\begin{quantikz}
    & \phase{\alpha} & \\
    & \trg{-1} &
\end{quantikz}
\end{equation*}

We pause here to remark that the notation is now powerful enough to represent any quantum circuit since the universal gate set $\{H, T, \text{CNOT}\}$ can be constructed from symbols, vertical lines, and numbers. It remains to be shown how to calculate graphically with these pieces. That is the subject of the rest of the paper.

\section{Rewrite rules}

The following rewrite ruleset can either be seen as a collection of algebraic consequences of the above definitions (this is how they were discovered) or, in the purist graphical style, taken simply as axioms for a graphical language. As yet we make no claims about the completeness\footnote{We suspect it is not complete for all circuits (even those without measurements) since it lacks an Euler decomposition rule, something that is common to all known languages that are complete for circuits. Exactly what subset of circuits it \textit{is} complete for without the Euler decomposition rule is an open question.} or minimality\footnote{It is not minimal since rule (g) is derivable from the others. Whether the ruleset becomes minimal if we remove (g) is an open question.} of this ruleset for circuits; only that it is sound\footnote{Soundness proofs are given in Appendix \ref{appendix:soundness}.} and powerful enough to do some interesting graphical calculations, showcased in Appendix \ref{appendix:examplecalcs}.

EQCDs obey the following `flipflop' symmetry principles, which allow the ruleset to presented more easily.

\begin{enumerate}
\item \textbf{(flip) All rules hold under arbitrary permutations of the symbols within types. Applying either of the maps $(\bm{\bullet}\leftrightarrow\bm{\circ})$, or $(\bm{\oplus}\leftrightarrow\bm{\ominus})$, or both, to any rule yields another valid rule.}
\item \textbf{(flop) All rules hold under joint interchange of symbols between types. Applying the map $(\bm{\bullet}\leftrightarrow\bm{\oplus}$ and $\bm{\circ}\leftrightarrow\bm{\ominus})$ to any rule yields another valid rule.}
\end{enumerate}

Following each rule we provide a short description in words of its meaning, usually to emphasize what features are gained by applying the symmetry principles to it.

\subsection{Rules concerning one qubit}

\begin{equation*}
\text{\textbf{(c)} commutation:  } \begin{quantikz}
    & \phase{\alpha} & \ophase{\beta} &
\end{quantikz}
=
\begin{quantikz}
    & \ophase{\beta} & \phase{\alpha} &
\end{quantikz}
\quad \forall \alpha,\beta\in\R
\end{equation*}

Rule (c) says that symbols of the same type commute freely regardless of their respective powers. That is to say that $\bm{\bullet}$ commutes with $\bm{\circ}$, and $\bm{\oplus}$ commutes with $\bm{\ominus}$.

\begin{equation*}
\text{\textbf{(ac)} anticommutation:  } \begin{quantikz}
    & \phase{\alpha} & \targ{} &
\end{quantikz}
=
\begin{quantikz}
    & \targ{} & \ophase{\alpha} &
\end{quantikz}
\quad \Big(\text{ and} \quad
\begin{quantikz}
    & \ophase{\alpha} & \targ{} &
\end{quantikz}
=
\begin{quantikz}
    & \targ{} & \phase{\alpha} &
\end{quantikz}\Big)
\quad \forall \alpha\in\R
\end{equation*}

Rule (ac) says that when a symbol moves past an opposite-typed \textit{numberless} symbol, it flips. A special case of this rule is when neither symbol has a number ($\alpha = 1$), in which case one of the two symbols is flipped, but not both; which symbol gets flipped is up the user's choice.

\begin{equation*}
\text{\textbf{(H)} Hadamard:  } \begin{quantikz}
    & \phase{\alpha} & \gate{H} &
\end{quantikz}
=
\begin{quantikz}
    & \gate{H} & \targ{\alpha} &
\end{quantikz}
\quad \Big(\text{ and} \quad
\begin{quantikz}
    & \ophase{\alpha} & \gate{H} &
\end{quantikz}
=
\begin{quantikz}
    & \gate{H} & \targm{\alpha} &
\end{quantikz}\Big)
\quad \forall \alpha\in\R
\end{equation*}

Rule (H) gives the Hadamard gate its defining behavior as the canonical one-qubit flop map.

\begin{equation*}
\text{\textbf{(i)} involution:  }
\begin{quantikz}
    & \phase{2} & 
\end{quantikz}
=
\begin{quantikz}
    & \gate{H} & \wire[l][1]["2"{above,pos=0.2,scale=1.3}]{a}
\end{quantikz}
=
\begin{quantikz}
    & & & &
\end{quantikz}
\end{equation*}

Rule (i) says that all four symbols, and Hadamard, are involutions. Applying any of them twice in a row is equivalent to doing nothing.

\subsection{Rules concerning vertical lines}

The next five rules concern the behavior of vertical lines. All unitary gates $U,V$ that appear imply that the rule holds true for any gates $U,V$ acting on the same $n$ qubits.

\begin{equation*}
\text{\textbf{(C0)} empty control:  }
\begin{quantikz}
& \gate{\eye}\vqw{1} & \\
& \gate{U} & 
\end{quantikz}
=
\begin{quantikz}
& \ghost{H} & \\
& \ghost{H} & 
\end{quantikz}
\hspace{6mm}
\begin{array}{l}
\verb|if false| \\
\verb| do | U
\end{array}
=
\verb| pass |
\quad \Big(= \begin{array}{l}
\verb|if | \ket{u_j} \\
\verb| pass|
\end{array} \forall j \Big)
\end{equation*}

Rule (C0) says that connecting a vertical line to an empty wire `eats' the entire controlled gate, deleting the vertical line and anything else it was connected to. The logic $\verb|if false do | U$ is obtained by considering the upper (bundle of) wire to be the control, while the equivalent logic $\verb|if | \ket{u_j} \verb| pass | \forall j$ is obtained by considering the lower (bundle of) wire to be the control. Both are equivalent to simply doing nothing.

\begin{equation*}
\textbf{(d)} \hspace{-1mm}\begin{array}{l}
\text{distributivity of controls} \\
\text{  over gate composition:}
\end{array}
\text{  }
\begin{quantikz}[column sep=0.1cm,row sep=0.25cm]
    & \ctrl{1} & \ctrl{1} & \\
    & \gate{U} & \gate{V} &
\end{quantikz}
=
\begin{quantikz}[column sep=0.1cm,row sep=0.3cm]
    & \ctrl{1} & \\
    & \gate{U\circ V} &
\end{quantikz}
\hspace{8mm}
\begin{array}{l}
\verb|if | \ket{1}_A \\
\verb| do | U_B \\
\verb|if | \ket{1}_A \\
\verb| do | V_B
\end{array}
= \begin{array}{l}
\verb|if | \ket{1}_A \\
\verb| do | U_B \\
\verb| do | V_B
\end{array}
\end{equation*}

Rule (d) says that applying a series of controlled unitaries that are all controlled by the same symbol (shown: \textbullet) is equivalent to applying a single operation controlled by that symbol, which implements those unitaries in series. That is, symbolically, $CU \circ CV = C(U\circ V)$. We emphasize here that $U\circ V$ denotes applying $U$ followed by $V$. Since circuit diagrams are read left-to-right, we prefer to keep this convention throughout all of the graphical notation. In terms of a matrix multiplication, $U\circ V$ denotes $VU$.

\begin{equation*}
\text{\textbf{(e)} expansion:  }
\begin{quantikz}[column sep=0.1cm,row sep=0.3cm]
    & & \\
    & \gate{U} &
\end{quantikz}
=
\begin{quantikz}[column sep=0.1cm,row sep=0.25cm]
    & \octrl{1} & \ctrl{1} & \\
    & \gate{U} & \gate{U} &
\end{quantikz}
\hspace{8mm}
\verb| do | U_B
= \begin{array}{l}
\verb|if | \ket{0}_A \\
\verb| do | U_B \\
\verb|elif | \ket{1}_A \\
\verb| do | U_B
\end{array}
\end{equation*}

Rule (e) says that an unconditional operation is equivalent to a series of controlled unitaries, controlled by a complete basis of the control wires.

\begin{equation*}
\text{\textbf{(f)} forced commutation:  }
\begin{quantikz}[column sep=0.1cm,row sep=0.25cm]
    & \octrl{1} & \ctrl{1} & \\
    & \gate{U} & \gate{V} &
\end{quantikz}
=
\begin{quantikz}[column sep=0.1cm,row sep=0.25cm]
    & \ctrl{1} & \octrl{1} & \\
    & \gate{V} & \gate{U} &
\end{quantikz}
\hspace{8mm}
\begin{array}{l}
\verb|if | \ket{0}_A \\
\verb| do | U_B \\
\verb|elif | \ket{1}_A \\
\verb| do | V_B
\end{array}
= \begin{array}{l}
\verb|if | \ket{1}_A \\
\verb| do | V_B \\
\verb|elif | \ket{0}_A \\
\verb| do | U_B
\end{array}
\end{equation*}

Rule (f) says that controlled operations on the same set of wires commute if they have orthogonal (flipped) symbols on at least one wire, regardless of whether $U$ and $V$ commute or not.

\begin{equation*}
\text{\textbf{(g)} generalized commutation:  }
\text{   if} \quad
\begin{quantikz}[column sep=0.1cm,row sep=0.25cm]
    & \gate{U} & \gate{V} &
\end{quantikz}
=
\begin{quantikz}[column sep=0.1cm,row sep=0.25cm]
    & \gate{V'} & \gate{U} &
\end{quantikz}
\quad \text{then} \quad
\begin{quantikz}[column sep=0.1cm,row sep=0.25cm]
    & & \ctrl{1} & \\
    & \gate{U} & \gate{V} &
\end{quantikz}
=
\begin{quantikz}[column sep=0.1cm,row sep=0.25cm]
    & \ctrl{1} & & \\
    & \gate{V'} & \gate{U} &
\end{quantikz}
\end{equation*}

Rule (g) is derived from rules (d-f) and says how gate commutation relations are preserved by vertical lines in certain situations.

\begin{equation*}
\text{\textbf{(s)} SWAP:  }
\begin{quantikz}
& \permute{2,1} & \ghost{H} \\
& & \ghost{H}
\end{quantikz}
=
\begin{quantikz}
& \ctrl{1} & \targ{} & \ctrl{1} & \\
& \targ{} & \ctrl{-1} & \targ{} &
\end{quantikz}
\end{equation*}

Rule (s) defines the SWAP gate. SWAP takes the same form for almost any choice of symbols used to define it, shown in Appendix \ref{appendix:soundness}.

\begin{equation*}
\text{\textbf{(t)} twist:  }
\begin{quantikz}
& \permute{2,1} & \gate[2]{U_{AB}} & \\
& & \ghost{U_{AB}} &
\end{quantikz}
=
\begin{quantikz}
& \gate[2]{U_{BA}} & \permute{2,1} & \\
& \ghost{U_{BA}} & &
\end{quantikz}
\end{equation*}

Rule (t) says that any gate moving past a SWAP gets turned upside-down, i.e. has the roles of its qubits $A,B$ exchanged.

\subsection{Rules concerning numbers}

The next two rules concern the behavior of numbers. In short, numbers add horizontally and multiply vertically.

\begin{equation*}
\text{\textbf{(n+)} number addition:  }
\begin{quantikz}
& \gate{U} & \wire[l][1]["\alpha"{above,pos=0.2,scale=1.3}]{a} & & \gate{U} & \wire[l][1]["\beta"{above,pos=0.2,scale=1.3}]{a}
\end{quantikz}
=
\begin{quantikz}
& \gate{U} & & \wire[l][1]["\alpha+\beta"{above,pos=0.2,scale=1.3}]{a}
\end{quantikz}
\end{equation*}

Rule (n+) says that powers of the same gate in series add.

\begin{equation*}
\text{\textbf{(n*)} number multiplication:  }
\begin{quantikz}
& \gate{U}\vqw{1} & \wire[l][1]["\alpha"{above,pos=0.2,scale=1.3}]{a} \\
& \gate{V} & \wire[l][1]["\beta"{above,pos=0.2,scale=1.3}]{a}
\end{quantikz}
=
\begin{quantikz}
& \gate{U}\vqw{1} & \wire[l][1]["\alpha\beta"{above,pos=0.2,scale=1.3}]{a} \\
& \gate{V} &
\end{quantikz}
=
\begin{quantikz}
& \gate{U}\vqw{1} & \\
& \gate{V} & \wire[l][1]["\alpha\beta"{above,pos=0.2,scale=1.3}]{a} 
\end{quantikz}
\end{equation*}

Rule (n*) says that powers of gates connected along vertical lines multiply together. By convention we will only write one number per gate, to its right, and usually at the top.

\subsection{Rules concerning measurement}

These final two rules concern the notion of circuit equivalence for circuits involving measurements. Neither are original to this work; both are common principles in quantum computing \cite{nielsen_quantum_2010}. We take a rather conservative approach here and do not discuss discarding qubits at all, instead introducing only measurement, and with just enough structure to prove some communication primitives.

\begin{equation*}
\text{\textbf{(pdm)} principle of deferred measurement:  }
\begin{quantikz}
    & \meter{}\vcw{1} \\
    & \gate{U} &
\end{quantikz}
=
\begin{quantikz}
    & \ctrl{1} & \meter{} \\
    & \gate{U} & 
\end{quantikz}
\end{equation*}

Rule (pdm) is the principle of deferred measurement, and says that mid-circuit measurements can be delayed until the end of the circuit. Any classically controlled unitaries depending on the measurement result become quantum controlled unitaries. Measurements are assumed to be in the computational basis, and any resulting classically controlled gates are conditional on a result of 1, unless otherwise specified. 

\begin{equation*}
\text{\textbf{(mb)} change of measurement basis:  }
\begin{quantikz}
    & \meter{U}
\end{quantikz}
=
\begin{quantikz}
    & \gate{U^\dagger} & \meter{}
\end{quantikz}
\end{equation*}

Rule (mb) says that measuring in the basis $\{\ket{u_j}\}$ is equivalent to applying the unitary operation $U^\dagger$ followed by measuring in the computational basis, where $\ket{u_j}$ are the columns of $U$. Physically it is essentially just the statement that rotating your measurement apparatus clockwise is equivalent to rotating your state counterclockwise, since their relative orientation is what matters. We will denote measurements in the $\{\ket{u_j}\}$ basis by labelling the measurement box with $U$.

\section{Conclusion}
We presented a formulation of quantum circuit diagrams based on the exponential map which provides a new way to calculate graphically with circuits. We gave a sound list of rewrite rules for this formulation and demonstrate a variety of example calculations in the appendices. It remains to be shown what class of circuits this ruleset is complete for; we suspect it needs (at least) an Euler decomposition rule to be complete for all circuits. The primary advantage of this formulation compared to other circuit graphical languages is that it handles multi-controlled operations more easily, something that has caused some difficulty for graphical languages in the past.\footnote{For example, in ZX-calculus multi-controlled gates like Toffoli typically don't have an efficient representation and require a large number of spiders to represent. This was partly the motivation for the development of the closely related ZH-calculus.\cite{kissinger_conversation}} It also allows one to diagonalize circuits graphically, something we haven't seen in any previous work. As a final advantage, Hamiltonians are more explicit in this formalism since EQCDs are defined using the exponential map, which makes it easy to read off from a diagram what mathematical operation would need to be physically implemented in hardware to implement a given gate or subcircuit.

\section*{Acknowledgements}
The author would like to thank A. Clément, P. Selinger, N. Delorme, S. Wesley, S. Perdrix, and L. Laneve for helpful discussions.

\bibliography{main}{}
\bibliographystyle{unsrt}

\newpage
\appendix

\addtocontents{toc}{\protect\setcounter{tocdepth}{3}}
\tableofcontents

\newpage

\section{Example calculations}\label{appendix:examplecalcs}

In this section we showcase a variety of example calculations to demonstrate usage of EQCDs.

\subsection{Basic identities}

The following are a collection of simple identities that are useful shorthand in other graphical calculations, as well as miscellaneous calculations that don't fit into another category.

\subsubsection{Rule (i'): CNOT, CZ, Toffoli, CCZ, and arbitrary simple conditionals are involutions}

Applying rule (d) to either wire followed by rule (i) shows that CNOT$^2 = \eye$.

\begin{equation*}
\begin{quantikz}
& \phase{2} & \\
& \trg{-1} & 
\end{quantikz}
\stackrel{\text{(n+)}}{=}
\begin{quantikz}
& \ctrl{1} & \ctrl{1} & \\
& \targ{} & \targ{} &
\end{quantikz}
\stackrel{\text{(d)}}{=}
\begin{quantikz}
& \ctrl{1} & \\
& \gate{\bm{\oplus} \bm{\oplus}} & 
\end{quantikz}
\stackrel{\text{(i)}}{=}
\begin{quantikz}
& \ctrl{1} & \\
& \gate{\eye} & 
\end{quantikz}
\stackrel{\text{(C0)}}{=}
\begin{quantikz}
& \ghost{H} & \\
& & 
\end{quantikz} \quad \textbf{(i')}
\end{equation*}

The same construction for CZ, Toffoli, and CCZ demonstrates that all three are also involutions. The same construction also applies to any simple conditional; this proof therefore says that all simple conditionals are involutions. We'll call this derived rule (i').

\subsubsection{CNOTs and Paulis}
It's well-known that $Z$ commutes through the control of a CNOT, and $X$ commutes through the target. Here we see it's simply because the involved symbols are the same, and hence by rule (g), they commute.

\begin{equation*}
\begin{aligned}
\begin{quantikz}[column sep=0.1cm,row sep=0.15cm]
    & \gate{Z} & \ctrl{1} & \\
    & & \targ{} &
\end{quantikz}
:=
\begin{quantikz}[column sep=0.1cm,row sep=0.25cm]
    & \phase{} & \ctrl{1} & \\
    & & \targ{} &
\end{quantikz}
&\stackrel{\text{(g),(c)}}{=}
\begin{quantikz}[column sep=0.1cm,row sep=0.25cm]
    & \ctrl{1} & \phase{} & \\
    & \targ{} & &
\end{quantikz}
=:
\begin{quantikz}[column sep=0.1cm,row sep=0.15cm]
    & \ctrl{1} & \gate{Z} & \\
    & \targ{} & &
\end{quantikz} \\
\begin{quantikz}[column sep=0.1cm,row sep=0.15cm]
    & & \ctrl{1} & \\
    & \gate{X} & \targ{} &
\end{quantikz}
:=
\begin{quantikz}[column sep=0.1cm,row sep=0.25cm]
    & & \ctrl{1} & \\
    & \targ{} & \targ{} &
\end{quantikz}
&\stackrel{\text{(g),(c)}}{=}
\begin{quantikz}[column sep=0.1cm,row sep=0.25cm]
    & \ctrl{1} & & \\
    & \targ{} & \targ{} &
\end{quantikz}
=:
\begin{quantikz}[column sep=0.1cm,row sep=0.15cm]
    & \ctrl{1} & & \\
    & \targ{} & \gate{X} &
\end{quantikz}
\end{aligned}
\end{equation*}

\subsubsection{Rule (p): Parity}
Applying the expansion rule to the parity operator $Z\otimes Z$ gives a useful identity.

\begin{equation*}
\begin{quantikz}
& \control{} & \\
& \control{} & 
\end{quantikz}
\stackrel{\text{(e)}}{=}
\begin{quantikz}
& \octrl{1}\gategroup[2,steps=2,style={dashed,rounded
corners, inner
xsep=2pt, inner
ysep=2pt}]{$1\otimes Z$} & \ctrl{1} & & \ctrl{1}\gategroup[2,steps=2,style={dashed,rounded
corners, inner
xsep=2pt, inner
ysep=2pt}]{$Z\otimes 1$} & \ctrl{1} & \qw \\
& \control{} & \control{} & & \ocontrol{} & \control{} & \qw 
\end{quantikz}
\stackrel{\text{(f)}}{=}
\begin{quantikz}
& \octrl{1} & & \ctrl{1}\gategroup[2,steps=2,style={dashed,rounded
corners, inner
xsep=2pt, inner
ysep=2pt}]{cancel} & \ctrl{1} & & \ctrl{1} & \\
& \control{} & & \control{} & \control{} & & \ocontrol{} &
\end{quantikz}
\stackrel{\text{(i')}}{=}
\begin{quantikz}[column sep=0.25cm]
& \octrl{1} & \ctrl{1} & \qw \\
& \control{} & \ocontrol{} & \qw 
\end{quantikz}
\end{equation*}

It says that two unconnected black dots (or two unconnected white dots, by the same reasonaing) can be replaced by a square whose opposite corners are colored white and black respectively. Note that the orientation doesn't matter, since these two conditionals commute by rule (f). We will call this derived rule (p). 

\begin{equation*}
\begin{quantikz}
& \control{} & \\
& \control{} & 
\end{quantikz}
=
\begin{quantikz}[column sep=0.25cm]
& \octrl{1} & \ctrl{1} & \qw \\
& \control{} & \ocontrol{} & \qw 
\end{quantikz}
=
\begin{quantikz}[column sep=0.25cm]
& \ctrl{1} & \octrl{1} & \qw \\
& \ocontrol{} & \control{} & \qw 
\end{quantikz}
=
\begin{quantikz}
& \ocontrol{} & \\
& \ocontrol{} & 
\end{quantikz} \quad \textbf{(p)}
\end{equation*}

The same proof extends to the parity operator on $n$ qubits $\bigotimes^n Z$. It takes the form of a series of the $2^{n-1}$ odd simple conditionals; those simple conditionals with a $\bm{\bullet}/\bm{\circ}$ on every wire and an odd number of $\bm{\bullet}$ symbols. For example, on three qubits, we get the Z-type conditionals that project onto 001, 010, 100, and 111.

\begin{equation*}
\begin{quantikz}
& \phase{} & \\
& \phase{} & \\
& \phase{} &
\end{quantikz}
=
\begin{quantikz}
& \octrl{2} & \octrl{2} & \ctrl{2} & \ctrl{2} & \\
& \ophase{} & \phase{} & \ophase{} & \phase{} & \\
& \phase{} & \ophase{} & \ophase{} & \phase{} &
\end{quantikz}
\end{equation*}

\subsubsection{CNOT and Paulis, part 2: error propagation}
When a $Z$ moves past the target of a CNOT, it is copied to the control wire. Likewise for an $X$ moving past the control wire of a CNOT. This is most well-known in the context of error propagation in stabilizer codes, where the Pauli gate is thought of as phase- or bit-flip error respectively that is spread through the circuit by CNOTs if not corrected. These identities are consequences of the parity rule (p) above.

\begin{equation*}
\begin{quantikz}[column sep=0.1cm,row sep=0.15cm]
    & & \ctrl{1} & \\
    & \gate{Z} & \targ{} &
\end{quantikz}
:=
\begin{quantikz}[column sep=0.1cm,row sep=0.25cm]
    & & \ctrl{1} & \\
    & \phase{} & \targ{} &
\end{quantikz}
\stackrel{\text{(e)}}{=}
\begin{quantikz}[column sep=0.1cm,row sep=0.25cm]
    & \octrl{1} & \ctrl{1} & \ctrl{1} & \\
    & \phase{} & \phase{} & \targ{} &
\end{quantikz}
\stackrel{\text{(f)}}{=}
\begin{quantikz}[column sep=0.1cm,row sep=0.25cm]
    & \ctrl{1} & \ctrl{1} & \octrl{1} & \\
    & \phase{} & \targ{} & \phase{} &
\end{quantikz}
\stackrel{\text{(g),(ac)}}{=}
\begin{quantikz}[column sep=0.1cm,row sep=0.25cm]
    & \ctrl{1} & \ctrl{1} & \octrl{1} & \\
    & \targ{} & \ophase{} & \phase{} &
\end{quantikz}
\stackrel{\text{(p)}}{=}
\begin{quantikz}[column sep=0.1cm,row sep=0.25cm]
    & \ctrl{1} & \phase{} & \\
    & \targ{} & \phase{} & 
\end{quantikz}
=:
\begin{quantikz}[column sep=0.1cm,row sep=0.15cm]
    & \ctrl{1} & \gate{Z} & \\
    & \targ{} & \gate{Z} &
\end{quantikz}
\end{equation*}

The $X$ version of this identity is just the flopped $Z$ version, but it can of course be proven in the same way, using a flopped rule (p).

\begin{equation*}
\begin{quantikz}[column sep=0.1cm,row sep=0.15cm]
    & \gate{X} & \ctrl{1} & \\
    & & \targ{} &
\end{quantikz}
:=
\begin{quantikz}[column sep=0.1cm,row sep=0.25cm]
    & \targ{} & \ctrl{1} & \\
    & & \targ{} &
\end{quantikz}
\stackrel{\text{(e)}}{=}
\begin{quantikz}[column sep=0.1cm,row sep=0.25cm]
    & \targ{}\vqw{1} & \targ{}\vqw{1} & \ctrl{1} & \\
    & \targm{} & \targ{} & \targ{} &
\end{quantikz}
\stackrel{\text{(f)}}{=}
\begin{quantikz}[column sep=0.1cm,row sep=0.25cm]
    & \targ{}\vqw{1} & \ctrl{1} & \targ{}\vqw{1} & \\
    & \targ{} & \targ{} & \targm{} &
\end{quantikz}
\stackrel{\text{(g),(ac)}}{=}
\begin{quantikz}[column sep=0.1cm,row sep=0.25cm]
    & \ctrl{1} & \targm{}\vqw{1} & \targ{}\vqw{1} & \\
    & \targ{} & \targ{} & \targm{} &
\end{quantikz}
\stackrel{\text{(p)}}{=}
\begin{quantikz}[column sep=0.1cm,row sep=0.25cm]
    & \ctrl{1} & \targ{} & \\
    & \targ{} & \targ{} &
\end{quantikz}
=:
\begin{quantikz}[column sep=0.1cm,row sep=0.15cm]
    & \ctrl{1} & \gate{X} & \\
    & \targ{} & \gate{X} &
\end{quantikz}
\end{equation*}

What is less often discussed is that another solution is obtained by pushing all of the change onto the CNOT rather than the Pauli. The proof is actually simpler, and doing it for $X$ gives a fairly intuitive identity saying that negating the control before applying CNOT is equivalent to controlling the CNOT on 0, followed by negating the control.

\begin{equation*}
\begin{aligned}
\begin{quantikz}[column sep=0.1cm,row sep=0.15cm]
    & \gate{X} & \ctrl{1} & \\
    & & \targ{} &
\end{quantikz}
:=
\begin{quantikz}[column sep=0.1cm,row sep=0.25cm]
    & \targ{} & \ctrl{1} & \\
    & & \targ{} &
\end{quantikz}
&\stackrel{\text{(g),(ac)}}{=}
\begin{quantikz}[column sep=0.1cm,row sep=0.25cm]
    & \octrl{1} & \targ{} & \\
    & \targ{} & &
\end{quantikz}
=:
\begin{quantikz}[column sep=0.1cm,row sep=0.15cm]
    & \octrl{1} & \gate{X} & \\
    & \targ{} & &
\end{quantikz} \\
\begin{array}{l}
\verb|do | X_A \\
\verb|if |\ket{1}_A \\
\verb|  do | X_B
\end{array}
&=
\begin{array}{l}
\verb|if |\ket{0}_A \\
\verb|  do | X_B \\
\verb|do | X_A
\end{array}
\end{aligned}
\end{equation*}

What is never discussed is the equivalent identity for $Z$, since normal circuits lack the notation for it. 

\begin{equation*}
\begin{quantikz}[column sep=0.1cm,row sep=0.15cm]
    & & \ctrl{1} & \\
    & \gate{Z} & \targ{} &
\end{quantikz}
:=
\begin{quantikz}[column sep=0.1cm,row sep=0.25cm]
    & & \ctrl{1} & \\
    & \phase{} & \targ{} &
\end{quantikz}
\stackrel{\text{(g),(ac)}}{=}
\begin{quantikz}[column sep=0.1cm,row sep=0.25cm]
    & \ctrl{1} & & \\
    & \targm{} & \phase{} &
\end{quantikz}
=:
\begin{quantikz}[column sep=0.1cm,row sep=0.15cm]
    & \ctrl{1} & & \\
    & \targm{} & \gate{Z} &
\end{quantikz}
\end{equation*}

This strange-looking identity is in fact no stranger than the previous one, since it has only been flopped. Reading it as is asks the reader to be able to read CNOT with its control and target roles reversed. The identity says that flipping the `control bit' $(+\leftrightarrow-)$ before applying CNOT is equivalent to controlling the CNOT on a +, followed by flipping the `control bit' (quotes added to emphasize that the control bit is now being thought of as being on the bottom wire). In other words, the two pieces of semi-classical logic being equated here are

\begin{equation*}
\begin{array}{l}
\verb|do | Z_B \\
\verb|if |\ket{-}_B \\
\verb|  do | Z_A
\end{array}
=
\begin{array}{l}
\verb|if |\ket{+}_B \\
\verb|  do | Z_A \\
\verb|do | Z_B
\end{array}
\end{equation*}

\subsubsection{Rule (5CX): Partially overlapping CNOTs}
The following identity concerning the commutation relation between partially overlapping CNOTs is well-known, appearing at least as far back as 2011 \cite{garcia-escartin_equivalent_2011}, and probably long before, because CNOT is the most common 2-qubit gate and is often the target of optimization routines. 

\begin{equation*}
\begin{aligned}
\begin{quantikz}[column sep=0.2cm]
& \ctrl{1} & \qw & \qw \\
& \targ{} & \ctrl{1} & \qw \\
& \qw & \targ{} & \qw
\end{quantikz}
\stackrel{\text{(e)}}{=}
\begin{quantikz}
& \ctrl{1}\gategroup[3,steps=2,style={dashed,rounded corners, inner xsep=2pt, inner ysep=2pt}]{CNOT$_{12}$} & \ctrl{1} & & & \ctrl{1}\gategroup[3,steps=2,style={dashed,rounded corners, inner xsep=2pt, inner ysep=2pt}]{CNOT$_{23}$} & \octrl{1} & \\
& \targ{}\vqw{1} & \targ{}\vqw{1} & & & \ctrl{1} & \ctrl{1} & \\
& \targm{} & \targ{} & & & \targ{} & \targ{} &
\end{quantikz}
\stackrel{\text{(f)}}{=}
\begin{quantikz}
& \octrl{1} & \ctrl{1} & \ctrl{1} & \ctrl{1} & \\
& \ctrl{1} & \targ{}\vqw{1} & \ctrl{1} & \targ{}\vqw{1} & \\
& \targ{} & \targ{} & \targ{} & \targm{} &
\end{quantikz}
\stackrel{\text{(d)}}{=}
\begin{quantikz}
& \octrl{1} & \ctrl{1} & \ctrl{1} & \\
& \ctrl{1} & \gate{\bm{\oplus} \bm{\bullet}}\vqw{1} & \targ{}\vqw{1} & \\
& \targ{} & \targ{} & \targm{} &
\end{quantikz}
\stackrel{\text{(ac)}}{=}
\begin{quantikz}
& \octrl{1} & \ctrl{1} & \ctrl{1} & \\
& \ctrl{1} & \gate{\bm{\circ} \bm{\oplus}}\vqw{1} & \targ{}\vqw{1} & \\
& \targ{} & \targ{} & \targm{} &
\end{quantikz} 
\\
\stackrel{\text{(d)}}{=}
\begin{quantikz}
& \octrl{1} & \ctrl{1} & & \ctrl{1}\gategroup[3,steps=2,style={dashed,rounded corners, inner xsep=2pt, inner ysep=2pt}]{CNOT$_{12}$} & \ctrl{1} & \\
& \ctrl{1} & \ocontrol{}\vqw{1} & & \targ{}\vqw{1} & \targ{}\vqw{1} & \\
& \targ{} & \targ{} & & \targ{} & \targm{} &
\end{quantikz}
\stackrel{\text{(e)}}{=}
\begin{quantikz}
& \octrl{1}\gategroup[2,steps=2,style={dashed,rounded
corners, inner xsep=0pt, inner
ysep=1pt}]{$Z\otimes Z$} & \ctrl{1} & & \ctrl{1} & \\
& \ctrl{1} & \ocontrol{}\vqw{1} & & \targ{} & \\
& \targ{} & \targ{} & & &
\end{quantikz}
\stackrel{\text{(d),(p)}}{=}
\begin{quantikz}[row sep=0.1cm]
& \gate[2]{\verticaltext{\textbullet\ \textbullet}} & \ctrl{1} & \\
& \ghost{Z} & \targ{} & \\
& \targ{}\vqw{-1} & &
\end{quantikz}
\stackrel{\text{(d)}}{=}
\begin{quantikz}
& \ctrl{2} & \qw & \ctrl{1} & \\
& & \ctrl{1} & \targ{} & \\
& \targ{} & \targ{} & & 
\end{quantikz}
\end{aligned}
\end{equation*}

Rearranging, we get another useful identity. In the circuit axiomitization approach taken by Clément et al \cite{clement_minimal_2024} this identity is considered an axiom; we follow their most recent notation and call this derived rule (5CX), for `five CNOT'.\footnote{To be precise, our rule (5CX) is the flopped version of the one shown in \cite{clement_minimal_2024}, but we consider all rules related to each other by flipflop symmetries to be the same!}

\begin{equation*}
\begin{quantikz}[column sep=0.2cm]
& & \ctrl{1} & & \\
& \ctrl{1} & \targ{} & \ctrl{1} & \\
& \targ{} & & \targ{} &
\end{quantikz}
=
\begin{quantikz}[column sep=0.2cm]
& \ctrl{1} & \ctrl{2} & \\
& \targ{} & & \\
& & \targ{} &
\end{quantikz}  \quad \textbf{(5CX)}
\end{equation*}

\subsubsection{Rule (t'): SWAP properties}
SWAP has some standard properties which we prove explicitly here: it is an involution,

\begin{equation*}
\begin{quantikz}
& \permute{2,1} & \permute{2,1} & \ghost{H} \\
& & & \ghost{H}
\end{quantikz}
\stackrel{\text{(s)}}{=}
\begin{quantikz}
& \ctrl{1} & \targ{} & \ctrl{1} & \ctrl{1} & \targ{} & \ctrl{1} & \\
& \targ{} & \ctrl{-1} & \targ{} & \targ{} & \ctrl{-1} & \targ{} &
\end{quantikz}
\stackrel{\text{(i')}}{=}
\begin{quantikz}
& \ctrl{1} & \targ{} & \targ{} & \ctrl{1} & \\
& \targ{} & \ctrl{-1} & \ctrl{-1} & \targ{} &
\end{quantikz}
\stackrel{\text{(i')}}{=}
\begin{quantikz}
& \ctrl{1} & \ctrl{1} & \\
& \targ{} & \targ{} &
\end{quantikz}
\stackrel{\text{(i')}}{=}
\begin{quantikz}
& \ghost{H} & \\
& &
\end{quantikz}
\end{equation*}

...and anything sandwiched by SWAPs gets turned upside-down.

\begin{equation*}
\begin{quantikz}
& \permute{2,1} & \gate[2]{U_{AB}} & \permute{2,1} & \\
& & \ghost{U_{AB}} & &
\end{quantikz}
\stackrel{\text{(t)}}{=}
\begin{quantikz}
& \gate[2]{U_{BA}} & \permute{2,1} & \permute{2,1} & \\
& \ghost{U_{BA}} & & &
\end{quantikz}
=
\begin{quantikz}
& \gate[2]{U_{BA}} & \\
& \ghost{U_{BA}} & 
\end{quantikz}
\end{equation*}

These rules are simple and common enough that we will collectively refer to them as derived rule (t').

\subsubsection{QFT$_2$ squared}
The quantum fourier transform is a convenient testbed for circuit calculations because its standard circuit implementation involves all of our pieces: symbols,\footnote{Dots explicitly, circles implicitly through the presence of $H$ gates.} vertical lines, and numbers. The \textit{square} of the quantum fourier transform is particularly nice to work with since it's easy to algebraically convince oneself that it implements the integer negation function $\ket{k}\mapsto\ket{-k}$ in the cyclic group $\mathbf{Z}_{2^n}$ on the computational basis.

\begin{equation*}
(\text{QFT}_n)^2 : \ket{k} \mapsto
\begin{cases}
    \ket{0} & \text{if } k=0 \\
    \ket{2^n-k} & \text{if } k\neq 0
\end{cases}
\end{equation*}

Since this function is implementable with a classical logic circuit, it follows that it should always be possible to reduce the involved quantum circuit to an embedded classical circuit, i.e. a quantum circuit that only involves things like $X$, CNOT, and Toffoli, but not things like $Z$, $H$, or CZ.

We note again that the gate $R_k$ appearing in the standard circuit implementation of the QFT is, in this notation, a $\bm{\bullet}$ with power $1/2^{k-1}$.

\begin{equation*}
\begin{aligned}
&\begin{quantikz}
& \gate[2]{\text{QFT}_2} & \gate[2]{\text{QFT}_2} & \qw \\
& & & \qw
\end{quantikz}
=
\begin{quantikz}
& \gate{H}\gategroup[2,steps=4,style={dashed,rounded
corners}]{QFT$_2$} & \phase{1/2} & & \permute{2,1} & & \gate{H}\gategroup[2,steps=4,style={dashed,rounded
corners}]{QFT$_2$} & \phase{1/2} & & \permute{2,1} & \\
& & \ctrl{-1} & \gate{H} & & & & \ctrl{-1} & \gate{H} & &
\end{quantikz}
\\
&\stackrel{\text{(t')}}{=}
\begin{quantikz}
& \gate{H} & \phase{1/2} & & & \phase{1/2} & \gate{H} & \\
& & \ctrl{-1} & \gate{H} & \gate{H} & \ctrl{-1} & &
\end{quantikz}
\stackrel{\text{(i)}}{=}
\begin{quantikz}
& \gate{H} & \phase{1/2} & \phase{1/2} & \gate{H} & \\
& & \ctrl{-1} & \ctrl{-1} & & 
\end{quantikz}
\stackrel{\text{(n+)}}{=}
\begin{quantikz}
& \gate{H} & \phase{} & \gate{H} & \\
& & \ctrl{-1} & & 
\end{quantikz}
\stackrel{\text{(A.2.1)}}{=}
\begin{quantikz}
& \targ{} & \\
& \ctrl{-1} &
\end{quantikz}
\end{aligned}
\end{equation*}

\subsubsection{QFT$_3$ squared}

At once step in this calculation we identify the (QFT$_2$)$^2$ as a subcircuit and replace it with a CNOT.

\begin{equation*}
\begin{aligned}
&\begin{quantikz}
& \gate[3]{\text{QFT}_3} & \gate[3]{\text{QFT}_3} & \qw \\
& & & \qw \\
& & & \qw
\end{quantikz}
=
\begin{quantikz}
& \gate{H}\gategroup[3,steps=7,style={dashed,rounded
corners}]{QFT$_3$} & \phase{1/2} & \phase{1/4} & & & & \permute{3,2,1} & & \gate{H}\gategroup[3,steps=7,style={dashed,rounded
corners}]{QFT$_3$} & \phase{1/2} & \phase{1/4} & & & & \permute{3,2,1} & \\
& & \ctrl{-1} & & \gate{H} & \phase{1/2} & & & & & \ctrl{-1} & & \gate{H} & \phase{1/2} & & & \\
& & & \ctrl{-2} & & \ctrl{-1} & \gate{H} & & & & & \ctrl{-2} & & \ctrl{-1} & \gate{H} & & 
\end{quantikz}
\\
&\stackrel{\text{(t')}}{=}
\begin{quantikz}
& \gate{H} & \phase{1/2} & \phase{1/4} & & & & & & \phase{1/4} & & \phase{1/2} & \gate{H} & \\
& & \ctrl{-1} & & \gate{H} & \phase{1/2} & & & \phase{1/2} & & \gate{H} & \ctrl{-1} & & \\
& & & \ctrl{-2} & & \ctrl{-1} & \gate{H} & \gate{H} & \ctrl{-1} & \ctrl{-2} & & & & 
\end{quantikz}
\\
&\stackrel{\text{(A.1.7)}}{=}
\begin{quantikz}
& \gate{H} & \phase{1/2} & \phase{1/4} & & \phase{1/4} & \phase{1/2} & \gate{H} & \\
& & \ctrl{-1} & & \targ{} & & \ctrl{-1} & & \\
& & & \ctrl{-2} & \ctrl{-1} & \ctrl{-2} & & &
\end{quantikz}
\stackrel{\text{(g),(n+)}}{=}
\begin{quantikz}
& \gate{H} & \phase{1/2} & & \phase{1/2} & \phase{1/2} & \gate{H} & \\
& & \ctrl{-1} & \targ{} & & \ctrl{-1} & & \\
& & & \ctrl{-1} & \ctrl{-2} & & &
\end{quantikz}
\\
&\stackrel{\text{(d),(e)}}{=}
\begin{quantikz}
& \gate{H} & \phase{1/2} & \phase{1/2} & \phase{} & \phase{1/2} & \phase{1/2} & \gate{H} & \\
& \targ{} & \ctrl{-1} & & \phase{} & & \ctrl{-1} & & \\
& \ctrl{-1} & & \ctrl{-2} & \ctrl{-2} & \ctrl{-2} & & & 
\end{quantikz}
\stackrel{\text{(n+),(e),(i')}}{=}
\begin{quantikz}
& \gate{H} & \phase{} & \phase{} & \gate{H} & \\
& \targ{} & \ctrl{-1} & \octrl{-1} & & \\
& \ctrl{-1} & & \ctrl{-1} &
\end{quantikz}
\stackrel{\text{(H)}}{=}
\begin{quantikz}
& & \targ{} & \targ{} & \\
& \targ{} & \ctrl{-1} & \octrl{-1} & \\
& \ctrl{-1} & & \ctrl{-1} & 
\end{quantikz}
\stackrel{\text{(g),(A.1.4)}}{=}
\begin{quantikz}
& \targ{} & & \targ{} & \\
& \ctrl{-1} & \targ{} & \ctrl{-1} & \\
& \ctrl{-1} & \ctrl{-1} & & 
\end{quantikz}
\end{aligned}
\end{equation*}

\subsubsection{$H$ gadget}

The following $H$ gadget presented in \cite{heyfron_efficient_2018} uses an fSWAP, a measurement in the $H$ basis ($\ket{\pm}$), and a classically controlled $X$.

\begin{equation*}
\begin{quantikz}
\lstick{$\ket{\psi}$} & \gate{H} & \rstick{$H\ket{\psi}$}
\end{quantikz}
\stackrel{\text{replace with}}{\rightarrow}
\begin{quantikz}
\lstick{$\ket{+}$} & \gate[2]{\text{fSWAP}} & \meter{H}\vcw{1} \\
\lstick{$\ket{\psi}$} & \ghost{\text{fSWAP}} & \gate{X} & \rstick{$H\ket{\psi}$}
\end{quantikz}
\end{equation*}

If you did not know what this gadget did (i.e. someone gave you the circuit on the right without labelling its output), one way to discern its functionality is to reduce it to a known circuit. Here we show what that looks like by reducing it with EQCDs to just an $H$ gate, which clearly has the action $\ket{\psi} \mapsto H\ket{\psi}$. At every step of the calculation we give the rule being applied.

\begin{equation*}
\begin{aligned}
&\begin{quantikz}
\lstick{$\ket{+}$} & \gate[2]{\text{fSWAP}} & \meter{H}\vcw{1} \\
\lstick{$\ket{\psi}$} & \ghost{\text{fSWAP}} & \gate{X} & 
\end{quantikz}
:=
\begin{quantikz}
\lstick{$\ket{+}$} & \permute{2,1} & \ctrl{1} & \meter{H}\vcw{1} \\
\lstick{$\ket{\psi}$} & & \phase{} & \gate{X} &
\end{quantikz}
\stackrel{\text{(pdm)}}{=}
\begin{quantikz}
\lstick{$\ket{+}$} & \permute{2,1} & \ctrl{1} & \trg{1} & \meter{H} \\
\lstick{$\ket{\psi}$} & & \phase{} & \targ{} & &
\end{quantikz}
\\
&\stackrel{\text{(mb)}}{=}
\begin{quantikz}
\lstick{$\ket{+}$} & \permute{2,1} & \ctrl{1} & \trg{1} & \gate{H} & \meter{} \\
\lstick{$\ket{\psi}$} & & \phase{} & \targ{} & & &
\end{quantikz}
\stackrel{\text{(g),(H)}}{=}
\begin{quantikz}
\lstick{$\ket{+}$} & \permute{2,1} & \ctrl{1} & \gate{H} & \ctrl{1} & \meter{} \\
\lstick{$\ket{\psi}$} & & \phase{} & & \targ{} & &
\end{quantikz}
\\
&\stackrel{\text{(g),(H)}}{=}
\begin{quantikz}
\lstick{$\ket{+}$} & \permute{2,1} & \gate{H} & \trg{1} & \ctrl{1} & \meter{} \\
\lstick{$\ket{\psi}$} & & & \phase{} & \targ{} & &
\end{quantikz}
\stackrel{\text{(t)}}{=}
\begin{quantikz}
\lstick{$\ket{+}$} & & \permute{2,1} & \trg{1} & \ctrl{1} & \meter{} \\
\lstick{$\ket{\psi}$} & \gate{H} & & \phase{} & \targ{} & &
\end{quantikz}
\\
&\stackrel{\text{(s)}}{=}
\begin{quantikz}
\lstick{$\ket{+}$} & & \trg{1} & \ctrl{1} & \trg{1} & \trg{1} & \ctrl{1} & \meter{} \\
\lstick{$\ket{\psi}$} & \gate{H} & \phase{} & \targ{} & \phase{} & \phase{} & \targ{} & &
\end{quantikz}
\stackrel{\text{(i')}}{=}
\begin{quantikz}
\lstick{$\ket{+}$} & & \trg{1} & \ctrl{1} & \ctrl{1} & \meter{} \\
\lstick{$\ket{\psi}$} & \gate{H} & \phase{} & \targ{} &  \targ{} & &
\end{quantikz}
\\
&\stackrel{\text{(i')}}{=}
\begin{quantikz}
\lstick{$\ket{+}$} & & \trg{1} & \meter{} \\
\lstick{$\ket{\psi}$} & \gate{H} & \phase{} & &
\end{quantikz}
\stackrel{(1)}{=}
\begin{quantikz}
\lstick{$\ket{+}$} & & \meter{} \\
\lstick{$\ket{\psi}$} & \gate{H} & &
\end{quantikz}
\stackrel{\text{(remove ancilla)}}{=}
\begin{quantikz}
\lstick{$\ket{\psi}$} & \gate{H} &
\end{quantikz}
\end{aligned}
\end{equation*}

\subsubsection{Section 12.1.1 in the Dodo book}
Section 12.1.1 of the Dodo book \cite{coecke_picturing_2017} showcases a calculation in which the ZX-calculus is used to simplify the following circuit.\footnote{Technically in the Dodo book the top two ancilla qubits are discarded rather than measured, and the bottom two data qubits are measured, but it makes no difference to the circuit analysis.}

\begin{equation*}
\begin{quantikz}
\lstick{$\ket{0}$} & \ctrl{1} & \targ{1/4} & \phase{-1/2} & \targ{1/4} & \ctrl{1} & & \ctrl{1} & & \meter{} \\
\lstick{$\ket{0}$} & \targ{} & & & \ctrl{1} & \targ{} & \phase{-1/4} & \targ{} & \ctrl{1} & \meter{} \\
\lstick{$\ket{0}$} & \ctrl{1} & \targ{-1/4} & \ctrl{1} & \targ{} & \ctrl{1} & \targ{1/4} & \ctrl{1} & \targ{} & 
 \\
\lstick{$\ket{0}$} & \targ{} & & \targ{} & & \targ{} & & \targ{} & & 
\end{quantikz}
\end{equation*}

Regarding the feasibility of doing this simplification graphically, the authors say: ``If we try to keep the quantum gates intact, we're pretty much stuck, but what if we temporarily forget that the chunk in the middle is made up of quantum gates, and just treat it like any other ZX-diagram? That's when some ZX-magic happens!"

With standard circuit techniques, it's indeed true that we're pretty much stuck. But by using EQCDs we can perform the same calculation. We now demonstrate this, step-by-step identical to the calculation shown on p. 682-683 of the Dodo book, but using EQCDs instead of ZX-calculus.

First the authors consider the following pieces of the circuit:

\begin{equation*}
\begin{quantikz}
& \ctrl{1} & \targ{-1/4} & \ctrl{1} &  \\
& \targ{} & & \targ{} &
\end{quantikz}
\quad \quad \quad
\begin{quantikz}
& \ctrl{1} & & \ctrl{1} &  \\
& \targ{} & \phase{-1/4} & \targ{} &
\end{quantikz}
\quad \quad \quad
\begin{quantikz}
& \ctrl{1} & \targ{1/4} & \ctrl{1} &  \\
& \targ{} & & \targ{} &
\end{quantikz}
\end{equation*}

These forms are simplified into alternate (equivalent) ones which are symmetric with respect to the SWAP operation. In ZX-calculus this is done by noticing a 4-cycle and applying the strong complementarity rule. We do the same using the expansion and commutation rules. 

\begin{equation*}
\begin{quantikz}
& \ctrl{1} & & \ctrl{1} &  \\
& \targ{} & \phase{\alpha} & \targ{} &
\end{quantikz}
\stackrel{\text{(e)}}{=}
\begin{quantikz}
& \ctrl{1} & \octrl{1} & \ctrl{1} & \ctrl{1} &  \\
& \targ{} & \phase{\alpha} & \phase{\alpha} & \targ{} &
\end{quantikz}
\stackrel{\text{(f)}}{=}
\begin{quantikz}
& \octrl{1} & \ctrl{1} & \ctrl{1} & \ctrl{1} &  \\
& \phase{\alpha} & \targ{} & \phase{\alpha} & \targ{} &
\end{quantikz}
\stackrel{\text{(d),(ac)}}{=}
\begin{quantikz}
& \octrl{1} & \ctrl{1} & \ctrl{1} & \ctrl{1} &  \\
& \phase{\alpha} & \ophase{\alpha} & \targ{} & \targ{} &
\end{quantikz}
\stackrel{\text{(i')}}{=}
\begin{quantikz}
& \octrl{1} & \ctrl{1} & \\
& \phase{\alpha} & \ophase{\alpha} & 
\end{quantikz}
\end{equation*}

Similarly for the flopped case. We will collectively call these derived rules (z).

\begin{equation*}
\begin{quantikz}
& \ctrl{1} & \targ{\alpha} & \ctrl{1} &  \\
& \targ{} & & \targ{} &
\end{quantikz}
=
\begin{quantikz}
& \targm{\alpha} & \targ{\alpha} & \\
& \targ{}\vqw{-1} & \targm{}\vqw{-1} & 
\end{quantikz} \textbf{(z)}
\end{equation*}

This completes the steps shown on p. 682 (in slightly more general form, using arbitrary $\alpha$). Next we substitute these identities into the circuit.

\begin{equation*}
\begin{quantikz}
\lstick{$\ket{0}$} & \ctrl{1} & \targ{1/4} & \phase{-1/2} & \targ{1/4} & \ctrl{1}\gategroup[2,steps=3,style={dashed,rounded corners, inner xsep=2pt, inner ysep=2pt}]{} & & \ctrl{1} & & \meter{} \\
\lstick{$\ket{0}$} & \targ{} & & & \ctrl{1} & \targ{} & \phase{-1/4} & \targ{} & \ctrl{1} & \meter{} \\
\lstick{$\ket{0}$} & \ctrl{1}\gategroup[2,steps=3,style={dashed,rounded corners, inner xsep=2pt}]{} & \targ{-1/4} & \ctrl{1} & \targ{} & \ctrl{1}\gategroup[2,steps=3,style={dashed,rounded corners, inner xsep=2pt}]{} & \targ{1/4} & \ctrl{1} & \targ{} & \\
\lstick{$\ket{0}$} & \targ{} & & \targ{} & & \targ{} & & \targ{} & & 
\end{quantikz}
\stackrel{\text{(z)}}{=}
\begin{quantikz}
\lstick{$\ket{0}$} & \ctrl{1} & \targ{1/4} & \phase{-1/2} & \targ{1/4} & \ophase{-1/4} & \phase{-1/4} & & \meter{} \\
\lstick{$\ket{0}$} & \targ{} & & & \ctrl{1} & \ctrl{-1} & \octrl{-1} & \ctrl{1} & \meter{} \\
\lstick{$\ket{0}$} & \targm{-1/4}\vqw{1} & \targ{-1/4}\vqw{1} & & \targ{} & \targm{1/4}\vqw{1} & \targ{1/4}\vqw{1} & \targ{} & \\
\lstick{$\ket{0}$} & \targ{} & \targm{} & & & \targ{} & \targm{} & &
\end{quantikz}
\end{equation*}

In the Dodo book the authors here note that the top two ancilla wires can be disconnected from the bottom two by applying the spider fusion and complementarity rules to cut off a pair of spider legs. Here those legs are the two CNOTs connecting wires 2 and 3. 

We do the same by observing that the CNOTs in question commute with all four gates separating them, essentially because on every overlapping wire they share symbols of the same type. On wire 2 all four gates in question have $\circ/\bullet$ (Z-type) symbols, and on wire 3 they all have $\ominus/\oplus$ (X-type) symbols. Hence the CNOTs can be freely moved next to each other, after which they cancel, disconnecting the two halves of the circuit.

\begin{equation*}
\begin{aligned}
&\stackrel{\text{(g),(f)}}{=}
\begin{quantikz}
\lstick{$\ket{0}$} & \ctrl{1} & \targ{1/4} & \phase{-1/2} & \targ{1/4} & \ophase{-1/4} & \phase{-1/4} & \meter{} \\
\lstick{$\ket{0}$} & \targ{} & & \ctrl{1} & \ctrl{1} & \ctrl{-1} & \octrl{-1} & \meter{} \\
\lstick{$\ket{0}$} & \targm{-1/4}\vqw{1} & \targ{-1/4}\vqw{1} & \targ{} & \targ{} & \targm{1/4}\vqw{1} & \targ{1/4}\vqw{1} & \\
\lstick{$\ket{0}$} & \targ{} & \targm{} & & & \targ{} & \targm{} &
\end{quantikz}
\stackrel{\text{(i')}}{=}
\begin{quantikz}
\lstick{$\ket{0}$} & \ctrl{1} & \targ{1/4} & \phase{-1/2} & \targ{1/4} & \ophase{-1/4} & \phase{-1/4} & \meter{} \\
\lstick{$\ket{0}$} & \targ{} & & & & \ctrl{-1} & \octrl{-1} & \meter{} \\
\lstick{$\ket{0}$} & \targm{-1/4}\vqw{1} & \targ{-1/4}\vqw{1} & & & \targm{1/4}\vqw{1} & \targ{1/4}\vqw{1} & \\
\lstick{$\ket{0}$} & \targ{} & \targm{} & & & \targ{} & \targm{} &
\end{quantikz}
\\
&\stackrel{\text{(discard ancillas)}}{=}
\begin{quantikz}
\lstick{$\ket{0}$} & \targm{-1/4}\vqw{1} & \targ{-1/4}\vqw{1} & & \targm{1/4}\vqw{1} & \targ{1/4}\vqw{1} & \\
\lstick{$\ket{0}$} & \targ{} & \targm{} & & \targ{} & \targm{} & 
\end{quantikz}
\stackrel{\text{(f),(i)}}{=}
\begin{quantikz}
\lstick{$\ket{0}$} & \ghost{H} & \\
\lstick{$\ket{0}$} & & 
\end{quantikz}
\end{aligned}
\end{equation*}

After discarding the two ancilla wires we're left with a pair of operations that are conjugates of one another.\footnote{Technically conjugating would change the order of the two conditional operations in addition to negating their signs, but since all of the symbols in question are of the same type, all four conditional operations commute.} They cancel, completeting the calculation without any ZX-magic.

This example may make the reader wonder if EQCDs are simply ZX-diagrams in disguise; in fact they are not. ZX-calculus is a strictly stronger graphical language in the sense that it can represent arbitrary linear maps, including non-unitary maps that do not correspond to circuits. By design EQCDs can only represent circuits.

\subsection{Diagonalizations}

An interesting feature of EQCDs is that they can be used to diagrammatically diagonalize circuits. This includes individual gates like CNOT or fSWAP, subcircuits built from small sequences of gates (inside a larger circuit), or even full circuits/entire algorithms like those implementing phase estimation, Hamiltonian simulation, or Grover's algoirithm.

Every unitary matrix $U\in U(2^n)$ has the form $U = V\text{diag}(e^{i\pi\lambda_j})V^\dagger$ by the spectral theorem, where $V$ is a matrix whose columns are eigenvectors of $U$, and $\Lambda := \text{diag}(e^{i\pi\lambda_j})$ is a diagonal matrix of their associated eigenvalues. Diagrammatically diagonalizing a circuit for $U$ amounts to finding an explicit circuit decomposition of $U$ into the three components $V^\dagger$, $\Lambda$, and $V$. The result is an explicit circuit for $V$, the basis transformation from the computational basis into the eigenbasis of $U$, an explicit circuit for $\Lambda$ (whose form is always the same since it's a diagonal matrix, but whose eigenvalues are usually not known ahead of time), and an explicit circuit for $V^\dagger$, which is just the mirror image of the circuit for $V$.

$V$ and $V^\dagger$ are conjugate transposes of one another. The conjugate transpose of a circuit $V$ is easily seen to be its right-to-left mirror image. All gates appear in reverse order, and all numbers are negated. That is, picking an arbitrary circuit representative for $V$,

\begin{equation*}
V = 
\begin{quantikz}
& \gate{H} & \phase{1/2} & \targ{} & & & & \\
& & \ctrl{-1} & & \targ{-2.718...} & \ctrl{1} & \targ{} & \\
& & & \ctrl{-2} & \phase{1/3} & \targ{} & \ctrl{-1} &
\end{quantikz}
\quad \implies \quad 
V^\dagger =
\begin{quantikz}
& & & & \targ{} & \phase{-1/2} & \gate{H} & \\
& \targ{} & \ctrl{1} & \targ{2.718...} & & \ctrl{-1} & & \\
& \ctrl{-1} & \targ{} & \phase{-1/3} & \ctrl{-2} & & &
\end{quantikz}
\end{equation*}

It's easily verified that $VV^\dagger = V^\dagger V = \eye$. The eigenvalue matrix $\Lambda = \text{diag}(e^{i\pi\lambda_j})$ has the form

\begin{equation*}
\Lambda = 
\begin{quantikz}
& \ophase{\lambda_1} & \ophase{\lambda_2} & \ophase{\lambda_3} & \ophase{\lambda_4} & \phase{\lambda_5} & \phase{\lambda_6} & \phase{\lambda_7} & \phase{\lambda_8} & \\
& \octrl{-1} & \octrl{-1} & \ctrl{-1} & \ctrl{-1} & \octrl{-1} & \octrl{-1} & \ctrl{-1} & \ctrl{-1} & \\
& \octrl{-1} & \ctrl{-1} & \octrl{-1} & \ctrl{-1} & \octrl{-1} & \ctrl{-1} & \octrl{-1} & \ctrl{-1} & \\
\end{quantikz}
\end{equation*}

where each of the $2^n$ conditionals implements one of $U$'s $2^n$ eigenvalue transformations.

\begin{equation*}
\begin{quantikz}
& \ophase{\lambda_1} & \\
& \octrl{-1} & \\
& \octrl{-1} &
\end{quantikz}
=
e^{i\pi\lambda_1\dyad{000}}
=
\begin{bmatrix}
   e^{i\pi\lambda_1} & 0 &  & 0 \\
   0 & 1 &  & 0 \\
    &  & \ddots & 0 \\
   0 & 0 & 0 & 1 
\end{bmatrix},
\quad \quad
\begin{quantikz}
& \ophase{\lambda_2} & \\
& \octrl{-1} & \\
& \ctrl{-1} &
\end{quantikz}
=
e^{i\pi\lambda_1\dyad{001}}
=
\begin{bmatrix}
   1 & 0 &  & 0 \\
   0 & e^{i\pi\lambda_2} &  & 0 \\
    &  & \ddots & 0 \\
   0 & 0 & 0 & 1 
\end{bmatrix},
\end{equation*}

\begin{equation*}
\dots, \quad \quad
\begin{quantikz}
& \phase{\lambda_8} & \\
& \ctrl{-1} & \\
& \ctrl{-1} &
\end{quantikz}
=
e^{i\pi\lambda_8\dyad{111}}
=
\begin{bmatrix}
   1 & 0 &  & 0 \\
   0 & 1 &  & 0 \\
    &  & \ddots & 0 \\
   0 & 0 & 0 & e^{i\pi\lambda_8} 
\end{bmatrix}.
\end{equation*}

So, given a starting circuit for $U$, diagrammatically diagonalizing $U$ means using the rewrite rules to separate $U$ into three parts: the mirror images $V$ and $V^\dagger$, which can contain any circuit elements as long as they are perfect mirror images of one another, sandwiching a central block $\Lambda$ containing only dots. $\Lambda$ can contain vertical lines and numbers of any amount and in any configuration, since it will always be rewritable into the above form. What is relevant is that it not contain any $\bm{\oplus}$ or $\bm{\ominus}$ symbols, since these are non-diagonal.

In this subsection we give a few simple examples of diagrammatic gate diagonalizations.

\subsubsection{CNOT}

Applying rule (g) to $H$ and a conditional reveals that it flops symbols exactly as if they were not attached to other symbols by a vertical line.

\begin{equation*}
\begin{quantikz}[column sep=0.1cm,row sep=0.25cm]
    & & \ctrl{1} & \\
    & \gate{H} & \targ{} &
\end{quantikz}
\stackrel{\text{(g),(H)}}{=}
\begin{quantikz}[column sep=0.1cm,row sep=0.25cm]
    & \ctrl{1} & & \\
    & \phase{} & \gate{H} &
\end{quantikz}
\end{equation*}

This fact both diagonalizes CNOT and also shows that CNOT and CZ are locally equivalent (i.e. related by single-qubit unitaries).

\begin{equation*}
\begin{quantikz}[column sep=0.1cm,row sep=0.25cm]
    & & \ctrl{1} & & \\
    & \gate{H} & \phase{} & \gate{H} &
\end{quantikz}
\stackrel{\text{(g),(i)}}{=}
\begin{quantikz}[column sep=0.1cm,row sep=0.25cm]
    & \ctrl{1} & \\
    & \targ{} & 
\end{quantikz}
\quad \quad \quad \Big(\quad \implies \quad
\begin{quantikz}[column sep=0.1cm,row sep=0.25cm]
    & & \ctrl{1} & & \\
    & \gate{H} & \targ{} & \gate{H} &
\end{quantikz}
\stackrel{\text{(H)}}{=}
\begin{quantikz}[column sep=0.1cm,row sep=0.25cm]
    & \ctrl{1} & \\
    & \phase{} & 
\end{quantikz}
\quad \Big)
\end{equation*}

Explicitly in matrices, this diagonalization is

\begin{equation*}
\begin{bmatrix}
   H & \zero \\
   \zero & H 
\end{bmatrix}
\begin{bmatrix}
   \eye & \zero \\
   \zero & Z 
\end{bmatrix}
\begin{bmatrix}
   H & \zero \\
   \zero & H 
\end{bmatrix}
=
\begin{bmatrix}
   \eye & \zero \\
   \zero & X 
\end{bmatrix}
\end{equation*}

CNOT can also be diagonalized in a `weaker' way by a controlled $H$ rather than $\eye\otimes H$. We call this a weaker diagonalization because it applies only to the subspace where CNOT is not diagonal. These weaker diagonalizations are sometimes useful in larger proofs, as we'll see soon with fSWAP.

\begin{equation*}
\begin{aligned}
\begin{quantikz}[column sep=0.1cm,row sep=0.25cm]
    & \ctrl{1} & \ctrl{1} & \ctrl{1} & \\
    & \gate{H} & \phase{} & \gate{H} &
\end{quantikz}
\stackrel{\text{(d),(H),(i)}}{=}
&\begin{quantikz}[column sep=0.1cm,row sep=0.25cm]
    & \ctrl{1} & \\
    & \targ{} & 
\end{quantikz}
\\
\begin{bmatrix}
   \eye & \zero \\
   \zero & H 
\end{bmatrix}
\begin{bmatrix}
   \eye & \zero \\
   \zero & Z 
\end{bmatrix}
\begin{bmatrix}
   \eye & \zero \\
   \zero & H 
\end{bmatrix}
=
&\begin{bmatrix}
   \eye & \zero \\
   \zero & X 
\end{bmatrix}
\end{aligned}
\end{equation*}

\subsubsection{SWAP}


SWAP is diagonalized by just applying the CNOT diagonalization.

\begin{equation*}
\begin{quantikz}[column sep=0.2cm]
& \permute{2,1} & \ghost{H} \\
& & \ghost{H}
\end{quantikz}
\stackrel{\text{(s)}}{=}
\begin{quantikz}
& \ctrl{1} & \targ{} & \ctrl{1} & \\
& \targ{} & \ctrl{-1} & \targ{} &
\end{quantikz}
\stackrel{\text{(A.2.1)}}{=}
\begin{quantikz}
& \ctrl{1}\gategroup[2,steps=2,style={dashed,rounded corners}]{Bell$^\dagger$} & \gate{H} & & \ctrl{1}\gategroup[2,steps=1,style={dashed,rounded corners}]{$CZ$} & & \gate{H}\gategroup[2,steps=2,style={dashed,rounded corners}]{Bell} & \ctrl{1} & \\
& \targ{} & & & \control{} & & & \targ{} &
\end{quantikz}
\end{equation*}

Looking at the diagram, we recognize the circuit for SWAP's eigenbasis as the Bell basis transformation, whose columns are the Bell states $\{\ket{\Phi_+},\ket{\Psi_+},\ket{\Phi_-},\ket{\Psi_-}\}$, with associated eigenvalues given by the CZ, $\{1,1,1,-1\}$. Since none of these states are separable, SWAP has no semi-classical logic as a controlled operation controlled by one or more wires. If forcibly interpreted as a controlled operation, it instead implements the non-classical logic $(\verb|if | \ket{\Psi_-}\verb| do | -\eye)$, which we call `non-classical' because, in the language of appendix \ref{appendix:onthenotionofcontrol}, it has no partial eigenbases on either qubit.

SWAP also has a weaker diagonalization following from CNOT's weaker diagonalization.

\begin{equation*}
\begin{quantikz}[column sep=0.2cm]
& \permute{2,1} & \ghost{H} \\
& & \ghost{H}
\end{quantikz}
\stackrel{\text{(s)}}{=}
\begin{quantikz}
& \ctrl{1} & \targ{} & \ctrl{1} & \\
& \targ{} & \ctrl{-1} & \targ{} &
\end{quantikz}
\stackrel{\text{(A.2.1)}}{=}
\begin{quantikz}
& \ctrl{1} & \gate{H} & \ctrl{1} & \gate{H} & \ctrl{1} & \\
& \targ{} & \ctrl{-1} & \control{} & \ctrl{-1} & \targ{} &
\end{quantikz}
\end{equation*}

This weaker basis transformation only transforms half of the computational basis into Bell states. Only the odd-parity computational basis states are transformed, leading to the eigenbasis $\{\ket{00},\ket{\Psi_+},\ket{11},\ket{\Psi_-}\}$ with eigenvalues $\{1,1,1,-1\}$. 

\subsubsection{fSWAP}

The fermionic SWAP gets its name from a specific physical context: when two identical fermions are physically swapped, their overall wavefunction picks up a minus sign. If we consider our qubits to be the presence or absence of a fermion (say, an electron), then the natural SWAP gate for this system must include a minus sign when both modes being swapped are occupied, i.e. for the $\ket{11}$ state. This gate is fSWAP.

Being closely related to SWAP, fSWAP can be diagonalized in much the same way. The only extra complication ia a CZ which must be pushed into the center. This is most easily done using the weaker diagonalization of CNOT shown above.

\begin{equation*}
\begin{aligned}
\begin{quantikz}
& \gate[2]{\text{fSWAP}} & \\
& & 
\end{quantikz}
&:=
\begin{quantikz}
& \permute{2,1} & \ctrl{1} & \\
& & \phase{} &
\end{quantikz}
\stackrel{\text{(s)}}{=}
\begin{quantikz}
& \ctrl{1} & \targ{} & \ctrl{1} & \ctrl{1} & \\
& \targ{} & \ctrl{-1} & \targ{} & \phase{} &
\end{quantikz}
\stackrel{\text{(g),(ac)}}{=}
\begin{quantikz}
& \ctrl{1} & \targ{} & \ctrl{1} & \ctrl{1} & \\
& \targ{} & \ctrl{-1} & \ophase{} & \targ{} &
\end{quantikz}
\stackrel{\text{(A.2.1)}}{=}
\begin{quantikz}
& \ctrl{1} & \gate{H} & \phase{} & \gate{H} & \ctrl{1} & \ctrl{1} & \\
& \targ{} & \ctrl{-1} & \ctrl{-1} & \ctrl{-1} & \ophase{} & \targ{} &
\end{quantikz}
\\
&\stackrel{\text{(f)}}{=}
\begin{quantikz}
& \ctrl{1} & \gate{H} & \phase{} & \ctrl{1} & \gate{H} & \ctrl{1} & \\
& \targ{} & \ctrl{-1} & \ctrl{-1} & \ophase{} & \ctrl{-1} & \targ{} &
\end{quantikz}
\stackrel{\text{(e)}}{=}
\begin{quantikz}
& \ctrl{1} & \gate{H} & \phase{} & \gate{H} & \ctrl{1} & \\
& \targ{} & \ctrl{-1} & & \ctrl{-1} & \targ{} &
\end{quantikz}
\Bigg(=
\begin{quantikz}
& \targ{} & \ctrl{1} & & \ctrl{1} & \targ{} & \\
& \ctrl{-1} & \gate{H} & \phase{} & \gate{H} & \ctrl{-1} &
\end{quantikz}\Bigg)
\end{aligned}
\end{equation*}

Inspecting the diagram, fSWAP's eigenbasis is the same as the weaker diagonalization of SWAP, though this time not by choice. It is again half of a Bell basis $\{\ket{00},\ket{\Psi_+},\ket{11},\ket{\Psi_-}\}$, but this time with one extra altered eigenvalue, $\{1,1,-1,-1\}$. Like SWAP, fSWAP has no semi-classical logic, and instead implements the non-classical logic

\begin{equation*}
\begin{array}{l}
\verb|if | \ket{\Psi_-} \\
\verb| do | -\eye \\
\verb|elif | \ket{11} \\
\verb| do | -\eye \\
\end{array} 
\quad\Big(\quad = \begin{array}{l}
\verb|if | \ket{\Psi_-} \verb| xor | \ket{11} \\
\verb| do | -\eye
\end{array}\Big).
\end{equation*}

fSWAP has another form which will useful in section A.3.3.

\begin{equation*}
\begin{aligned}
&\begin{quantikz}
& \gate[2]{\text{fSWAP}} & \\
& & 
\end{quantikz}
:=
\begin{quantikz}
& \permute{2,1} & \ctrl{1} & \\
& & \phase{} &
\end{quantikz}
\stackrel{\text{(i)}}{=}
\begin{quantikz}
& & & \permute{2,1} & \ctrl{1} & \\
& \gate{H} & \gate{H} & & \phase{} &
\end{quantikz}
\stackrel{\text{(t)}}{=}
\begin{quantikz}
& & \permute{2,1} & \gate{H} & \ctrl{1} & \\
& \gate{H} & & & \phase{} & 
\end{quantikz}
\\
\stackrel{\text{(s)}}{=}
&\begin{quantikz}
& & \targ{} & \ctrl{1} & \targ{} & \gate{H} & \ctrl{1} & \\
& \gate{H}  & \ctrl{-1} & \targ{} & \ctrl{-1} & & \phase{} &
\end{quantikz}
\stackrel{\text{(g),(H)}}{=}
\begin{quantikz}
& & \targ{} & \ctrl{1} & \gate{H} & \phase{} & \ctrl{1} & \\
& \gate{H}  & \ctrl{-1} & \targ{} & & \ctrl{-1} & \phase{} &
\end{quantikz}
\stackrel{\text{(i')}}{=}
\begin{quantikz}
& & \targ{} & \ctrl{1} & \gate{H} & \\
& \gate{H}  & \ctrl{-1} & \targ{} & &
\end{quantikz}
\end{aligned}
\end{equation*}

\subsubsection{iSWAP}

The imaginary SWAP is locally equivalent to fSWAP, so we will begin from there to diagonalize it. First we manipulate the diagram into a nearly symmetric form involving a SWAP sandwiched by dot-only conditionals, and then we push the dots toward the center of the circuit. \footnote{We thank Shao-Hen Chiew for bringing iSWAP to our attention by asking for a diagonalization of it in conversation at QIP2024.}

\begin{equation*}
\begin{aligned}
\begin{quantikz}
& \gate[2]{\text{iSWAP}} & \\
& & 
\end{quantikz}
&:=
\begin{quantikz}
& \gate[2]{\text{fSWAP}} & \gate{S} & \\
& & \gate{S} &
\end{quantikz}
=
\begin{quantikz}
& \permute{2,1} & \ctrl{1} & \phase{1/2} & \\
& & \phase{} & \phase{1/2} &
\end{quantikz}
\stackrel{\text{(e)}}{=}
\begin{quantikz}
& \permute{2,1} & \ctrl{1} & \phase{1/2} & \phase{1/2} & \phase{1/2} & \ophase{1/2} & \\
& & \phase{} & \ctrl{-1} & \octrl{-1} & \ctrl{-1} & \ctrl{-1} & \ghost{H}
\end{quantikz}
\\
&\stackrel{\text{(f)}}{=}
\begin{quantikz}
& \permute{2,1} & \ctrl{1} & \phase{1/2} & \phase{1/2} & \phase{1/2} & \ophase{1/2} & \\
& & \phase{} & \ctrl{-1} & \ctrl{-1} & \octrl{-1} & \ctrl{-1} &
\end{quantikz}
\stackrel{\text{(n+)}}{=}
\begin{quantikz}
& \permute{2,1} & \ctrl{1} & \ctrl{1} & \phase{1/2} & \ophase{1/2} & \\
& & \phase{} & \phase{} & \octrl{-1} & \ctrl{-1} &
\end{quantikz}
\\
&\stackrel{\text{(i')}}{=}
\begin{quantikz}
& \permute{2,1} & \phase{1/2} & \ophase{1/2} & \\
& & \octrl{-1} & \ctrl{-1} &
\end{quantikz}
\stackrel{\text{(t)}}{=}
\begin{quantikz}
& \ophase{1/2} & \permute{2,1}  & \ophase{1/2} & \\
& \ctrl{-1} & & \ctrl{-1} &
\end{quantikz}
\end{aligned}
\end{equation*}

Now we insert the diagonalization of SWAP and push the new terms toward the center.

\begin{equation*}
\begin{aligned}
&\stackrel{\text{(A.2.2)}}{=}
\begin{quantikz}
& \ophase{1/2} & \ctrl{1} & \gate{H} & \ctrl{1} & \gate{H} & \ctrl{1} & \ophase{1/2} & \\
& \ctrl{-1} & \targ{} &  & \phase{} &  & \targ{} & \ctrl{-1} &
\end{quantikz}
\stackrel{\text{(f)}}{=}
\begin{quantikz}
& \ctrl{1} & \ophase{1/2} & \gate{H} & \ctrl{1} & \gate{H} & \ophase{1/2} & \ctrl{1} & \\
& \targ{} & \ctrl{-1} &  & \phase{} &  & \ctrl{-1} & \targ{} & 
\end{quantikz}
\\
&\stackrel{\text{(g),(H)}}{=}
\begin{quantikz}
& \ctrl{1} & \gate{H} & \targm{1/2} & \ctrl{1} & \targm{1/2} & \gate{H} & \ctrl{1} & \\
& \targ{} &  & \ctrl{-1} & \phase{} & \ctrl{-1} &  & \targ{} & 
\end{quantikz}
\stackrel{\text{(d),(ac)}}{=}
\begin{quantikz}
& \ctrl{1} & \gate{H} & \targm{1/2} & \targ{1/2} & \ctrl{1} & \gate{H} & \ctrl{1} & \\
& \targ{} &  & \ctrl{-1} & \ctrl{-1} & \phase{} &  & \targ{} & 
\end{quantikz}
\\
&\stackrel{\text{(d)}}{=}
\begin{quantikz}
& \ctrl{1} & \gate{H} & \gate{\text{Phase}(\pi/2)} & \ctrl{1} & \gate{H} & \ctrl{1} & \\
& \targ{} &  & \ctrl{-1} & \phase{} &  & \targ{} & 
\end{quantikz}
\stackrel{\text{(d)}}{=}
\begin{quantikz}
& \ctrl{1} & \gate{H} & \ophase{1/2} & \phase{1/2} & \ctrl{1} & \gate{H} & \ctrl{1} & \\
& \targ{} &  & \ctrl{-1} & \ctrl{-1} & \phase{} &  & \targ{} & 
\end{quantikz}
\\
&\stackrel{\text{(n+)}}{=}
\begin{quantikz}
& \ctrl{1} & \gate{H} & \ophase{1/2} & \phase{3/2} & \gate{H} & \ctrl{1} & \\
& \targ{} &  & \ctrl{-1} & \ctrl{-1} &  & \targ{} & 
\end{quantikz}
\Bigg(\stackrel{\text{(i')}}{=}
\begin{quantikz}
& \ctrl{1} & \gate{H} & \ophase{1/2} & \phase{-1/2} & \gate{H} & \ctrl{1} & \\
& \targ{} &  & \ctrl{-1} & \ctrl{-1} &  & \targ{} & 
\end{quantikz}\Bigg)
\end{aligned}
\end{equation*}

Like SWAP, iSWAP's eigenbasis is the Bell basis. It implements the non-classical logic 

\begin{equation*}
\begin{array}{l}
\verb|if | \ket{\Psi_+} \\
\verb| do | i\eye \\
\verb|elif | \ket{\Psi_-} \\
\verb| do | -i\eye \\
\end{array}.
\end{equation*}

\subsubsection{QFT$_2$}

The quantum fourier transform on 2 qubits involves a SWAP at the end, and can be diagonalized using a diagonalization of SWAP and some pushing of leftover dots toward the center. Like with the fSWAP, using the weaker diagonalization makes the process easier. 

\begin{equation*}
\begin{aligned}
\begin{quantikz}
& \gate[2]{\text{QFT}_2} & \\
& & 
\end{quantikz}
&:=
\begin{quantikz}
& \gate{H} & \ctrl[wire style={"1/2"}]{1} & & & \permute{2,1} & \\
& \qw & \ctrl{-1} & & \gate{H} & &
\end{quantikz}
\stackrel{\text{(t)}}{=}
\begin{quantikz}
& \gate{H} & \ctrl[wire style={"1/2"}]{1} & \permute{2,1} & \gate{H} & \\
& \ghost{H} & \ctrl{-1} & & &
\end{quantikz}
\\
&\stackrel{\text{(A.2.2)}}{=}
\begin{quantikz}
& \gate{H} & \ctrl[wire style={"1/2"}]{1} & & & \ctrl{1}\gategroup[2,steps=5,style={dashed,rounded
corners, inner
xsep=2pt, inner
ysep=0pt}]{SWAP} & \gate{H} & \ctrl{1} & \gate{H} & \ctrl{1} & \gate{H} & \\
& & \control{} & & & \targ{} & \ctrl{-1} & \control{} & \ctrl{-1} & \targ{} & &
\end{quantikz}
\stackrel{\text{(d),(ac)}}{=}
\begin{quantikz}
& \gate{H} & \ctrl{1} & \ctrl[wire style={"1/2"}]{1} & & \gate{H} & \ctrl{1} & \gate{H} & \ctrl{1} & \gate{H} & \\
& & \targ{} & \ocontrol{} & & \ctrl{-1} & \control{} & \ctrl{-1} & \targ{} & &
\end{quantikz}
\\
&\stackrel{\text{(f)}}{=}
\begin{quantikz}[column sep=0.3cm,row sep=0.3cm]
& \gate{H}\gategroup[2,steps=3,style={dashed,rounded
corners, inner
xsep=2pt, inner
ysep=0pt}]{$V^\dagger$} & \ctrl{1} & \gate{H} & & & & \ctrl[wire style={"1/2"}]{1}\gategroup[2,steps=3,style={dashed,rounded
corners, inner
xsep=2pt, inner
ysep=0pt}]{diag$(1,1,i,-1)$} & & \control{} & & & & \gate{H}\gategroup[2,steps=3,style={dashed,rounded
corners, inner
xsep=2pt, inner
ysep=0pt}]{$V$} & \ctrl{1} & \gate{H} & \\
& & \targ{} & \ctrl{-1} & & & & \ocontrol{} & & \ctrl{-1} & & & & \ctrl{-1} & \targ{} & &
\end{quantikz}
\end{aligned}
\end{equation*}

\subsection{Communication primitives}
In this section we provide diagrammatic proofs of various communication primitives.

\subsubsection{Teleportation}

Teleportation is a protocol with two ancilla qubits (shown here initialized to $\ket{0}$), some unitary gates, two measurements, and two classical controlled corrections. The result of the protocol is that Alice's state $\ket{\psi}$ is deterministically teleported to Bob's qubit. This proof of the protocol's functionality proceeds by simplifying the circuit until the magic is gone and it's just an identity wire from Alice to Bob.

\begin{equation*}
\begin{aligned}
\begin{quantikz}[column sep=0.1cm,row sep=0.15cm]
\lstick{$\ket{\psi}$} & & & \ctrl{1} & \gate{H} & & \meter{} \vcw{2} \\
\lstick{$\ket{0}$} & \gate{H} & \ctrl{1} & \targ{} & & \meter{} \vcw{1} \\
\lstick{$\ket{0}$} & & \targ{} & & & \gate{X} & \gate{Z} & \rstick{$\ket{\psi}$}
\end{quantikz}
&\stackrel{\text{(pdm)}}{=}
\begin{quantikz}[column sep=0.1cm,row sep=0.15cm]
\lstick{$\ket{\psi}$} & & & \ctrl{1} & \gate{H} & & \phase{} & \meter{} \\
\lstick{$\ket{0}$} & \gate{H} & \ctrl{1} & \targ{} & & \ctrl{1} & & \meter{} \\
\lstick{$\ket{0}$} & & \targ{} & & & \targ{} & \ctrl{-2} & \rstick{$\ket{\psi}$}
\end{quantikz}
\\
\stackrel{\text{(mb)}}{=}
\begin{quantikz}[column sep=0.1cm,row sep=0.15cm]
\lstick{$\ket{\psi}$} & & & \ctrl{1} & \gate{H} & \phase{} & \gate{H} & \meter{H} \\
\lstick{$\ket{0}$} & \gate{H} & \ctrl{1} & \targ{} & \ctrl{1} & & & \meter{} \\
\lstick{$\ket{0}$} & & \targ{} & & \targ{} & \ctrl{-2} & & \rstick{$\ket{\psi}$}
\end{quantikz}
&\stackrel{\text{(A.2.1)}}{=}
\begin{quantikz}
\lstick{$\ket{\psi}$} & & & \ctrl{1} & & \targ{} & \meter{H} \\
\lstick{$\ket{0}$} & \gate{H} & \ctrl{1} & \targ{} & \ctrl{1} & & \meter{} \\
\lstick{$\ket{0}$} & & \targ{} & & \targ{} & \ctrl{-2} & \rstick{$\ket{\psi}$}
\end{quantikz}
\\
=
\begin{quantikz}
\lstick{$\ket{\psi}$} & & \ctrl{1} & & \targ{} & \meter{H} \\
\lstick{$\ket{+}$} & \ctrl{1} & \targ{} & \ctrl{1} & & \meter{} \\
\lstick{$\ket{0}$} & \targ{} & & \targ{} & \ctrl{-2} & \rstick{$\ket{\psi}$}
\end{quantikz}
&\stackrel{\text{(5CX)}}{=}
\begin{quantikz}
\lstick{$\ket{\psi}$} & \ctrl{1} & \ctrl{2} & \targ{} & \meter{H} \\
\lstick{$\ket{+}$} & \targ{} & & & \meter{} \\
\lstick{$\ket{0}$} & & \targ{} & \ctrl{-2} & \rstick{$\ket{\psi}$}
\end{quantikz}
\stackrel{(1)}{=}
\begin{quantikz}
\lstick{$\ket{\psi}$} & \ctrl{2} & \targ{} & \meter{H} \\
\lstick{$\ket{+}$} & & & \meter{} \\
\lstick{$\ket{0}$} & \targ{} & \ctrl{-2} & \rstick{$\ket{\psi}$}
\end{quantikz}
\\
\stackrel{(1)}{=}
\begin{quantikz}
\lstick{$\ket{\psi}$} & \targ{} & \ctrl{2} & \targ{} & \meter{H} \\
\lstick{$\ket{+}$} & & & & \meter{} \\
\lstick{$\ket{0}$} & \ctrl{-2} & \targ{} & \ctrl{-2} & \rstick{$\ket{\psi}$}
\end{quantikz}
&\stackrel{\text{(s)}}{=}
\begin{quantikz}
\lstick{$\ket{\psi}$} & \permute{3,2,1} & \meter{H} \\
\lstick{$\ket{+}$} & \ghost{H} & \meter{} \\
\lstick{$\ket{0}$} & & \rstick{$\ket{\psi}$}
\end{quantikz}
\stackrel{\text{(discard ancillas)}}{=}
\begin{quantikz}
\lstick{$\ket{\psi}$} & & &
\end{quantikz}
\end{aligned}
\end{equation*}

\subsubsection{Superdense coding}
Superdense coding is a protocol where Alice can send two classical bits a,b to Bob by sending only one qubit, by using a pre-established Bell pair. Alice does this by performing classically controlled $X$ and $Z$ operations on the qubit before sending it to Bob, after which he may measure in the Bell basis to recover a,b. Our first step is to `quantize' the classical bits a,b by replacing them with qubits in the computational basis states $\ket{a},\ket{b}$ which Alice measures in order to perform the classically controlled $X$ and $Z$ operations.

\begin{equation*}
\begin{quantikz}[wire types={c,c,q,q}]
\lstick{a} & & & &  \ctrl[vertical
wire=c]{2}  \\
\lstick{b} & & &  \ctrl[vertical
wire=c]{1}  \\
\lstick{$\ket{0}$} & \gate{H} & \ctrl{1} & \gate{X} & \gate{Z} & \ctrl{1} & \gate{H} & \rstick{$\ket{a}$} \\
\lstick{$\ket{0}$} & & \targ{} & & & \targ{} & & \rstick{$\ket{b}$}
\end{quantikz}
\stackrel{\text{(quantize)}}{=}
\begin{quantikz}
\lstick{$\ket{a}$} & & & & \meter{} \vcw{2} \\
\lstick{$\ket{b}$} & & & \meter{} \vcw{1} \\
\lstick{$\ket{0}$} & \gate{H} & \ctrl{1} & \gate{X} & \gate{Z} & \ctrl{1} & \gate{H} & \rstick{$\ket{a}$} \\
\lstick{$\ket{0}$} & & \targ{} & & & \targ{} & & \rstick{$\ket{b}$}
\end{quantikz}
\end{equation*}

Now we simplify the circuit until what remains is an embedded classical copy circuit, which copies a,b to the lowest two qubits.

\begin{equation*}
\begin{aligned}
&\stackrel{\text{(pdm)}}{=}
\begin{quantikz}
\lstick{$\ket{a}$} & & & & \ctrl{2} & & & \meter{} \\
\lstick{$\ket{b}$} & & & \ctrl{1} & & & & \meter{} \\
\lstick{$\ket{0}$} & \gate{H} & \ctrl{1} & \targ{} & \phase{} & \ctrl{1} & \gate{H} & \rstick{$\ket{a}$} \\
\lstick{$\ket{0}$} & & \targ{} & & & \targ{} & & \rstick{$\ket{b}$}
\end{quantikz}
\stackrel{\text{(g),(A.1.2)}}{=}
\begin{quantikz}
\lstick{$\ket{a}$} & & & & & \ctrl{2} & & \meter{} \\
\lstick{$\ket{b}$} & & & \ctrl{1} & & & & \meter{} \\
\lstick{$\ket{0}$} & \gate{H} & \ctrl{1} & \targ{} & \ctrl{1} & \phase{} & \gate{H} & \rstick{$\ket{a}$} \\
\lstick{$\ket{0}$} & & \targ{} & & \targ{} & & & \rstick{$\ket{b}$}
\end{quantikz}
\\
&\stackrel{\text{(5CX)}}{=}
\begin{quantikz}
\lstick{$\ket{a}$} & & & & \ctrl{2} & & \meter{} \\
\lstick{$\ket{b}$} & & \ctrl{1} & \ctrl{2} & & & \meter{} \\
\lstick{$\ket{0}$} & \gate{H} & \targ{} & & \phase{} & \gate{H} & \rstick{$\ket{a}$} \\
\lstick{$\ket{0}$} & & & \targ{} & & & \rstick{$\ket{b}$}
\end{quantikz}
\stackrel{\text{(g),(H)}}{=}
\begin{quantikz}
\lstick{$\ket{a}$} & & & & & \ctrl{2} & \meter{} \\
\lstick{$\ket{b}$} & & \ctrl{1} & & \ctrl{2} & & \meter{} \\
\lstick{$\ket{0}$} & \gate{H} & \targ{} & \gate{H} & & \targ{} & \rstick{$\ket{a}$} \\
\lstick{$\ket{0}$} & & & & \targ{} & & \rstick{$\ket{b}$}
\end{quantikz}
\stackrel{\text{(A.2.1)}}{=}
\begin{quantikz}
\lstick{$\ket{a}$} & & & \ctrl{2} & \meter{} \\
\lstick{$\ket{b}$} & \ctrl{1} & \ctrl{2} & & \meter{} \\
\lstick{$\ket{0}$} & \phase{} & & \targ{} & \rstick{$\ket{a}$} \\
\lstick{$\ket{0}$} & & \targ{} & & \rstick{$\ket{b}$}
\end{quantikz}
\end{aligned}
\end{equation*}

The final step is the recognize that the CZ will not activate since one of its inputs is $\ket{0}$. It can simply be deleted, leaving behind a classical copy circuit.

\begin{equation*}
\begin{aligned}
\begin{quantikz}
\lstick{$\ket{a}$} & & \ctrl{2} & \meter{} \\
\lstick{$\ket{b}$} & \ctrl{2} & & \meter{} \\
\lstick{$\ket{0}$} & & \targ{} & \rstick{$\ket{a}$} \\
\lstick{$\ket{0}$} & \targ{} & & \rstick{$\ket{b}$}
\end{quantikz}
\end{aligned}
\end{equation*}

We remark that we've left the output qubits $\ket{a},\ket{b}$ labelled throughout the calculation only for the reader's convenience; the proof does not rely on knowledge of them ahead of time, and of course if one were proving this protocol's functionality with this method they would not be known ahead of time; they would instead be discovered at the end of the calculation, and retroactively written down on the output of every prior step.

\subsubsection{Entanglement swapping}

The last in this series of communication primitive proofs, entanglement swapping is a protocol where entanglement can be established between Alice and Bob by performing a Bell measurement on half of a Bell pair produced by Alice and half of a Bell pair produced by Bob, followed by some classical corrections. The result is a Bell state $\ket{\Phi_+}$ shared by Alice and Bob, which this time we will not label until the end of the proof, when we `discover' its functionality.

\begin{equation*}
\begin{aligned}
&\begin{quantikz}
\lstick{$\ket{0}$} & \gate{H} & \ctrl{1} & & & \gate{Z} &  \\
\lstick{$\ket{0}$} & & \targ{} & \ctrl{1} & \gate{H} & \meter{} \vcw{-1} \\
\lstick{$\ket{0}$} & & \targ{} & \targ{} & & \meter{} \vcw{1} \\
\lstick{$\ket{0}$} & \gate{H} & \ctrl{-1} & & & \gate{X} & 
\end{quantikz}
\stackrel{\text{(pdm)}}{=}
\begin{quantikz}
\lstick{$\ket{0}$} & \gate{H} & \ctrl{1} & & & \ctrl{1} & & \\
\lstick{$\ket{0}$} & & \targ{} & \ctrl{1} & \gate{H} & \phase{} & \meter{} \\
\lstick{$\ket{0}$} & & \targ{} & \targ{} & & \ctrl{1} & \meter{} \\
\lstick{$\ket{0}$} & \gate{H} & \ctrl{-1} & & & \targ{} & &
\end{quantikz}
\\
&\stackrel{\text{(mb)}}{=}
\begin{quantikz}
\lstick{$\ket{0}$} & \gate{H} & \ctrl{1} & & & \ctrl{1} & & & \\
\lstick{$\ket{0}$} & & \targ{} & \ctrl{1} & \gate{H} & \phase{} & \gate{H} & \meter{H} \\
\lstick{$\ket{0}$} & & \targ{} & \targ{} & & \ctrl{1} & & \meter{} \\
\lstick{$\ket{0}$} & \gate{H} & \ctrl{-1} & & & \targ{} & & & 
\end{quantikz}
\stackrel{\text{(A.2.1)}}{=}
\begin{quantikz}
\lstick{$\ket{0}$} & \gate{H} & \ctrl{1} & & \ctrl{1} & & \\
\lstick{$\ket{0}$} & & \targ{} & \ctrl{1} & \targ{} & \meter{H} \\
\lstick{$\ket{0}$} & & \targ{} & \targ{} & \ctrl{1} & \meter{} \\
\lstick{$\ket{0}$} & \gate{H} & \ctrl{-1} & & \targ{} & &
\end{quantikz}
\end{aligned}
\end{equation*}

At this step we can introduce a very convenient CNOT for free, which does not activate since its control ancilla is in the state $\ket{0}$, and which allows us to use rule (5CX) to replace a series of four CNOTs with just one.

\begin{equation*}
\begin{aligned}
&=
\begin{quantikz}
\lstick{$\ket{0}$} & \gate{H} & & & \ctrl{1} & & \ctrl{1} & & & \\
\lstick{$\ket{0}$} & & & \ctrl{1} & \targ{} & \ctrl{1} & \targ{} & & \meter{H} \\
\lstick{$\ket{0}$} & & \targ{} & \targ{} & & \targ{} & & \ctrl{1} & \meter{} \\
\lstick{$\ket{0}$} & \gate{H} & \ctrl{-1} & & & & & \targ{} & & 
\end{quantikz}
\stackrel{\text{(5CX)}}{=}
\begin{quantikz}
\lstick{$\ket{0}$} & \gate{H} & & \ctrl{2} & & & \\
\lstick{$\ket{0}$} & & & & & \meter{H} \\
\lstick{$\ket{0}$} & & \targ{} & \targ{} & \ctrl{1} & \meter{} \\
\lstick{$\ket{0}$} & \gate{H} & \ctrl{-1} & & \targ{} & & 
\end{quantikz}
\\
&\stackrel{\text{(remove ancilla)}}{=}
\begin{quantikz}
\lstick{$\ket{0}$} & \gate{H} & & \ctrl{1} & & & \\
\lstick{$\ket{0}$} & & \targ{} & \targ{} & \ctrl{1} & \meter{} \\
\lstick{$\ket{0}$} & \gate{H} & \ctrl{-1} & & \targ{} & & 
\end{quantikz}
\stackrel{\text{(5CX)}}{=}
\begin{quantikz}
\lstick{$\ket{0}$} & \gate{H} & & & \ctrl{1} & \ctrl{2} & & \\
\lstick{$\ket{0}$} & & \targ{} & \ctrl{1} & \targ{} & & \meter{} \\
\lstick{$\ket{0}$} & \gate{H} & \ctrl{-1} & \targ{} & & \targ{} & & 
\end{quantikz}
\\
&\stackrel{\text{(mb)}}{=}
\begin{quantikz}
\lstick{$\ket{0}$} & \gate{H} & & & \ctrl{1} & \ctrl{2} & & & \\
\lstick{$\ket{0}$} & & \targ{} & \ctrl{1} & \targ{} & & \gate{H} & \meter{H} \\
\lstick{$\ket{0}$} & \gate{H} & \ctrl{-1} & \targ{} & & \targ{} & & & 
\end{quantikz}
\stackrel{\text{(H)}}{=}
\begin{quantikz}
\lstick{$\ket{0}$} & \gate{H} & & & & \ctrl{1} & \ctrl{2} & & \\
\lstick{$\ket{0}$} & & \targ{} & \ctrl{1} & \gate{H} & \phase{} & & \meter{H} \\
\lstick{$\ket{0}$} & \gate{H} & \ctrl{-1} & \targ{} & & & \targ{} & & &
\end{quantikz}
\end{aligned}
\end{equation*}

We now recognize the presence of an fSWAP, which can be freely removed because it fixes the input state $\ket{00}$, as proven in section A.2.3.

\begin{equation*}
\stackrel{\text{(A.2.3)}}{=}
\begin{quantikz}
\lstick{$\ket{0}$} & \gate{H} & \ctrl{1} & \ctrl{2} & & \\
\lstick{$\ket{0}$} & \gate[2]{\text{fSWAP}} & \phase{} & & \meter{H} \\
\lstick{$\ket{0}$} & \ghost{\text{fSWAP}} & & \targ{} & & 
\end{quantikz}
\stackrel{\text{(A.2.3)}}{=}
\begin{quantikz}
\lstick{$\ket{0}$} & \gate{H} & \ctrl{1} & \ctrl{2} & & \\
\lstick{$\ket{0}$} & & \phase{} & & \meter{H} \\
\lstick{$\ket{0}$} & & & \targ{} & & 
\end{quantikz}
\end{equation*}

Again we can use the state of an ancilla to remove a gate; this time a CZ that does not activate due to one of its inputs being in the state $\ket{0}$. Since the remaining circuit after removing the last non-interacting ancilla is a Bell state initialization, it's retroactively clear that this protocol generates a Bell pair between qubits 1 and 4.

\begin{equation*}
\begin{aligned}
=
\begin{quantikz}
\lstick{$\ket{0}$} & \gate{H} & \ctrl{2} & & \\
\lstick{$\ket{0}$} & & & \meter{H} \\
\lstick{$\ket{0}$} & & \targ{} & & 
\end{quantikz}
\stackrel{\text{(remove ancilla)}}{=}
\begin{quantikz}
\lstick{$\ket{0}$} & \gate{H} & \ctrl{1} & \rstick[2]{$\ket{\Phi^+}$} \\
\lstick{$\ket{0}$} & & \targ{} & 
\end{quantikz}
\end{aligned}
\end{equation*}

\subsection{Algorithms}
Various algorithms are presented in the EQCD notation. Diagrammatic proofs of functionality are provided for the simpler algorithms.

\subsubsection{Phase kickback}
Phase kickback is more of an algorithmic primitive than an algorithm in its own right, but it's so ubiquitous that it deserves its own section. In EQCDs, phase kickback is a consequence of the fact that roots can slide freely along vertical lines.

Let $U = V\text{diag}(e^{i\pi\theta_j})V^\dagger$ be a generic unitary with eigenvectors $\ket{u_j}$ and associated eigenvalues $e^{i\pi\theta_j}$, acting on $n$ qubits collectively labelled $B$. The controlled unitary $CU$ with control qubit labelled $A$ is

\begin{equation*}
CU = 
\begin{quantikz}[column sep=0.1cm,row sep=0.2cm]
    & \ctrl{1} & \\
    & \gate{U} & 
\end{quantikz}
=
e^{i\pi\dyad{1}\otimes\sum \theta_j \dyad{u_j}}= \begin{array}{l}
\verb|if | \ket{1}_A \\
\verb| do | U_B
\end{array} 
=
\begin{quantikz}[column sep=0.35cm,row sep=0.15cm]
    & \ghost{H} & \ctrl{1} & \ctrl{1} & \ctrl{1} & \ctrl{1} & & \\
    & \gate[2]{V^\dagger}\gategroup[2,steps=6,style={dashed,rounded corners, inner xsep=2pt, inner ysep=2pt},label style={label position=below, yshift=-0.5cm}]{$U$} & \octrl[wire style={"\theta_0"}]{1} & \octrl[wire style={"\theta_1"}]{1} & \ctrl[wire style={"\theta_2"}]{1} & \ctrl[wire style={"\theta_3"}]{1} & \gate[2]{V} & \\
    & & \ocontrol{} & \control{} & \ocontrol{} & \control{} & &
\end{quantikz} 
\end{equation*}

Sliding the eigenphases $\theta_j$ up to wire $A$ we get an equivalent form with a very different interpretation: the role of control and target have been reversed. Rather than a single unitary $U$ controlled by the qubit $A$, $CU$ now looks like a series of controlled phase gates $CZ^{\theta_j}$ controlled by $B$.

\begin{equation*}
CU = 
\begin{quantikz}[column sep=0.1cm,row sep=0.2cm]
    & \ctrl{1} & \\
    & \gate{U} & 
\end{quantikz}
=
e^{i\pi\sum\theta_j \dyad{1}\otimes \dyad{u_j}}= \begin{array}{l}
\verb|if | \ket{u_j}_B \\
\verb| do | (Z^{\theta_j})_A
\end{array}
=
\begin{quantikz}[column sep=0.15cm,row sep=0.15cm]
    & \ghost{H} & \phase{\theta_0} & \phase{\theta_1} & \phase{\theta_2} & \phase{\theta_3} & & \\
    & \gate[2]{V^\dagger} & \ocontrol{} & \ocontrol{} & \control{} & \control{} & \gate[2]{V} & \\
    & & \octrl{-2} & \ctrl{-2} & \octrl{-2} & \ctrl{-2} & &
\end{quantikz}
\end{equation*}

Here we've shown $CU$ controlled in the simplest and most common way, by the symbol $\bullet$ on a single wire, but this technique extends straightforwardly to any number of control qubits $A$ in any chosen basis. 

Some of the earliest-discovered examples of quantum algorithms with speedup over their classical counterparts use only phase kickback. We demonstrate three of these `simple kickback algorithms' now: Deutsch (1985), Deutsch-Jozsa (1992), and Bernstein-Vazirani (1997). The analysis given here using EQCDs procedes by translating everything into explicit circuits: we consider how the black boxes must look internally, notice that they contain specific arrangements of vertical lines and numbers, and then obtain a super-classical solution by sliding all of the numbers upward (the `phase kickback step').

\subsubsection{Deutsch}
The goal of Deutsch's algorithm is to distinguish two of the four one-bit functions -- the constant functions outputting 0 and 1 respectively -- from the other two, the balanced functions implementing an identity and bit flip respectively. Outputting a constant is not a unitary operation, so Deutsch's black box is implemented in the usual reversible way as a two-qubit unitary where the value of $f(x)$ is XOR-ed onto the second qubit. In other words, Deutsch's oracle has one of the four forms

\begin{align*}
& \begin{quantikz}[row sep=0.5cm]
    \lstick{$x$} & & \\
    \lstick{$0$} & & \rstick{$0$}
\end{quantikz} & & \begin{quantikz}[row sep=0.3cm]
    \lstick{$x$} & & \\
    \lstick{$0$} & \targ{} & \rstick{$1$}
\end{quantikz} & & \begin{quantikz}
    \lstick{$x$} & \ctrl{1} & \\
    \lstick{$0$} & \targ{} & \rstick{$x$}
\end{quantikz} & & \begin{quantikz}
    \lstick{$x$} & \ctrl{1} & & \\
    \lstick{$0$} & \targ{} & \targ{} & \rstick{$\bar{x}$}
\end{quantikz} & \\
& \text{constant} & & \ \ \text{constant} & & \ \ \text{balanced} & & \quad \text{balanced} & \\
& f(x) = 0 & & \ \ f(x) = 1 & & \ \ f(x) = x & & \quad f(x) = \bar{x} &
\end{align*}

where the second input bit was fixed w.l.o.g. to 0 for simplicity of presentation. These four functions can be encapsulated in a single diagram indexed by the two bits $b$ and $c$, which -- diagrammatically speaking -- are powers of the gates they sit next to.

\begin{equation*}
\begin{quantikz}
    & \ctrl{1} & & \\
    & \targ{b} & \targ{c} & 
\end{quantikz}
=
\begin{quantikz}
    & \phase{b} & & \\
    & \trg{-1} & \targ{c} & 
\end{quantikz}
\end{equation*}

If $b=0$ then $f$ is constant with output $c$, and if $b=1$ then $f$ is balanced -- either the identity function $f(x) = x$ (when $c=0$) or the bit flip function $f(x) = \bar{x}$ (when $c=1$). Deutsch's problem is to determine the value of $b$ using as few calls to this oracle as possible. 

Inputting computational basis states is no different than using a classical strategy and requires two calls due to the inability to distinguish between $b$ and $c$ on the lower wire. Inputting $\ket{\pm}$ states, however, $c$ becomes a global phase on the lower wire, and $b$ can be read off the top wire by activating the CNOT with a $\ket{-}$ on the lower wire. If $b=0$ (i.e. there is no CNOT internal to the oracle and $f$ is constant) the top bit will come out unchanged, and if $b=1$ (i.e. there is a CNOT internal to the oracle and $f$ is balanced) the top bit will be flipped. 

\begin{equation*}
\begin{quantikz}
    \lstick{$\ket{\pm}$} & \phase{b} & & \rstick{$Z^b\ket{\pm}$} \\
    \lstick{$\ket{-}$} & \trg{-1} & \targ{c} & \rstick{$\ket{-}$}
\end{quantikz}
=
\begin{quantikz}
    \lstick{$\ket{\sfrac{0}{1}}$} & \gate{H} & \phase{b} & & \gate{H} & \rstick{$X^b\ket{\sfrac{0}{1}}$} \\
    \lstick{$\ket{1}$} & \gate{H} & \trg{-1} & \targ{c} & \gate{H} & \rstick{$\ket{1}$}
\end{quantikz}
=
\begin{quantikz}
    \lstick{$\ket{\sfrac{0}{1}}$} & \targ{b} & & \rstick{$X^b\ket{\sfrac{0}{1}}$} \\
    \lstick{$\ket{1}$} & \ctrl{-1} & \phase{c} & \rstick{$\ket{1}$}
\end{quantikz}
\end{equation*}

These three diagrams of Deutsch's algorithm are equivalent. The first directly inputs $\ket{\pm}$ states, while the second inputs $\ket{\sfrac{0}{1}}$ followed by a Hadamard gate -- this is the usual solution as presented in textbooks. The third diagram is obtained by multiplying the $H$ gates directly into the oracle. It could be viewed as a modified `control-reversed' oracle, but it is of course locally equivalent to the normal oracle, and gives the clearest picture of why Deutsch's algorithm works. 

Note that either $\ket{0}$ or $\ket{1}$ would work as input on the top wire; typically Deutsch's algorithm is presented using specifically $\ket{0}$. A global phase $(-1)^c$ has been omitted from the final state in every diagram.

\subsubsection{Deutsch-Jozsa}
The second in this series of problems with nearly identical oracles and solutions, the goal here is to distinguish a constant n-bit to 1-bit function from a balanced one using fewer than the optimal classical value of $n/2$ calls to the oracle. 

What does the oracle look like this time? Being a classical binary function, it can only contain $X$ and CNOT gates, and since the $n$ input bits cannot change, it can only have the form

\begin{equation*}
\begin{quantikz}[wire types={q,q,n,q,q}]
    & \ctrl{4} & & & & & \\
    & & \ctrl{3} & & & & \\
    & & & \ddots{} & & & \\
    & & & & \ctrl{1} & & \\
    & \targ{b_1} & \targ{b_2} & & \targ{b_n} & \targ{c} & 
\end{quantikz}
=
\begin{quantikz}[wire types={q,q,n,q,q}]
    & \phase{b_1} & & & & & \\
    & & \phase{b_2} & & & & \\
    & & & \ddots & & & \\
    & & & & \phase{b_n} & & \\
    & \trg{-4} & \trg{-3} & & \trg{-1} & \targ{c} &
\end{quantikz}
\end{equation*}

If $f$ is constant it cannot depend on the input $x$, meaning every $b_k = 0$, i.e. the oracle contains no vertical lines connecting the bottom wire to any of the first $n$. If $f$ is balanced, at least one $b_k = 1$, meaning there exists at least one vertical line internal to the oracle.

The problem is solved in one call by the same strategy as before. Inputting $\ket{\pm}$ states on every wire is equivalent to calling a `control-reversed' oracle which outputs the string $b$ on the first $n$ wires when $\ket{1}$ is given as input.

\begin{equation*}
\begin{quantikz}[wire types={q,q,n,q,q}]
    \lstick{$\ket{\pm}$} & \phase{b_1} & & & & & \rstick{$Z^{b_1}\ket{\pm}$} \\
    \lstick{$\ket{\pm}$} & & \phase{b_2} & & & & \rstick{$Z^{b_2}\ket{\pm}$} \\
    \lstick{\vdots} & & & \ddots{} & & & \rstick{\vdots} \\
    \lstick{$\ket{\pm}$} & & & & \phase{b_n} & & \rstick{$Z^{b_n}\ket{\pm}$} \\
    \lstick{$\ket{-}$} & \trg{-4} & \trg{-3} & & \trg{-1} & \targ{c} & \rstick{$\ket{-}$}
\end{quantikz}
=
\begin{quantikz}[wire types={q,q,n,q,q}]
    \lstick{$\ket{\sfrac{0}{1}}$} & \targ{b_1} & & & & & \rstick{$X^{b_1}\ket{\sfrac{0}{1}}$} \\
    \lstick{$\ket{\sfrac{0}{1}}$} & & \targ{b_2} & & & & \rstick{$X^{b_2}\ket{\sfrac{0}{1}}$} \\
    \lstick{\vdots} & & & \ddots & & & \rstick{\vdots} \\
    \lstick{$\ket{\sfrac{0}{1}}$} & & & & \targ{b_n} & & \rstick{$X^{b_n}\ket{\sfrac{0}{1}}$} \\
    \lstick{$\ket{1}$} & \ctrl{-4} & \ctrl{-3} & & \ctrl{-1} & \phase{c} & \rstick{$\ket{1}$} 
\end{quantikz}
\end{equation*}

Measuring the first $n$ wires in the $\pm$ basis (equivalently, the control-reversed oracle in the computational basis) returns the bit string $b = b_1b_2...b_n$, where $b_k = 1$ if the $k$th input state was flipped, and 0 otherwise. If $f$ is constant then $b$ is the all-zero string; if $f$ is balanced then $b$ is some other non-zero string. As before the typical input given in textbooks on the first $n$ wires is the all $+$ string, though any $\pm$ bitstring would work.

Although the Deutsch-Jozsa problem only asks one to distinguish $b=0$ from $b\neq 0$, this algorithm actually solves a slightly more general problem since it returns all of the bits of $b$. In the case that $b\neq 0$, this identifies specifically \textit{which} balanced function the oracle implements.\footnote{at least up to the single bit $c$, which can be determined with a second call to the oracle.}

\subsubsection{Bernstein-Vazirani}
The Bernstein-Vazirani problem is a special case of the Deutsch-Jozsa problem, although it is usually phrased in terms of extracting a `secret string' $s$ from an oracle which XORs $s$ with the given input string $x$. The Bernstein-Vazirani oracle is equivalent to the Deutsch-Jozsa oracle with $c=0$.

\begin{equation*}
\begin{quantikz}[wire types={q,q,n,q,q}]
    & \ctrl{4} & & & & \\
    & & \ctrl{3} & & & \\
    & & & \ddots & & \\
    & & & & \ctrl{1} & \\
    & \targ{s_1} & \targ{s_2} & & \targ{s_n} &
\end{quantikz}
=
\begin{quantikz}[wire types={q,q,n,q,q}]
    & \phase{s_1} & & & & \\
    & & \phase{s_2} & & & \\
    & & & \ddots & & \\
    & & & & \phase{s_n} & \\
    & \trg{-4} & \trg{-3} & & \trg{-1} &
\end{quantikz}
\end{equation*}

As with Deutsch's problem, if we are restricted to only classical inputs (i.e. computational basis states) this problem requires $n$ calls to the oracle to read off the string $s$. Inputting $X$ basis states with a $\ket{-}$ on the last wire will control all the CNOTs on the bottom wire, activating them all and implementing $Z^{s_k}$ on the $k$th wire. This conditionally flips each of the first $n$ bits and allows you to read off the string $s$ from these qubits using just one call to the oracle.

\begin{equation*}
\begin{quantikz}[wire types={q,q,n,q,q}]
    \lstick{$\ket{\pm}$} & \phase{s_1} & & & & \rstick{$Z^{s_1}\ket{\pm}$} \\
    \lstick{$\ket{\pm}$} & & \phase{s_2} & & & \rstick{$Z^{s_2}\ket{\pm}$} \\
    \lstick{\vdots} & & & \ddots & & \rstick{\vdots} \\
    \lstick{$\ket{\pm}$} & & & & \phase{s_n} & \rstick{$Z^{s_n}\ket{\pm}$} \\
    \lstick{$\ket{-}$} & \trg{-4} & \trg{-3} & & \trg{-1} & \rstick{$\ket{-}$}
\end{quantikz}
=
\begin{quantikz}[wire types={q,q,n,q,q}]
    \lstick{$\ket{\sfrac{0}{1}}$} & \targ{s_1} & & & & \rstick{$X^{s_1}\ket{\sfrac{0}{1}}$} \\
    \lstick{$\ket{\sfrac{0}{1}}$} & & \targ{s_2} & & & \rstick{$X^{s_2}\ket{\sfrac{0}{1}}$} \\
    \lstick{\vdots} & & & \ddots & & \rstick{\vdots} \\
    \lstick{$\ket{\sfrac{0}{1}}$} & & & & \targ{s_n} & \rstick{$X^{s_n}\ket{\sfrac{0}{1}}$} \\
    \lstick{$\ket{1}$} & \ctrl{-4} & \ctrl{-3} & & \ctrl{-1} & \rstick{$\ket{1}$} 
\end{quantikz}
\end{equation*}

Again, the simplest and typical presentation is to input $\ket{+}$ on the first $n$ wires, although any $\pm$ bitstring would also work.

\subsubsection{Grover}
Here we simply translate Grover's algorithm into the notation of EQCDs and do not prove its functionality. Grover's algorithm involves repeated calls to two unitaries: the phase oracle, whose internal parameters we are trying to recover, and the `diffusion operator' which is a fixed, known operation. The phase oracle applies a $-1$ phase to each among some set of marked computational basis states which we are searching for; here we will assume the hardest case where there is only one marked state with binary expansion $x = x_1x_2...x_n$. The Grover oracle has the form 

\begin{equation*}
\begin{quantikz}[wire types={q,q,n,n,q,q}]
    & \ctrl{5} & \\
    & \ocontrol{} & \\
    \lstick{\vdots} & & \\
    & & \\
    & \control{} & \\
    & \control{} & 
\end{quantikz}
= e^{i\pi\dyad{10...11}}
\end{equation*}

where the symbol on the $k$th wire is $\circ$ if $x_k = 0$ and $\bullet$ if $x_k = 1$. We can summarize this oracle as

\begin{equation*}
\begin{quantikz}[wire types={q,q,n,n,q}]
    & \measure{x_1}\vqw{4} & \\
    & \measure{x_2} & \\
    \lstick{\vdots} & & \\
    & & \\
    & \measure{x_n} & 
\end{quantikz}
= e^{i\pi\dyad{x_1x_2...x_n}} = e^{i\pi\dyad{x}} = \eye - 2\dyad{x}
\end{equation*}

where 
\begin{quantikz}
    \measure{0}
\end{quantikz}
=
\begin{quantikz}
    \ocontrol{}
\end{quantikz}
and
\begin{quantikz}
    \measure{1}
\end{quantikz}
=
\begin{quantikz}
    \control{}
\end{quantikz}.

The diffusion operator is $2\dyad{+^n} - \eye$, where $\ket{+^n} = \ket{+}^{\otimes n}$. This is equivalent to $\eye - 2\dyad{+^n}$ up to a global phase of $-1$, and we will use this operator instead because it has only one eigenvalue equal to $-1$ (namely, the one for $\ket{+^n}$) and $2^n-1$ eigenvalues equal to 1 rather than the other way around. The diffusion operator is

\begin{equation*}
\begin{quantikz}[wire types={q,q,n,n,q}]
    & \targm{}\vqw{1} & \\
    & \targm{}\vqw{3} & \\
    \lstick{\vdots} & & \\
    & & \\
    & \targm{} & 
\end{quantikz}
= e^{i\pi\dyad{+^n}} = \eye - 2\dyad{+^n}
\end{equation*}

Grover's algorithm is then

\begin{equation*}
\begin{quantikz}[wire types={q,q,n,n,q}]
    \lstick{$\ket{0}$} & \gate{H} & & \measure{x_1}\vqw{4}\gategroup[5,steps=2,style={dashed,rounded corners}]{repeat $\lfloor\sqrt{2^n}\pi/4\rfloor$ times} & \targm{}\vqw{1} & & \rstick{$\ket{\Tilde{x_1}}$} \\
    \lstick{$\ket{0}$} & \gate{H} & & \measure{x_2} & \targm{}\vqw{3} & & \rstick{$\ket{\Tilde{x_2}}$}\\
    \lstick{\vdots} & & & & \\
    & & & & \\
    \lstick{$\ket{0}$} & \gate{H} & & \measure{x_n} & \targm{} & & \rstick{$\ket{\Tilde{x_n}}$}
\end{quantikz}
\end{equation*}

where $\ket{\Tilde{x_k}}$ is an approximation to $\ket{x_k}$.

\subsubsection{Phase Estimation}
We present the phase estimate algorithm in the notation of EQCDs, and point out a small property its circuit has (that it can be partly `interleaved'), which is demonstrated with an example calculation. 

We begin with the standard phase estimation setup. Let $U\ket{u} = e^{2\pi i \phi}\ket{u}$. In this first step we remove the $U$-gadget by sliding the eigenphases upward (phase kickback). The $\ket{u}$ wire bundle can then be removed.

\begin{equation*}
\begin{quantikz}[wire types={q,q,n,q,q,n}]
\lstick{$\ket{0}$} & \gate{H} & & & & & \ctrl{5} & \gate[5]{\text{QFT}^\dagger} & \rstick{$\ket{\phi_1}$} \\
\lstick{$\ket{0}$} & \gate{H} & & & & \ctrl{4} & & \ghost{\text{QFT}^\dagger} & \rstick{$\ket{\phi_2}$} \\
\vdots & & & & \iddots & & & & \\
\lstick{$\ket{0}$} & \gate{H} & & \ctrl{2} & & & & \ghost{\text{QFT}^\dagger} & \rstick{$\ket{\phi_{n-1}}$} \\
\lstick{$\ket{0}$} & \gate{H} & \ctrl{1} & & & & & \ghost{\text{QFT}^\dagger} & \rstick{$\ket{\phi_n}$} \\
& \lstick{$\ket{u}$} & \gate{U}\setwiretype{q} & \gate{U^2} & \hdots & \gate{U^{2^{n-1}}} & \gate{U^{2^n}} & \rstick{$\ket{u}$}
\end{quantikz}
=
\begin{quantikz}[wire types={q,q,n,q,q}]
\lstick{$\ket{0}$} & \gate{H} & \phase{2^n\phi} & & \gate[5]{\text{QFT}^\dagger} & \rstick{$\ket{\phi_1}$} \\
\lstick{$\ket{0}$} & \gate{H} & \phase{2^{n-1}\phi} & & \ghost{\text{QFT}^\dagger} & \rstick{$\ket{\phi_2}$} \\
\vdots & & & & & & & & \\
\lstick{$\ket{0}$} & \gate{H} & \phase{4\phi} & & \ghost{\text{QFT}^\dagger} & \rstick{$\ket{\phi_{n-1}}$} \\
\lstick{$\ket{0}$} & \gate{H} & \phase{2\phi} & & \ghost{\text{QFT}^\dagger} & \rstick{$\ket{\phi_n}$}
\end{quantikz}
\end{equation*}

In the next steps we introduce the standard circuit implementation of the QFT and untwist the left side to remove the SWAPs.

\begin{equation*}
\hspace{-2cm} 
=
\begin{quantikz}[wire types={q,q,n,q,q}]
\lstick{$\ket{0}$} & \gate{H} & \phase{2^n\phi} & \widepermute{6,5,4,3,2,1} & & & & & & & & \wireoverride{n} & \phase{-2^{1-n}} & \phase{-2^{2-n}} & \phase{-2^{3-n}} & & \wireoverride{n} & \phase{-1/2} & \gate{H} & \rstick{$\ket{\phi_1}$} \\
\lstick{$\ket{0}$} & \gate{H} & \phase{2^{n-1}\phi} & & & & & & & & & \wireoverride{n} & & & & & \wireoverride{n} & \ctrl{-1} & & \rstick{$\ket{\phi_2}$} \\
\vdots & & & & & & & & & & \iddots & & & & & \iddots & & & & & & & \\
\lstick{$\ket{0}$} & \gate{H} & \phase{8\phi} & & & & & \phase{-1/4} & \phase{-1/2} & \gate{H} & & \wireoverride{n} & & & \ctrl{-3} & & & & & \rstick{$\ket{\phi_{n-2}}$} \\
\lstick{$\ket{0}$} & \gate{H} & \phase{4\phi} & & & \phase{-1/2} & \gate{H} & & \ctrl{-1} & & & \wireoverride{n} & & \ctrl{-4} & & & & & & \rstick{$\ket{\phi_{n-1}}$} \\
\lstick{$\ket{0}$} & \gate{H} & \phase{2\phi} & & \gate{H} & \ctrl{-1} & & \ctrl{-2} & & & & \wireoverride{n} & \ctrl{-5} & & & & & & & \rstick{$\ket{\phi_n}$}
\end{quantikz}
\end{equation*}

\begin{equation*}
\hspace{-2cm} 
\stackrel{\text{(t)}}{=}
\begin{quantikz}[wire types={q,q,n,q,q}]
\lstick{$\ket{0}$} & \gate{H} & \phase{2\phi} & & & & & & & & \wireoverride{n} & \phase{-2^{1-n}} & \phase{-2^{2-n}} & \phase{-2^{3-n}} & & \wireoverride{n} & \phase{-1/2} & \gate{H} & \rstick{$\ket{\phi_1}$} \\
\lstick{$\ket{0}$} & \gate{H} & \phase{4\phi} & & & & & & & & \wireoverride{n} & & & & & \wireoverride{n} & \ctrl{-1} & & \rstick{$\ket{\phi_2}$} \\
\vdots & & & & & & & & & \iddots & & & & & \iddots & & & & & & & \\
\lstick{$\ket{0}$} & \gate{H} & \phase{2^{n-2}\phi} & & & & \phase{-1/4} & \phase{-1/2} & \gate{H} & & \wireoverride{n} & & & \ctrl{-3} & & & & & \rstick{$\ket{\phi_{n-2}}$} \\
\lstick{$\ket{0}$} & \gate{H} & \phase{2^{n-1}\phi} & & \phase{-1/2} & \gate{H} & & \ctrl{-1} & & & \wireoverride{n} & & \ctrl{-4} & & & & & & \rstick{$\ket{\phi_{n-1}}$} \\
\lstick{$\ket{0}$} & \gate{H} & \phase{2^n\phi} & \gate{H} & \ctrl{-1} & & \ctrl{-2} & & & & \wireoverride{n} & \ctrl{-5} & & & & & & & \rstick{$\ket{\phi_n}$}
\end{quantikz}
\end{equation*}

In the next step we combine all of the pairs of Hadamard gates, changing every sandwiched $\bullet$ into a $\oplus$.

\begin{equation*}
\stackrel{\text{(H)}}{=}
\begin{quantikz}[wire types={q,q,n,q,q}]
\lstick{$\ket{0}$} & \targ{2\phi} & & & & & \wireoverride{n} & \targ{-2^{1-n}} & \targ{-2^{2-n}} & \targ{-2^{3-n}} & & \wireoverride{n} & \targ{-1/2} & \rstick{$\ket{\phi_1}$} \\
\lstick{$\ket{0}$}  & \targ{4\phi} & & & & & \wireoverride{n} & & & & & \wireoverride{n} & \ctrl{-1} & \rstick{$\ket{\phi_2}$} \\
\vdots & & & & & \iddots & & & & & \iddots & & & & & & \\
\lstick{$\ket{0}$} & \targ{2^{n-2}\phi} & & \targ{-1/4} & \targ{-1/2} & & \wireoverride{n} & & & \ctrl{-3} & & & & \rstick{$\ket{\phi_{n-2}}$} \\
\lstick{$\ket{0}$} & \targ{2^{n-1}\phi} & \targ{-1/2} & & \ctrl{-1} & & \wireoverride{n} & & \ctrl{-4} & & & & & \rstick{$\ket{\phi_{n-1}}$} \\
\lstick{$\ket{0}$} & \targ{2^n\phi} & \ctrl{-1} & \ctrl{-2} & & & \wireoverride{n} & \ctrl{-5} & & & & & & \rstick{$\ket{\phi_n}$}
\end{quantikz}
\end{equation*}

The stack of $\oplus^{2^k\phi}$ symbols on the left are our oracle calls, and the upward cascade of CNOTs is what remains of the inverse QFT. Note that since $\oplus$ commutes with itself but not with $\bullet$, the oracle calls can be pushed partly into the circuit until they encounter a $\bullet$. In this way the oracle calls can be partly interleaved with the inverse QFT. In the case that $\phi$ is representable in $n$ bits without truncation or rounding, this interleaving procedure results in each output qubit being set sequentially, starting with the lowest (least significant) qubit and proceeding upwards. We demonstrate this with an explicit example for the case $n=3$ and $\phi = 5/8$, which has an exact solution since $5/8 = 0.101$ terminates after 3 bits.

\begin{equation*}
\begin{quantikz}
\lstick{$\ket{0}$} & \targ{2\phi} & & \targ{-1/4} & \targ{-1/2} & \\
\lstick{$\ket{0}$} & \targ{4\phi} & \targ{-1/2} & & \ctrl{-1} & \\
\lstick{$\ket{0}$} & \targ{8\phi} & \ctrl{-1} & \ctrl{-2} & & 
\end{quantikz}
\stackrel{\text{(interleave)}}{=}
\begin{quantikz}
\lstick{$\ket{0}$} & & & & \targ{-1/4} & \targ{-1/2} & \targ{2\phi} & \\
\lstick{$\ket{0}$} & & \targ{-1/2} & \targ{4\phi} & & \ctrl{-1} & & \\
\lstick{$\ket{0}$} & \targ{8\phi} & \ctrl{-1} & & \ctrl{-2} & & & 
\end{quantikz}
\end{equation*}

Substituting in $\phi=5/8$ we notice that the oracle calls take the forms $\oplus^{2\phi} = \oplus^{5/4}$, $\oplus^{4\phi} = \oplus^{5/2} = \oplus^{1/2}$, and $\oplus^{8\phi} = \oplus^5 = \oplus$. The latter two identities hold because $\oplus^2 = \eye$. With this we begin calculating from the left. 

\begin{equation*}
\stackrel{(\phi=5/8)}{=}
\begin{quantikz}
\lstick{$\ket{0}$} & & & & \targ{-1/4} & \targ{-1/2} & \targ{5/4} & \\
\lstick{$\ket{0}$} & & \targ{-1/2} & \targ{1/2} & & \ctrl{-1} & & \\
\lstick{$\ket{0}$} & \targ{} & \ctrl{-1} & & \ctrl{-2} & & & 
\end{quantikz}
=
\begin{quantikz}
\lstick{$\ket{0}$} & & & \targ{-1/4} & \targ{-1/2} & \targ{5/4} & \\
\lstick{$\ket{0}$} & \targ{-1/2} & \targ{1/2} & & \ctrl{-1} & & \\
\lstick{$\ket{1}$} & \ctrl{-1} & & \ctrl{-2} & & & 
\end{quantikz}
\end{equation*}

The least significant bit is now set and will not change throughout the rest of the calculation since there are no more $\oplus$ symbols on its wire; there are only $\bullet$ symbols, which it is a (partial) eigenvector of. Now we consider the next two gates. Since the lowest wire is in the state $\ket{1}$, the first CNOT$^{-1/2}$ will deterministically apply a $\oplus^{-1/2}$ to the second wire. This is then immediately undone by the following $\oplus^{1/2}$.

\begin{equation*}
=
\begin{quantikz}
\lstick{$\ket{0}$} & & & \targ{-1/4} & \targ{-1/2} & \targ{5/4} & \\
\lstick{$\ket{0}$} & \targ{-1/2} & \targ{1/2} & & \ctrl{-1} & & \\
\lstick{$\ket{1}$} & & & \ctrl{-2} & & & 
\end{quantikz}
=
\begin{quantikz}
\lstick{$\ket{0}$} & \targ{-1/4} & \targ{-1/2} & \targ{5/4} & \\
\lstick{$\ket{0}$} & & \ctrl{-1} & & \\
\lstick{$\ket{1}$} & \ctrl{-2} & & & 
\end{quantikz}
\end{equation*}

The second bit of $\phi$ is now set as well since there are no more $\oplus$ symbols on its wire either. For the most significant bit of $\phi$ we apply the same logic. The CNOT$^{-1/4}$ will activate but the CNOT$^{-1/2}$ will not, leaving a net $\oplus^{-1/4}$ on the top wire. Finally we combine this with the last oracle call $\oplus^{5/4}$, leaving a simple $\oplus$ on the top wire and completing the calculation.

\begin{equation*}
=
\begin{quantikz}
\lstick{$\ket{0}$} & \targ{-1/4} & \targ{5/4} & \\
\lstick{$\ket{0}$} & & & \\
\lstick{$\ket{1}$} & \ghost{H} & & 
\end{quantikz}
=
\begin{quantikz}
\lstick{$\ket{0}$} & \targ{} & \\
\lstick{$\ket{0}$} & & \\
\lstick{$\ket{1}$} & \ghost{H} & 
\end{quantikz}
=
\begin{quantikz}[wire types={n,n,n}]
\lstick{$\ket{1}$} & \\
\lstick{$\ket{0}$} & \ghost{H} \\
\lstick{$\ket{1}$} & 
\end{quantikz}
\end{equation*}

\section{Soundness proofs}\label{appendix:soundness}

In this section we provide soundness proofs of the rules and their flipflop symmetry principles.

\subsection{Soundness of the rules}

We provide algebraic proofs of the rules, using whichever mathematical formalism allows for the simplest proof (matrix or exponential). BCH denotes usage of the Baker-Campbell-Hausdorff formula, which here never introduces a commutator series because the involved Hamiltonians always commute. Arbitrary unitaries $U$ have Hamiltonians denoted by $H_U$ (not to be confused with Hadamard, denoted by $H$ sans subscript).

\begin{equation*}
\text{\textbf{(c)}:  }
\begin{bmatrix}
   1 & 0 \\
   0 & e^{i\pi\beta} 
\end{bmatrix}
\begin{bmatrix}
   1 & 0 \\
   0 & e^{i\pi\alpha} 
\end{bmatrix}
=
\begin{bmatrix}
   1 & 0 \\
   0 & e^{i\pi(\alpha+\beta)} 
\end{bmatrix}
=
\begin{bmatrix}
   1 & 0 \\
   0 & e^{i\pi\alpha} 
\end{bmatrix}
\begin{bmatrix}
   1 & 0 \\
   0 & e^{i\pi\beta} 
\end{bmatrix}
\end{equation*}

\begin{equation*}
\text{\textbf{(ac)}:  }
\begin{bmatrix}
   0 & 1 \\
   1 & 0 
\end{bmatrix}
\begin{bmatrix}
   1 & 0 \\
   0 & e^{i\pi\alpha} 
\end{bmatrix}
\begin{bmatrix}
   0 & 1 \\
   1 & 0 
\end{bmatrix}
=
\begin{bmatrix}
   e^{i\pi\alpha} & 0 \\
   0 & 1 
\end{bmatrix}
\end{equation*}

\begin{equation*}
\text{\textbf{(H)}:  } 
H e^{i\pi\alpha\dyad{1}} H = e^{i\pi\alpha H\dyad{1}H} = e^{i\pi\alpha \dyad{-}}
\text{  and  }
H e^{i\pi\alpha\dyad{0}} H = e^{i\pi\alpha H\dyad{0}H} = e^{i\pi\alpha \dyad{+}}
\end{equation*}

\begin{equation*}
\text{\textbf{(i)}:  } 
Z^2 = H^2 = \eye
\end{equation*}

\begin{equation*}
\text{\textbf{(C0)}:  } 
e^{i\pi(\zero \otimes H_U)} = e^{i\pi\zero} = e^\zero = \eye
\end{equation*}

\begin{equation*}
\text{\textbf{(d)}:  } 
\begin{bmatrix}
   \eye & \zero \\
   \zero & V
\end{bmatrix}
\begin{bmatrix}
   \eye & \zero \\
   \zero & U
\end{bmatrix}
=
\begin{bmatrix}
   \eye & \zero \\
   \zero & VU
\end{bmatrix}
\end{equation*}

\begin{equation*}
\begin{aligned}
\text{\textbf{(e)}:  } 
\eye \otimes e^{i\pi H_U} =
\begin{bmatrix}
   U & \zero \\
   \zero & U
\end{bmatrix}
= e^{i\pi \eye \otimes H_U} &= e^{i\pi (\dyad{0}+\dyad{1}) \otimes H_U} \stackrel{\text{(BCH)}}{=} e^{i\pi \dyad{0}\otimes H_U}e^{i\pi \dyad{1}\otimes H_U} \\
\Big( &= e^{i\pi (\dyad{+}+\dyad{-}) \otimes H_U} \stackrel{\text{(BCH)}}{=} e^{i\pi \dyad{+}\otimes H_U}e^{i\pi \dyad{-}\otimes H_U}\Big)
\end{aligned}
\end{equation*}

\begin{equation*}
\text{\textbf{(f)}:  } 
\begin{bmatrix}
   \eye & \zero \\
   \zero & V
\end{bmatrix}
\begin{bmatrix}
   U & \zero \\
   \zero & \eye
\end{bmatrix}
=
\begin{bmatrix}
   U & \zero \\
   \zero & V
\end{bmatrix}
=
\begin{bmatrix}
   U & \zero \\
   \zero & \eye
\end{bmatrix}
\begin{bmatrix}
   \eye & \zero \\
   \zero & V
\end{bmatrix}
\end{equation*}

Rule (g) is easily proved using matrices, but since it's derived from (d-f), each proven above, we will prove (g) graphically using (d-f). By assumption, 

\begin{equation}\label{rule g assumption}
\begin{quantikz}
    & \gate{U} & \gate{V} &
\end{quantikz}
=
\begin{quantikz}
    & \gate{V'} & \gate{U} &
\end{quantikz}
\end{equation}

The proof of rule (g) is then as follows.

\begin{equation*}
\begin{aligned}
\text{\textbf{(g)}:  } 
\begin{quantikz}
    & & \ctrl{1} & \\
    & \gate{U} & \gate{V} &
\end{quantikz}
&\stackrel{\text{(e)}}{=}
\begin{quantikz}
    & \octrl{1} & \ctrl{1} & \ctrl{1} & \\
    & \gate{U} & \gate{U} & \gate{V} &
\end{quantikz}
\stackrel{\text{(d)}}{=}
\begin{quantikz}
    & \octrl{1} & \ctrl{1} & \\
    & \gate{U} & \gate{U\circ V} &
\end{quantikz}
\stackrel{(\ref{rule g assumption})}{=}
\begin{quantikz}
    & \octrl{1} & \ctrl{1} & \\
    & \gate{U} & \gate{V'\circ U} &
\end{quantikz} \\
&\stackrel{\text{(d)}}{=}
\begin{quantikz}
    & \octrl{1} & \ctrl{1} & \ctrl{1} & \\
    & \gate{U} & \gate{V'} & \gate{U} &
\end{quantikz}
\stackrel{\text{(f)}}{=}
\begin{quantikz}
    & \ctrl{1} & \octrl{1} & \ctrl{1} & \\
    & \gate{V'} & \gate{U} & \gate{U} & 
\end{quantikz}
\stackrel{\text{(e)}}{=}
\begin{quantikz}
    & \ctrl{1} & & \\
    & \gate{V'} & \gate{U} & 
\end{quantikz}
\end{aligned}
\end{equation*}

\begin{equation*}
\text{\textbf{(s)}:  } 
\begin{bmatrix}
   1 & 0 & 0 & 0 \\
   0 & 0 & 1 & 0 \\
   0 & 1 & 0 & 0 \\
   0 & 0 & 0 & 1 
\end{bmatrix}
=
\begin{bmatrix}
   1 & 0 & 0 & 0 \\
   0 & 1 & 0 & 0 \\
   0 & 0 & 0 & 1 \\
   0 & 0 & 1 & 0 
\end{bmatrix}
\begin{bmatrix}
   1 & 0 & 0 & 0 \\
   0 & 0 & 0 & 1 \\
   0 & 0 & 1 & 0 \\
   0 & 1 & 0 & 0 
\end{bmatrix}
\begin{bmatrix}
   1 & 0 & 0 & 0 \\
   0 & 1 & 0 & 0 \\
   0 & 0 & 0 & 1 \\
   0 & 0 & 1 & 0 
\end{bmatrix}
\end{equation*}

Let $U_{AB} := e^{i\pi\sum\lambda_j\ket{u_j}_{AB}\bra{u_j}_{AB}}$ be a gate acting on qubits $A,B$ with eigenvectors $\ket{u_j}_{AB}$. Then SWAP has the action $\text{SWAP}\ket{u_j}_{AB} = \ket{u_j}_{BA}$, and
\begin{equation*}
\begin{aligned}
\text{\textbf{(t)}:  }
&\text{SWAP}\cdot e^{i\pi\sum\lambda_j\ket{u_j}_{AB}\bra{u_j}_{AB}} = e^{i\pi\sum\lambda_j\text{SWAP}\ket{u_j}_{AB}\bra{u_j}_{AB}} \\
&= e^{i\pi\sum\lambda_j\ket{u_j}_{BA}\bra{u_j}_{AB}} = e^{i\pi\sum\lambda_j\ket{u_j}_{BA}\bra{u_j}_{BA}\text{SWAP}} 
= e^{i\pi\sum\lambda_j\ket{u_j}_{BA}\bra{u_j}_{BA}}\cdot \text{SWAP}.
\end{aligned}
\end{equation*}

\begin{equation*}
\text{\textbf{(n+)}:  } 
e^{i\pi\beta H_U}e^{i\pi\alpha H_U} \stackrel{\text{(BCH)}}{=} e^{i\pi(\alpha+\beta) H_U}
\end{equation*}

\begin{equation*}
\text{\textbf{(n*)}:  } 
e^{i\pi\big(\alpha H_U \otimes \beta H_V\big)} = e^{i\pi\big(\alpha\beta H_U \otimes H_V\big)} = e^{i\pi\big(H_U \otimes \alpha\beta H_V\big)}
\end{equation*}

For soundness of rule (pdm) we refer the reader to the proofs given in the quantum computing literature \cite{nielsen_quantum_2010}. For rule (mb), let $U\ket{j} = \ket{u_j}$ and let the $U$-measurement box denote a projective measurement in the basis $\{\ket{u_j}\}$ with outcome probabilities $\braket{u_j}{\psi}\braket{\psi}{u_j}$ on the measured state $\ket{\psi}$. The input state on the RHS is $U^\dagger\ket{\psi}$, and the measurement basis is the computational basis $\{\ket{j}\}$, so the measurement outcome probabilities are $\bra{j}U^\dagger\ket{\psi}\bra{\psi}U\ket{j} = \braket{u_j}{\psi}\braket{\psi}{u_j}$, as claimed.

\subsection{Soundness of the symmetry principles}

The extra rules generated by the symmetry principles can be explicitly constructed by choosing $\bm{\oplus} = X$ and $\bm{\bullet} = Z$ as explicit one-qubit flip maps for flipping $Z$-type and $X$-type symbols respectively, and $H$ as the explicit one-qubit flop map. Flipped and flopped variants of each rule can then be constructed by conjugating said rule with these one-qubit flipflop operators on each individual wire containing symbols. 

The fact that $X$, $Z$, and $H$ are valid one-qubit flip and flop maps is a consequence of the one-qubit rules (c-i) which were proven above. To show this we start by noting that $X$, $Z$, and $H$ are Hermitian, and therefore the dagger that would normally appear when conjugating $U: A \mapsto UAU^\dagger$ can be dropped; we will therefore say `conjugate by' as a shorthand for `multiply both sides by'.

First we demonstrate that $H$ is a one-qubit flop map. The proof of rule (H) with $\alpha=1$ says
\begin{equation*}
    H(Z)H = X \quad \text{and} \quad H(-Z)H = -X.
\end{equation*}
Conjugating both equations by $H$ and appealing to rule (i) gives
\begin{equation*}
    H(X)H = Z \quad \text{and} \quad H(-X)H = -Z.
\end{equation*}
(Parentheses added for emphasis on what is being conjugated.) These four equations together say that conjugating by $H$ has the action $(\bm{\bullet}\leftrightarrow\bm{\oplus}$ and $\bm{\circ}\leftrightarrow\bm{\ominus})$ as desired.

For $X$, substituting $\alpha=1$ into the proof of rule (ac) above gives $X(Z)X=-Z$. Since $X$ is both unitary and Hermitian, it is also involutory, so conjugating by $X$ gives $Z = X(-Z)X$. $X$ being an involution also gives $X(X)X = X$, and together with rule (c) gives $X(-X)X = (-X)XX = -X$. That is,
\begin{equation*}
\begin{aligned}
    &X(Z)X= -Z \quad &\text{and} \quad \quad &X(-Z)X = Z \\
    &X(X)X = X \quad &\text{and} \quad \quad &X(-X)X = -X.
\end{aligned}
\end{equation*}
These four equations together say that conjugating by $X$ has the action $(\bm{\bullet}\leftrightarrow\bm{\circ})$ as desired.

For $Z$, conjugating all four $X$ equations above by $H$ and appealing to rule (H)\footnote{For example: $HXZXH = H(-Z)H \quad \stackrel{\text{(H)}}{\implies} \quad ZHZXH = -X \quad \stackrel{\text{(H)}}{\implies} \quad ZXHXH = -X \quad \stackrel{\text{(H)}}{\implies} \quad ZXZ = -X.$} gives
\begin{equation*}
\begin{aligned}
    &Z(Z)Z= Z \quad &\text{and} \quad \quad &Z(-Z)Z = -Z \\
    &Z(X)Z = -X \quad &\text{and} \quad \quad &Z(-X)Z = X.
\end{aligned}
\end{equation*}
These four equations together say that conjugating by $Z$ has the action $(\bm{\oplus}\leftrightarrow\bm{\ominus})$ as desired.

We can now apply these one-qubit flipflop maps to rules from the ruleset to generate flipflopped rules. This means conjugating every wire containing explicit symbols by $Z$, $X$, $H$, or any combination thereof. The result is exactly what you get from the flipflop symmetry principles as origininally stated. We show a few examples and cut the rest for brevity since the proofs are all identical -- conjugate, apply the identities above to push the flipflop operators through the circuit step by step, and then cancel them at the end.

\begin{equation*}
\begin{aligned}
\text{\textbf{(c')} (c, flopped)} \begin{quantikz}
    & \gate{H} & \phase{\alpha} & \ophase{\beta} & \gate{H} &
\end{quantikz}
&=
\begin{quantikz}
    & \gate{H} & \ophase{\beta} & \phase{\alpha} & \gate{H} &
\end{quantikz} \\
\begin{quantikz}
    & \targ{\alpha} & \gate{H} & \ophase{\beta} & \gate{H} &
\end{quantikz}
&=
\begin{quantikz}
    & \targm{\beta} & \gate{H} & \phase{\alpha} & \gate{H} &
\end{quantikz} \\
\begin{quantikz}
    & \targ{\alpha} & \targm{\beta} & \gate{H} & \gate{H} &
\end{quantikz}
&=
\begin{quantikz}
    & \targm{\beta} & \targ{\alpha} & \gate{H} & \gate{H} &
\end{quantikz} \\
=
\begin{quantikz}
    & \targ{\alpha} & \targm{\beta} & 
\end{quantikz}
&=
\begin{quantikz}
    & \targm{\beta} & \targ{\alpha} & 
\end{quantikz}
\end{aligned}
\end{equation*}

Rules involving vertical lines are made easy to flipflop by rule (g). The commutation properties of the flipflop operators given above, together with rule (g), mean that they simply ignore the presence of vertical lines when moving through a symbol joined vertically to something else. The symbol in question is flipflopped exactly as if it were not connected to anything by a vertical line.

\begin{equation*}
\begin{aligned}
\text{\textbf{(d')} (d, flipped)}
\begin{quantikz}[column sep=0.1cm,row sep=0.25cm]
    & \targ{} & \ctrl{1} & \ctrl{1} & \targ{} & \\
    & & \gate{U} & \gate{V} & &
\end{quantikz}
&=
\begin{quantikz}[column sep=0.1cm,row sep=0.3cm]
    & \targ{} & \ctrl{1} & \targ{} & \\
    & & \gate{U\circ V} & &
\end{quantikz} \\
\begin{quantikz}[column sep=0.1cm,row sep=0.25cm]
    & \octrl{1} & \targ{} & \ctrl{1} & \targ{} & \\
    & \gate{U} & & \gate{V} & &
\end{quantikz}
&=
\begin{quantikz}[column sep=0.1cm,row sep=0.3cm]
    & \octrl{1} & \targ{} & \targ{} & \\
    & \gate{U\circ V} & & &
\end{quantikz} \\
\begin{quantikz}[column sep=0.1cm,row sep=0.25cm]
    & \octrl{1} & \octrl{1} & \targ{} & \targ{} & \\
    & \gate{U} & \gate{V} & & &
\end{quantikz}
&=
\begin{quantikz}[column sep=0.1cm,row sep=0.3cm]
    & \octrl{1} & \\
    & \gate{U\circ V} &
\end{quantikz} \\
\begin{quantikz}[column sep=0.1cm,row sep=0.25cm]
    & \octrl{1} & \octrl{1} & \\
    & \gate{U} & \gate{V} &
\end{quantikz}
&=
\begin{quantikz}[column sep=0.1cm,row sep=0.3cm]
    & \octrl{1} & \\
    & \gate{U\circ V} &
\end{quantikz}
\end{aligned}
\end{equation*}

\begin{equation*}
\begin{aligned}
\text{\textbf{(d'')} (d, flopped then flipped)}
\begin{quantikz}[column sep=0.1cm,row sep=0.25cm]
    & \gate{H} & \octrl{1} & \octrl{1} & \gate{H} & \\
    & & \gate{U} & \gate{V} & &
\end{quantikz}
&=
\begin{quantikz}[column sep=0.1cm,row sep=0.3cm]
    & \gate{H} & \octrl{1} & \gate{H} & \\
    & & \gate{U\circ V} & &
\end{quantikz} \\
\begin{quantikz}[column sep=0.1cm,row sep=0.25cm]
    & \trgm{1} & \gate{H} & \octrl{1} & \gate{H} & \\
    & \gate{U} & & \gate{V} & &
\end{quantikz}
&=
\begin{quantikz}[column sep=0.1cm,row sep=0.3cm]
    & \trgm{1} & \gate{H} & \gate{H} & \\
    & \gate{U\circ V} & & &
\end{quantikz} \\
\begin{quantikz}[column sep=0.1cm,row sep=0.25cm]
    & \trgm{1} & \trgm{1} & \gate{H} & \gate{H} & \\
    & \gate{U} & \gate{V} & & &
\end{quantikz}
&=
\begin{quantikz}[column sep=0.1cm,row sep=0.3cm]
    & \trgm{1} & \\
    & \gate{U\circ V} &
\end{quantikz} \\
\begin{quantikz}[column sep=0.1cm,row sep=0.25cm]
    & \trgm{1} & \trgm{1} & \\
    & \gate{U} & \gate{V} &
\end{quantikz}
&=
\begin{quantikz}[column sep=0.1cm,row sep=0.3cm]
    & \trgm{1} & \\
    & \gate{U\circ V} &
\end{quantikz}
\end{aligned}
\end{equation*}

\begin{equation*}
\begin{aligned}
\text{\textbf{(e')} (e, flopped)}
\begin{quantikz}
    & \gate{H} & & \gate{H} & \\
    & & \gate{U} & &
\end{quantikz}
&=
\begin{quantikz}
    & \gate{H} & \octrl{1} & \ctrl{1} & \gate{H} & \\
    & & \gate{U} & \gate{U} & &
\end{quantikz} \\
\begin{quantikz}
    & & \\
    & \gate{U} &
\end{quantikz}
&=
\begin{quantikz}[column sep=0.1cm,row sep=0.25cm]
    & \trgm{1} & \gate{H} & \ctrl{1} & \gate{H} & \\
    & \gate{U} & & \gate{U} & &
\end{quantikz} \\
\begin{quantikz}
    & & \\
    & \gate{U} &
\end{quantikz}
&=
\begin{quantikz}[column sep=0.1cm,row sep=0.25cm]
    & \trgm{1} & \trg{1} & \gate{H} & \gate{H} & \\
    & \gate{U} & \gate{U} & & &
\end{quantikz} \\
\begin{quantikz}
    & & \\
    & \gate{U} &
\end{quantikz}
&=
\begin{quantikz}[column sep=0.1cm,row sep=0.25cm]
    & \trgm{1} & \trg{1} & \\
    & \gate{U} & \gate{U} &
\end{quantikz}
\end{aligned}
\end{equation*}

...and so on for all remaining flipflopped rules. The only rule that deserve special mention with regard to flipflop symmetry is rule (pdm) because its involves a measurement (as does (mb), but that rule has no symbols to flipflop). Rule (pdm) has only one symbol, and it sits on a wire that is later measured and hence destroyed. This rule can still be flipflopped by applying just one flipflop operator on the left and appealing to rule (mb), but the resulting flipflopped rules have no more semantic content than rule (pdm) and amount to a complete relabelling of the conditions and measurement outcomes. We show two examples here.

\begin{equation*}
\begin{aligned}
\text{\textbf{(pdm')} (pdm, flipped)}
\begin{quantikz}
    & \targ{} & \meter{}\vcw{1} \\
    & & \gate{U} &
\end{quantikz}
&=
\begin{quantikz}
    & \targ{} & \ctrl{1} & \meter{} \\
    & & \gate{U} & 
\end{quantikz}
\\
\begin{quantikz}
    & \meter{X}\vcw{1} \\
    & \gate{U} &
\end{quantikz}
&=
\begin{quantikz}
    & \octrl{1} & \targ{} & \meter{} \\
    & \gate{U} & & 
\end{quantikz}
\\
\begin{quantikz}
    & \meter{X}\vcw{1} \\
    & \gate{U} &
\end{quantikz}
&=
\begin{quantikz}
    & \octrl{1} & \meter{X} \\
    & \gate{U} & 
\end{quantikz}
\end{aligned}
\end{equation*}

Rule (pdm') gives a rule saying that measuring in the $\{\ket{1},\ket{0}\}$ basis followed by performing a classically controlled $U$ if the outcome is $\ket{0}$ is equivalent to first applying a quantum controlled operation implementing the logic \verb|if |$\ket{0}_A$\verb| do |$U_B$, followed by measuring in the $\{\ket{1},\ket{0}\}$ basis. This rule is rule (pdm) with all labels flipped $(0\leftrightarrow 1)$.

\begin{equation*}
\begin{aligned}
\text{\textbf{(pdm'')} (pdm, flopped)}
\begin{quantikz}
    & \gate{H} & \meter{}\vcw{1} \\
    & & \gate{U} &
\end{quantikz}
&=
\begin{quantikz}
    & \gate{H} & \ctrl{1} & \meter{} \\
    & & \gate{U} & 
\end{quantikz}
\\
\begin{quantikz}
    & \meter{H}\vcw{1} \\
    & \gate{U} &
\end{quantikz}
&=
\begin{quantikz}
    & \trg{1} & \gate{H} & \meter{} \\
    & \gate{U} & & 
\end{quantikz}
\\
\begin{quantikz}
    & \meter{H}\vcw{1} \\
    & \gate{U} &
\end{quantikz}
&=
\begin{quantikz}
    & \trg{1} & \meter{H} \\
    & \gate{U} & 
\end{quantikz}
\end{aligned}
\end{equation*}

Rule (pdm'') gives a rule saying that measuring in the $\{\ket{+},\ket{-}\}$ basis followed by performing a classically controlled $U$ if the outcome is $\ket{-}$ is equivalent to first applying a quantum controlled operation implementing the logic \verb|if |$\ket{-}_A$\verb| do |$U_B$, followed by measuring in the $\{\ket{+},\ket{-}\}$ basis. This rule is rule (pdm) with all labels flopped $(0\leftrightarrow +$ and $1\leftrightarrow -)$.

As a last note which is simply interesting, applying the flipflop operators to SWAP reveals that it takes the same form regardless which symbols are used to define it.

\begin{equation*}
\begin{aligned}
\begin{quantikz}[column sep=0.2cm]
& \permute{2,1} & \ghost{H} \\
& & \ghost{H}
\end{quantikz}
&=
\begin{quantikz}
& \ctrl{1} & \targ{} & \ctrl{1} & \\
& \targ{} & \ctrl{-1} & \targ{} &
\end{quantikz}
=
\begin{quantikz}
& \targ{} & \ctrl{1} & \targ{} & \\
& \ctrl{-1} & \targ{} & \ctrl{-1} &
\end{quantikz}
=
\begin{quantikz}
& \octrl{1} & \targ{} & \octrl{1} & \\
& \targ{} & \octrl{-1} & \targ{} &
\end{quantikz}
=
\begin{quantikz}
& \targ{} & \octrl{1} & \targ{} & \\
& \octrl{-1} & \targ{} & \octrl{-1} &
\end{quantikz} \\
&=
\begin{quantikz}
& \ctrl{1} & \targm{} & \ctrl{1} & \\
& \targm{} & \ctrl{-1} & \targm{} & 
\end{quantikz}
=
\begin{quantikz}
& \targm{} & \ctrl{1} & \targm{} & \\
& \ctrl{-1} & \targm{} & \ctrl{-1} &
\end{quantikz}
=
\begin{quantikz}
& \octrl{1} & \targm{} & \octrl{1} & \\
& \targm{} & \octrl{-1} & \targm{} & 
\end{quantikz}
=
\begin{quantikz}
& \targm{} & \octrl{1} & \targm{} & \\
& \octrl{-1} & \targm{} & \octrl{-1} &
\end{quantikz}
\end{aligned}
\end{equation*}

\section{On the notion of control}\label{appendix:onthenotionofcontrol}
The connection between controlled unitaries and tensor product Hamiltonians is elaborated on. A formal definition for a unitary gate being controlled is introduced, based on a notion we call partial eigenvectors, and some small proofs are provided.

\subsection{Semi-classical interpretations of controlled operations}

Typically in literature controlled unitaries are defined in the following way: a controlled unitary is a unitary that is conditionally applied to a set of target qubits depending on the state of one or more control qubits, denoted by $\bm{\bullet}$ (or $\bm{\circ}$). The CNOT, for example, has one control qubit and one target qubit, and has the action $\ket{a}\ket{b}\mapsto\ket{a}\ket{a\oplus b}$ on the computational basis. The first qubit is the control and does not change, while the second target qubit is where the action happens. While this simple description is sufficient to uniquely define CNOT, it neglects some subtleties CNOT has, like the fact that CNOT also has the seemingly reversed action $\ket{a}\ket{b}\mapsto\ket{a\oplus b}\ket{b}$ on the diagonal $\ket{+},\ket{-}$ basis, calling into question the meaning of `control' and `target'.

The interpretation of CNOT as a controlled-$X$ gate with a control wire and a target wire is as old as the gate itself, as this is how CNOT was first described when it was introduced \cite{feynman_quantum_1986}. We will call descriptions of the kind (i.e. descriptions of the kind $\verb|if | \ket{1}\verb| do | X$) `semi-classical interpretations' of quantum controlled gates; `classical' because they come originally from classical controlled operations $\verb|if x do y|$, but only `semi-classical' because they are not unique. Unlike their classical analogues, the notions of control and target are dependent on a local choice of basis, and quantum controlled unitaries always admit multiple such interpretations.


To motivate this section further, we show an explicit calculation converting CNOT's first semi-classical logic to its second. The fact that every step of this calculation is connected by an equals sign means that these two semi-classical interpretations are not different things, nor related by some kind of unitary transformation: they are \textit{mathematically equivalent}. Either one uniquely defines the action of CNOT, despite the fact that they interchange the roles of the control and target qubits.

CNOT $=\begin{array}{l}
\verb|if | \ket{1}_A \\
\verb| do | X_B
\end{array} = \dyad{0}\otimes\eye + \dyad{1}\otimes X = \dyad{0}\otimes(\dyad{+}+\dyad{-}) + \dyad{1}\otimes(\dyad{+}-\dyad{-})$
\hspace{2cm}$= (\dyad{0}+\dyad{1})\otimes\dyad{+} + (\dyad{0}-\dyad{1})\otimes\dyad{-} = \eye\otimes\dyad{+} + Z\otimes \dyad{-} = \begin{array}{l}
\verb|if | \ket{-}_B \\
\verb| do | Z_A
\end{array}$

It is interesting to note that the state activating the control wire ($\ket{1}$, or $\ket{-}$) appears in the CNOT's Hamiltonian as a projector, and that the resulting conditional operation performed on the other wire can be obtained from simply deleting the `control' state from the joint Hamiltonian. That is, 
\begin{equation*}
\begin{aligned}
    \text{CNOT} = e^{i\pi\big(\dyad{1}\otimes\dyad{-}\big)} &= \text{ project onto } \dyad{1}_A \text{ then apply } (e^{i\pi\dyad{-}})_B \\
    \Bigg(&= \text{ project onto } \dyad{-}_B \text{ then apply } (e^{i\pi\dyad{1}})_A\Bigg)
\end{aligned}
\end{equation*}
This cheater's way of reading off semi-classical interpretations from exponential forms (or their diagrams) turns out to be general. This is how all of the semi-classical interpretations shown in this paper were found. For each possible choice of control qubits, one simply `picks out' those qubits from the Hamiltonian and deletes them from the exponent. The resulting gate that is conditionally applied depending on the deleted piece is the exponential of whatever remains. In this way semi-classical interpretations can be easily read off from the associated exponential form, including seemingly complicated ones like the gate shown in section 2.2 that is a $\bm{\circ}$, $U$, and $\bm{\oplus}$ joined by a vertical line, which has seven complicated-looking semi-classical interpretations.

The notion of control qubits as those qubits that do not change (in some chosen basis) will form the base of our definition of a controlled operation. To formalize this we introduce the concept of a partial eigenbasis.\footnote{Everything shown in this section only applies to 2-qubit operations. It remains for future work to extend these definitions to $n$ qubits.}

\subsection{Partial eigenbases}

Let $U$ be a two-qubit gate acting on qubits $A$ and $B$, and let $\ket{\psi\phi}$ be a separable state. $\ket{\psi}_A$ is a \textit{partial eigenvector} of $U$ if
\begin{align*}
    U\ket{\psi\phi}_{AB} = \ket{\psi}_A \otimes U_\psi\ket{\phi}_B \quad \quad \forall\ket{\phi},
\end{align*}
where $U_\psi$ is a single qubit unitary acting on qubit $B$ which may depend on $\ket{\psi}_A$.


Now let $\{\ket{a_0},\ket{a_1}\}$ be a basis on qubit $A$. If both $\ket{a_0}$ and $\ket{a_1}$ are partial eigenvectors of $U$, we say $\{\ket{a_0},\ket{a_1}\}$ is a \textit{partial eigenbasis} of $U$. This means
\begin{align*}
    U\ket{a_0\phi} = \ket{a_0} \otimes U_{a_0}\ket{\phi} \quad \textbf{and} \quad U\ket{a_1\phi} = \ket{a_1} \otimes U_{a_1}\ket{\phi},
\end{align*}
meaning for a generic state $\ket{\psi} = \alpha\ket{a_0b_0}+\beta\ket{a_0b_1}+\gamma\ket{a_1b_0}+\delta\ket{a_1b_1}$ in an arbitrary basis $\{\ket{b_0},\ket{b_1}\}$ on qubit $B$,
\begin{align*}
    U\ket{\psi} &= \ket{a_0} \otimes U_{a_0}(\alpha \ket{b_0} + \beta\ket{b_1}) + \ket{a_1} \otimes U_{a_1}(\gamma \ket{b_0} + \delta \ket{b_1}).
\end{align*}
This means $U$ acts like a block diagonal matrix with respect to $\{\ket{a_0},\ket{a_1}\}$, by which we mean $U$ has the form
\begin{align}
U = (V\otimes\eye)\begin{bmatrix}
    U_{a_0} & 0 \\
    0 & U_{a_1}
\end{bmatrix}(V^\dagger\otimes\eye) \quad \text{where} \quad V = \begin{bmatrix}
    | & | \\
    \ket{a_0} & \ket{a_1} \\
    | & |
\end{bmatrix}.\label{eq1}
\end{align}
Now we can formally define a controlled operation. We'll say $U$ is \textit{controlled} if a partial eigenbasis of $U$ exists. In particular, $U$ is controlled by qubit $A$ in the basis $\{\ket{a_0},\ket{a_1}\}$ if $\{\ket{a_0},\ket{a_1}\}$ is a partial eigenbasis of $U$. $U$ is \textit{uncontrolled} if no partial eigenbasis of $U$ exists. Let's look at an example. 

Claim 1: CNOT$_{AB}$ is controlled (in the computational basis $\{\ket{0},\ket{1}\}$ by qubit $A$). 
\begin{proof}
To demonstrate this we must show that $\{\ket{0},\ket{1}\}$ on qubit $A$ is a partial eigenbasis of CNOT.
\begin{align*}
    \text{CNOT}\ket{0\phi} = \ket{0} \otimes \eye\ket{\phi} \quad \textbf{and} \quad \text{CNOT}\ket{1\phi} = \ket{1} \otimes X\ket{\phi} \quad \forall\ket{\phi}
\end{align*}
This completes the proof. In terms of equation (\ref{eq1}), $V = \eye$, $U_0 = \eye$, and $U_1 = X$.
\begin{align*}
\text{CNOT} = (\eye\otimes\eye)\begin{bmatrix}
    \eye & 0 \\
    0 & X
\end{bmatrix}(\eye\otimes\eye).
\end{align*}
\end{proof}

Claim 2: CNOT$_{AB}$ is controlled in the $X$ basis $\{\ket{+},\ket{-}\}$ by qubit $B$.
\begin{proof}
To demonstrate this we must show that $\{\ket{+},\ket{-}\}$ on qubit $B$ is a partial eigenbasis of CNOT. It will be easier to show this with the upside-down CNOT, CNOT$_{BA}$, since our definitions -- particularly the ease of seeing a block diagonal form -- use the first qubit. Of course the definition extends in the obvious way to the second qubit if you prefer not to flip your gates upside-down.
\begin{align*}
    \text{CNOT}_{BA}\ket{+\phi} = \ket{+} \otimes \eye\ket{\phi} \quad \textbf{and} \quad \text{CNOT}_{BA}\ket{-\phi} = \ket{-} \otimes Z\ket{\phi} \quad \forall\ket{\phi}
\end{align*}
This completes the proof, since $\{\ket{+},\ket{-}\}$ on qubit $B$ being a partial eigenbasis of CNOT$_{BA}$ is equivalent to $\{\ket{+},\ket{-}\}$ on qubit $B$ being a partial eigenbasis of CNOT$_{AB}$. In terms of equation \ref{eq1}, $V = H$, $U_+ = \eye$, and $U_- = Z$.
\begin{align*}
\text{CNOT}_{BA} = (H\otimes\eye)\begin{bmatrix}
    \eye & 0 \\
    0 & Z
\end{bmatrix}(H\otimes\eye).
\end{align*}

\end{proof}

\subsection{Various proofs}

In this section provide various small proofs regarding the properties of controlled operations. 

\begin{lemma}
$V$ and $W$ are partial eigenbases of $U$ on qubits 1 and 2 respectively iff $V\otimes W$ is an eigenbasis of $U$.
\end{lemma}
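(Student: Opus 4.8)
The plan is to prove the two directions separately, working directly from the defining equation of a partial eigenvector, $U\ket{\psi\phi}_{AB} = \ket{\psi}_A\otimes U_\psi\ket{\phi}_B$, rather than from the block-diagonal form (\ref{eq1}), since the defining equation keeps the bookkeeping cleanest. I write $V = \{\ket{a_0},\ket{a_1}\}$ for the candidate partial eigenbasis on qubit 1 and $W = \{\ket{b_0},\ket{b_1}\}$ for the one on qubit 2, so that $V\otimes W = \{\ket{a_i b_j}\}_{i,j\in\{0,1\}}$.

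For the easy direction (if $V\otimes W$ is an eigenbasis then $V,W$ are partial eigenbases), I would assume $U\ket{a_i b_j} = \lambda_{ij}\ket{a_i b_j}$ for all $i,j$, where each $\lambda_{ij}$ is a phase since $U$ is unitary. To see that $\ket{a_0}$ is a partial eigenvector, I expand an arbitrary $\ket{\phi}_B = c_0\ket{b_0}+c_1\ket{b_1}$ and compute $U\ket{a_0\phi}$ by linearity; the eigenvalue property pulls $\ket{a_0}$ out of every term, leaving $\ket{a_0}\otimes(c_0\lambda_{00}\ket{b_0}+c_1\lambda_{01}\ket{b_1})$. The residual map $\ket{b_0}\mapsto\lambda_{00}\ket{b_0}$, $\ket{b_1}\mapsto\lambda_{01}\ket{b_1}$ is diagonal in the $W$ basis with unit-modulus entries, hence a legitimate single-qubit unitary $U_{a_0}$. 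This shows $\ket{a_0}$, and identically $\ket{a_1}$, is a partial eigenvector, so $V$ is a partial eigenbasis on qubit 1; the same argument with the two qubits interchanged gives that $W$ is a partial eigenbasis on qubit 2.

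For the harder direction (if $V$ and $W$ are partial eigenbases then $V\otimes W$ is an eigenbasis), I would write out both hypotheses: $U\ket{a_i\phi} = \ket{a_i}\otimes U_{a_i}\ket{\phi}$ for all $\ket{\phi}$ and $i$, and $U\ket{\chi b_j} = (U'_{b_j}\ket{\chi})\otimes\ket{b_j}$ for all $\ket{\chi}$ and $j$. Evaluating both on the common product state $\ket{a_i b_j}$ gives the single identity $\ket{a_i}\otimes U_{a_i}\ket{b_j} = (U'_{b_j}\ket{a_i})\otimes\ket{b_j}$. This is the crux: it equates two nonzero simple tensors, the left one with first factor forced to lie along $\ket{a_i}$ and the right one with second factor forced to lie along $\ket{b_j}$. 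I would invoke uniqueness of the factorization of a nonzero product state (each factor determined up to a reciprocal scalar) to conclude $U'_{b_j}\ket{a_i}\parallel\ket{a_i}$, and comparing second factors then forces a common proportionality constant $\lambda_{ij}$, so that $U_{a_i}\ket{b_j} = \lambda_{ij}\ket{b_j}$. Substituting back yields $U\ket{a_i b_j} = \lambda_{ij}\ket{a_i b_j}$, so all four product states are eigenvectors; since $\{\ket{a_i}\}$ and $\{\ket{b_j}\}$ are bases, their tensor products form a basis of the two-qubit space, and $V\otimes W$ is an eigenbasis.

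The step I expect to be the main obstacle is precisely this factorization argument: making rigorous that $\ket{a_i}\otimes U_{a_i}\ket{b_j} = (U'_{b_j}\ket{a_i})\otimes\ket{b_j}$ forces both factors to be eigenstates sharing a single eigenvalue. I would handle it by the elementary fact that a nonzero simple tensor $\ket{x}\otimes\ket{y}$ pins down $\ket{x}$ and $\ket{y}$ each up to scalars whose product is fixed; comparing first factors gives the proportionality $U'_{b_j}\ket{a_i}\parallel\ket{a_i}$, and comparing second factors reconciles the scalars into the common phase $\lambda_{ij}$. Everything else reduces to routine linearity and the observation that a diagonal matrix of phases is unitary.
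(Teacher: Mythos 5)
Your proof is correct, and its key direction runs along a genuinely different track from the paper's. For the implication from two partial eigenbases to a full eigenbasis, the paper passes to matrix form: writing $U$ in the basis $V\otimes W$, the hypothesis on qubit 1 makes the conjugated matrix block diagonal in rows/columns $(1,2),(3,4)$, the hypothesis on qubit 2 makes it block diagonal in rows/columns $(1,3),(2,4)$, and a matrix carrying both block structures has all off-diagonal entries zero. You instead stay coordinate-free: evaluating both defining equations on the common product states $\ket{a_i b_j}$ and invoking uniqueness of the factorization of a nonzero simple tensor forces $U_{a_i}\ket{b_j}=\lambda_{ij}\ket{b_j}$ and hence $U\ket{a_i b_j}=\lambda_{ij}\ket{a_i b_j}$. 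Your factorization step is sound — both sides are nonzero product vectors, and comparing first and second factors pins down the single common scalar $\lambda_{ij}$ — so nothing is missing. As for what each route buys: yours produces the eigenvalues explicitly, needs no matrix bookkeeping, and would read identically in any dimension; the paper's explicit block-diagonal forms and the conditional unitaries $U_{v_i}=W\,\mathrm{diag}(e^{i\pi\alpha},e^{i\pi\beta})\,W^\dagger$ are reused essentially verbatim in Lemma 2 and its corollaries, so its heavier notation pays off downstream. For the converse direction the two proofs coincide in substance — both read off the conditional unitaries as diagonal phase matrices in the other factor's basis — with the paper phrasing the regrouping via circuit diagrams and you via linearity in the defining equation.
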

\begin{proof}
$(\implies)$ Let $V$ and $W$ be partial eigenbases of $U$ on qubits 1 and 2.

\begin{equation*}
\begin{aligned}
\begin{quantikz}
    & \gate[2]{U} & \\
    & \ghost{U} & 
\end{quantikz}
&=
\begin{quantikz}
    & \gate{V^\dagger} & \octrl{1} & \ctrl{1} & \gate{V} & \\
    & & \gate{U_{v_0}} & \gate{U_{v_1}} & &
\end{quantikz}
=
\begin{quantikz}
    & & \gate{U_{w_0}} & \gate{U_{w_1}} & & \\
    & \gate{W^\dagger} & \octrl{-1} & \ctrl{-1} & \gate{W} &
\end{quantikz}
\\
\begin{quantikz}
    & \gate{V} & \gate[2]{U} & \gate{V^\dagger} & \\
    & \gate{W} & \ghost{U} & \gate{W^\dagger} & 
\end{quantikz}
&=
\begin{quantikz}
    & & \octrl{1} & \ctrl{1} & & \\
    & \gate{W} & \gate{U_{v_0}} & \gate{U_{v_1}} & \gate{W^\dagger} &
\end{quantikz}
=
\begin{quantikz}
    & \gate{V} & \gate{U_{w_0}} & \gate{U_{w_1}} & \gate{V^\dagger} & \\
    & & \octrl{-1} & \ctrl{-1} & &
\end{quantikz}
\end{aligned}
\end{equation*}

Both sides of the last equals slign are block diagonal matrices, but in different rows and columns. The first is block diagonal in row/column $(1,2)$ and $(3,4)$, while the second is block diagonal in row/column $(1,3)$ and $(2,4)$. Explicitly,
\begin{align*}
(\eye\otimes W^\dagger)\begin{bmatrix}
(U_{v_0})_{00} & (U_{v_0})_{01} & 0 & 0 \\
(U_{v_0})_{10} & (U_{v_0})_{11} & 0 & 0 \\
0 & 0 & (U_{v_1})_{00} & (U_{v_1})_{01} \\
0 & 0 & (U_{v_1})_{10} & (U_{v_1})_{11}
\end{bmatrix}(\eye\otimes W) 
\\
=
(V^\dagger\otimes\eye)\begin{bmatrix}
(U_{w_0})_{00} & 0 & (U_{w_0})_{01} & 0 \\
0 & (U_{w_1})_{00} & 0 & (U_{w_1})_{01} \\
(U_{w_0})_{10} & 0 & (U_{w_0})_{11} & 0 \\
0 & (U_{w_1})_{10} & 0 & (U_{w_1})_{11}
\end{bmatrix}(V\otimes\eye)
\end{align*}

This implies all off-diagonal elements in both matrix products are zero, implying the off-diagonal elements of $(V^\dagger\otimes W^\dagger)U(V\otimes W)$ are zero, meaning the columns of $V\otimes W$ are eigenvectors of $U$. That is,

\begin{equation*}
\begin{quantikz}
    & \gate[2]{U} & \\
    & \ghost{U} & 
\end{quantikz}
=
\begin{quantikz}
    & \gate{V^\dagger} & \octrl[wire style={"\alpha"}]{1} & \octrl[wire style={"\beta"}]{1} & \ctrl[wire style={"\gamma"}]{1} & \ctrl[wire style={"\delta"}]{1} & \gate{V} & \\
    & \gate{W^\dagger} & \ocontrol{} & \control{} & \ocontrol{} & \control{} & \gate{W} &
\end{quantikz}
=
(V\otimes W)
\begin{bmatrix}
    e^{i\pi\alpha} & & & \\
    & e^{i\pi\beta} & & \\
    & & e^{i\pi\gamma} & \\
    & & & e^{i\pi\delta}
\end{bmatrix}
(V^\dagger\otimes W^\dagger)
\end{equation*}

i.e. $V\otimes W$ is an eigenbasis of $U$.


$(\impliedby)$ Let $V\otimes W$ be an eigenbasis of $U$.

\begin{equation*}
\begin{quantikz}
    & \gate[2]{U} & \\
    & \ghost{U} & 
\end{quantikz}
=
\begin{quantikz}
    & \gate{V^\dagger} & \octrl[wire style={"\alpha"}]{1} & \octrl[wire style={"\beta"}]{1} & \ctrl[wire style={"\gamma"}]{1} & \ctrl[wire style={"\delta"}]{1} & \gate{V} & \\
    & \gate{W^\dagger} & \ocontrol{} & \control{} & \ocontrol{} & \control{} & \gate{W} &
\end{quantikz}
\end{equation*}

Inserting $W^\dagger W$ on the lower wire and interpreting the eigenphases as sitting on the lower wire, we find that $V$ that is a partial eigenbasis of $U$ with corresponding conditional operations $U_{v_0} := W\begin{bmatrix}e^{i\pi\alpha} & \\ & e^{i\pi\beta}\end{bmatrix}W^\dagger$ and $U_{v_1} := W\begin{bmatrix}e^{i\pi\gamma} & \\ & e^{i\pi\delta}\end{bmatrix}W^\dagger$.

\begin{equation*}
\begin{quantikz}
    & \gate{V^\dagger} & \octrl{1} & \octrl{1} & & & & \ctrl{1} & \ctrl{1} & \gate{V} & \\
    & \gate{W^\dagger}\gategroup[1,steps=4,style={dashed,rounded corners},label style={label position=below}]{$=: U_{v_0}$} & \ophase{\alpha} & \phase{\beta} & \gate{W} & & \gate{W^\dagger}\gategroup[1,steps=4,style={dashed,rounded corners},label style={label position=below}]{$=: U_{v_1}$} & \ophase{\gamma} & \phase{\delta} & \gate{W} &
\end{quantikz}
=
\begin{quantikz}
    & \gate{V^\dagger} & \octrl{1} & \ctrl{1} & \gate{V} & \\
    & & \gate{U_{v_0}} & \gate{U_{v_1}} & &
\end{quantikz}
\end{equation*}

Similarly for the top wire: inserting $V^\dagger V$ and regrouping the diagonal gates, we find that $W$ is a partial eigenbasis of $U$ with corresponding conditional operations $U_{w_0} := W\begin{bmatrix}e^{i\pi\alpha} & \\ & e^{i\pi\gamma}\end{bmatrix}W^\dagger$ and $U_{w_1} := W\begin{bmatrix}e^{i\pi\beta} & \\ & e^{i\pi\delta}\end{bmatrix}W^\dagger$.

\begin{equation*}
\begin{quantikz}
    & \gate{V^\dagger}\gategroup[1,steps=4,style={dashed,rounded corners}]{$=: U_{w_0}$} & \ophase{\alpha} & \phase{\gamma} & \gate{V} & & \gate{V^\dagger}\gategroup[1,steps=4,style={dashed,rounded corners}]{$=: U_{w_1}$} & \ophase{\beta} & \phase{\delta} & \gate{V} & \\
    & \gate{W^\dagger} & \octrl{-1} & \octrl{-1} & & & & \ctrl{-1} & \ctrl{-1} & \gate{W} &
\end{quantikz}
=
\begin{quantikz}
    & & \gate{U_{w_0}} & \gate{U_{w_1}} & & \\
    & \gate{W^\dagger} & \octrl{-1} & \ctrl{-1} & \gate{W} &
\end{quantikz}
\end{equation*}

\end{proof}

\begin{lemma}
    $V\otimes W$ is an eigenbasis of $U$ iff $V$ is a partial eigenbasis of $U$ and the conditional operations $U_{v_0},U_{v_1}$ it implements satisfy $[U_{v_0},U_{v_1}]=0$.
\end{lemma}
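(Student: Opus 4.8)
The plan is to reduce this to Lemma 1 together with the elementary fact that two commuting unitaries are simultaneously diagonalizable. First I would pin down the reading of the statement: since $W$ appears on the left of the biconditional but not on the right, the claim is really that \emph{there exists} a $W$ for which $V\otimes W$ is an eigenbasis of $U$ if and only if $V$ is a partial eigenbasis of $U$ whose conditional operations satisfy $[U_{v_0},U_{v_1}]=0$. With that understood, both directions fall out of the block-diagonal form (\ref{eq1}).

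For the forward direction, I would suppose $V\otimes W$ is an eigenbasis of $U$ for some $W$. Lemma 1 immediately gives that $V$ is a partial eigenbasis of $U$, and its $(\impliedby)$ half exhibits the associated conditional operations explicitly as $U_{v_0}=W\,\text{diag}(e^{i\pi\alpha},e^{i\pi\beta})\,W^\dagger$ and $U_{v_1}=W\,\text{diag}(e^{i\pi\gamma},e^{i\pi\delta})\,W^\dagger$, where $e^{i\pi\alpha},\dots,e^{i\pi\delta}$ are the eigenvalues of $U$. Both $U_{v_0}$ and $U_{v_1}$ are then diagonalized by the same $W$, so they are simultaneously diagonal and therefore commute, i.e. $[U_{v_0},U_{v_1}]=0$.

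For the reverse direction, I would suppose $V$ is a partial eigenbasis of $U$ with commuting conditionals. By (\ref{eq1}) we have $U=(V\otimes\eye)\,\text{diag}(U_{v_0},U_{v_1})\,(V^\dagger\otimes\eye)$. Since $U_{v_0}$ and $U_{v_1}$ are commuting unitaries, hence commuting normal operators, the spectral theorem supplies a single unitary $W$ that simultaneously diagonalizes them, so that $W^\dagger U_{v_0}W$ and $W^\dagger U_{v_1}W$ are both diagonal. Conjugating $U$ by $V\otimes W$ then gives $(V\otimes W)^\dagger U (V\otimes W)=(\eye\otimes W^\dagger)\,\text{diag}(U_{v_0},U_{v_1})\,(\eye\otimes W)=\text{diag}(W^\dagger U_{v_0}W,\,W^\dagger U_{v_1}W)$, which is diagonal; hence $V\otimes W$ is an eigenbasis of $U$.

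The only routine bookkeeping is keeping the tensor factors straight when manipulating (\ref{eq1}). The one genuinely load-bearing step — and the point I would flag as the crux — is the equivalence between $[U_{v_0},U_{v_1}]=0$ and the existence of a common $W$ diagonalizing both, which is precisely simultaneous diagonalizability of commuting unitaries; everything else is a direct application of Lemma 1 and the block form.
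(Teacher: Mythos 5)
Your proof is correct and takes essentially the same approach as the paper: both directions rest on regrouping the factorization $U=(V\otimes W)\,\Lambda\,(V^\dagger\otimes W^\dagger)$ versus $U=(V\otimes\eye)\,\mathrm{diag}(U_{v_0},U_{v_1})\,(V^\dagger\otimes\eye)$, with the crux being exactly the equivalence you flag — commuting unitaries admit a common diagonalizer $W$, and conversely a common $W$ forces $[U_{v_0},U_{v_1}]=0$. The only cosmetic differences are that your forward direction cites Lemma 1 where the paper redoes the regrouping inline, and your reading of the implicit existential quantifier on $W$ matches what the paper's reverse direction actually constructs.
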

\begin{proof}
$(\implies)$ Let $V\otimes W$ be an eigenbasis of $U$. 
\begin{align*}
U = (V\otimes W)
\begin{bmatrix}
    e^{i\pi\alpha} & & & \\
    & e^{i\pi\beta} & & \\
    & & e^{i\pi\gamma} & \\
    & & & e^{i\pi\delta}
\end{bmatrix}
(V^\dagger\otimes W^\dagger) \\
= (V\otimes \eye)
\begin{bmatrix}
    W\begin{bmatrix}
        e^{i\pi\alpha} & \\
        & e^{i\pi\beta}
    \end{bmatrix}W^\dagger & \\
    & W\begin{bmatrix}
        e^{i\pi\alpha} & \\
        & e^{i\pi\beta}
    \end{bmatrix}W^\dagger
\end{bmatrix}
(V^\dagger\otimes \eye)
\end{align*}

Defining $U_{v_0} := W\begin{bmatrix}e^{i\pi\alpha} & \\ & e^{i\pi\beta}\end{bmatrix}W^\dagger$ and $U_{v_1} := W\begin{bmatrix}e^{i\pi\gamma} & \\ & e^{i\pi\delta}\end{bmatrix}W^\dagger$ we see that $V$ is a partial eigenbasis of $U$. Furthermore, since $W$ simultaneously diagonalizes $U_{v_0}$ and $U_{v_1}$, $U_{v_0}$ and $U_{v_1}$ commute and so satisfy $[U_{v_0},U_{v_1}]=0$.

$(\impliedby)$ Let $V$ be a partial eigenbasis of $U$, and let $[U_{v_0},U_{v_1}]=0$. 
\begin{align*}
U = (V\otimes\eye)\begin{bmatrix}
    U_{v_0} & 0 \\
    0 & U_{v_1}
\end{bmatrix}(V^\dagger\otimes\eye)
\end{align*}

Since $U_{v_0}$ and $U_{v_1}$ commute, there exists a matrix which simultaneously diagonalizes them. Let this matrix be called $W$.
\begin{align*}
U = (V\otimes \eye)
\begin{bmatrix}
    W\begin{bmatrix}
        e^{i\pi\alpha} & \\
        & e^{i\pi\beta}
    \end{bmatrix}W^\dagger & \\
    & W\begin{bmatrix}
        e^{i\pi\gamma} & \\
        & e^{i\pi\delta}
    \end{bmatrix}W^\dagger
\end{bmatrix}
(V^\dagger\otimes \eye) \\
= (V\otimes W)
\begin{bmatrix}
    e^{i\pi\alpha} & & & \\
    & e^{i\pi\beta} & & \\
    & & e^{i\pi\gamma} & \\
    & & & e^{i\pi\delta}
\end{bmatrix}
(V^\dagger\otimes W^\dagger)
\end{align*}
meaning $V\otimes W$ is an eigenbasis of $U$.

\end{proof}

\begin{corollary}
    $V$ and $W$ are partial eigenbases of $U$ on qubits 1 and 2 respectively iff $V$ is a partial eigenbasis of $U$ and the conditional operations $U_{v_0},U_{v_1}$ it implements satisfy $[U_{v_0},U_{v_1}]=0$.
\end{corollary}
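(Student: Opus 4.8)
The plan is to obtain the corollary immediately by chaining the two preceding lemmas, since its left-hand condition is verbatim the left-hand side of Lemma 1, its right-hand condition is verbatim the right-hand side of Lemma 2, and the two lemmas share the common middle term ``$V\otimes W$ is an eigenbasis of $U$.'' No new computation beyond what those lemmas already establish should be required.

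Concretely, first I would invoke Lemma 1 to replace the hypothesis that $V$ and $W$ are partial eigenbases of $U$ on qubits 1 and 2 by the equivalent statement that $V\otimes W$ is an eigenbasis of $U$. Then I would invoke Lemma 2 to replace that middle statement in turn by the assertion that $V$ is a partial eigenbasis of $U$ and its induced conditional operations $U_{v_0},U_{v_1}$ satisfy $[U_{v_0},U_{v_1}]=0$. Since both replacements are biconditionals, their composition is again a biconditional, and it is exactly the claimed equivalence.

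The only point needing care — and the nearest thing to an obstacle — is the quantification of $W$ across the equivalence, as $W$ occurs in the left condition but not the right. In the forward direction this is harmless: a given $W$ that is a partial eigenbasis on qubit 2 feeds straight through both lemmas to yield $[U_{v_0},U_{v_1}]=0$. In the reverse direction the asserted $W$ must be produced, and I would make explicit that it is precisely the simultaneous diagonalizer of the commuting pair $U_{v_0},U_{v_1}$ constructed in the $\impliedby$ direction of Lemma 2; that same $W$ is then the partial eigenbasis on qubit 2 delivered by Lemma 1, so the existential on the left is witnessed constructively rather than merely asserted.
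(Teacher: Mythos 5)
Your proposal is correct and matches the paper's proof, which is stated in one line as a ``direct consequence of two preceding Lemmas'' --- exactly the chaining of biconditionals you describe. Your additional care about the quantification of $W$ (producing it in the reverse direction as the simultaneous diagonalizer of $U_{v_0}$ and $U_{v_1}$ from Lemma 2) is a detail the paper leaves implicit, and it is handled correctly.
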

\begin{proof}
    Direct consequence of two preceding Lemmas.
\end{proof}

\begin{corollary}
    If $V$ is a partial eigenbasis of $U$ on qubit 1, then a partial eigenbasis $W$ of $U$ on qubit 2 exists iff $[U_{v_0},U_{v_1}]=0$.
\end{corollary}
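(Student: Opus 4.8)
The plan is to obtain this corollary by chaining the two preceding Lemmas (equivalently, by reading off the existential content already implicit in the Corollary immediately above). Throughout, fix the given partial eigenbasis $V$ of $U$ on qubit 1 and let $U_{v_0},U_{v_1}$ denote the single-qubit conditional operations it implements, i.e. the unitaries satisfying $U\ket{v_0\phi}=\ket{v_0}\otimes U_{v_0}\ket{\phi}$ and $U\ket{v_1\phi}=\ket{v_1}\otimes U_{v_1}\ket{\phi}$ for all $\ket{\phi}$. The statement to prove is an `if and only if' in which the left-hand side asserts the \emph{existence} of a partial eigenbasis $W$ on qubit 2, so I would handle the two directions separately.

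For the forward direction I would assume such a $W$ exists. Then $V$ and $W$ are simultaneously partial eigenbases of $U$ on qubits 1 and 2, so the forward implication of Lemma 1 applies and yields that $V\otimes W$ is an eigenbasis of $U$. Feeding this into the forward implication of Lemma 2 immediately gives $[U_{v_0},U_{v_1}]=0$, which is exactly the desired conclusion. This direction is essentially immediate once the two lemmas are lined up, and any witnessing $W$ will do.

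For the converse I would assume $[U_{v_0},U_{v_1}]=0$ and must \emph{produce} a witness $W$. Since commuting unitary operators are simultaneously diagonalizable, I would take $W$ to be a common eigenbasis of $U_{v_0}$ and $U_{v_1}$; this is precisely the $W$ built in the backward direction of Lemma 2, whose proof guarantees that $V\otimes W$ is an eigenbasis of $U$. Applying the backward implication of Lemma 1 to $V\otimes W$ then shows that both $V$ and $W$ are partial eigenbases, so in particular $W$ is a partial eigenbasis of $U$ on qubit 2. This exhibits the required $W$ and completes the converse.

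The only real subtlety---and the step I would treat most carefully---is the existential quantifier on $W$: the preceding results are phrased for a fixed pair $(V,W)$, whereas here $W$ is merely asserted to exist. The forward direction is unaffected, but the converse must actually construct $W$, and for that one leans on the explicit simultaneous-diagonalization construction inside the proof of Lemma 2 rather than on its bare statement. Equivalently, the whole corollary may be read as the existential reformulation of the Corollary just above, with the forward direction supplying ``if $W$ exists then the commutator vanishes'' and the simultaneous-diagonalization argument supplying the reverse witness.
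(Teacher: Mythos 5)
Your proof is correct and takes essentially the same route as the paper: the paper proves this corollary by reading Corollary 1 (itself the chaining of Lemmas 1 and 2) as a statement of the form $P \wedge Q \iff P \wedge R$ and cancelling the common hypothesis $P$, which is exactly the lemma-chaining you carry out explicitly. The one substantive thing you add is the careful treatment of the existential quantifier on $W$ — the paper's one-line propositional reduction silently reads Corollary 1 existentially, and your appeal to the simultaneous-diagonalization witness inside the proof of Lemma 2 is precisely what justifies that reading.
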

\begin{proof}
    Corollary 1 is a statement of the form $P \wedge Q \iff P \wedge R$. If $P$ is true (namely, if $V$ is a partial eigenbasis of $U$ on qubit 1) then Corollary 1 reduces to $Q \iff R$, which is the statement of Corollary 2.
\end{proof}

Corollary 2 gives us an economical way to check when a controlled operation can have its control switched to the other wire; one simply has to check whether the conditional operations it implements commute with one another. In particular, it tells us that CNOT and CZ can have their controls reversed because one of the two conditional operations they implement is $\eye$, which commutes with everything. Any standard controlled unitary on two qubits, i.e. a two qubit gate which implements the logic $\verb|if | \ket{\phi}_A \verb| do | U_B$
has a reversible control for the same reason; the conditional operation implicitly happening in the orthogonal $\ket{\phi_\perp}_A$ is $\eye$.

\subsection{Classes of control}

An interesting feature of this definition of controlled operations is that it classifies two-qubit operations into three classes fairly naturally. The first are gates like CNOT which are controlled on both wires; as we've seen in the previous section, these gates have tensor product Hamiltonians. This class includes all of the gates usually thought of as controlled like CNOT and CZ. The second class is an interesting intermediary class of gates that are controlled on one wire but not the other. These gates are lopsided/assymetric in the way that CNOT is sometimes mistakenly taken to be; unlike CNOT, they distinguish unequivocally between the two qubits in a way that doesn't depend on a local choice of basis. The third class are the uncontrolled operations, which includes SWAP and its siblings fSWAP, iSWAP.

\begin{proposition}
    Controlled operations can be classified into two classes: (1) controlled by both wires and (2) controlled by one wire. Uncontrolled operations (3) make up a third class.
\end{proposition}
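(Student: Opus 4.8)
The plan is to read this proposition as a logical partition resting on the definitions and corollaries already in hand, and then to verify exhaustiveness, mutual exclusivity, and non-emptiness of each class. First I would attach to every two-qubit unitary $U$ two Boolean properties: $P_1(U)$, that $U$ admits a partial eigenbasis on qubit 1, and $P_2(U)$, that it admits one on qubit 2. By the definition of a controlled operation, $U$ is controlled exactly when $P_1 \vee P_2$ holds and uncontrolled exactly when $\neg P_1 \wedge \neg P_2$ holds. The four truth-value combinations of $(P_1,P_2)$ therefore partition all two-qubit unitaries, and collapsing the two exclusive-or cases $P_1 \wedge \neg P_2$ and $\neg P_1 \wedge P_2$ — which differ only by relabelling the two wires — yields exactly the three announced classes: controlled by both wires ($P_1 \wedge P_2$), controlled by one wire (exactly one of $P_1,P_2$), and uncontrolled ($\neg P_1 \wedge \neg P_2$). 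Mutual exclusivity and exhaustiveness are then immediate, since these cover all possibilities.

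The remaining work is to confirm that each class is genuinely occupied, so that the trichotomy is not vacuous. For class (1) I would cite CNOT, shown controlled on both wires in Claims 1 and 2, noting as in the preceding subsection that such gates carry tensor-product Hamiltonians. For class (3) I would verify directly that SWAP has no partial eigenvector on either wire: since $\mathrm{SWAP}\ket{\psi\phi}=\ket{\phi\psi}$, demanding that this factor as $\ket{\psi}\otimes U_\psi\ket{\phi}$ for every $\ket{\phi}$ would force $\ket{\phi}\propto\ket{\psi}$ for all $\ket{\phi}$, which is impossible, so no partial eigenbasis exists and SWAP is uncontrolled. The one case requiring an explicit construction is class (2): I would exhibit the gate $U$ that applies $X$ to qubit 2 when qubit 1 is $\ket{0}$ and $Z$ when qubit 1 is $\ket{1}$, i.e. $U = \mathrm{diag}(X,Z)$ in the computational basis of qubit 1 (the block-diagonal form with $V=\eye$, $U_0=X$, $U_1=Z$). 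By construction the computational basis is a partial eigenbasis on qubit 1, and since $[X,Z]\neq 0$, Corollary 2 guarantees no partial eigenbasis exists on qubit 2, placing $U$ squarely in class (2).

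The cleanest way to make the "controlled by one wire" class well-defined is to lean on Corollary 2 rather than re-derive anything: once $U$ is known to be controlled on qubit 1, witnessed by some $V$ with conditionals $U_{v_0},U_{v_1}$, membership in class (1) versus class (2) is decided entirely by whether $[U_{v_0},U_{v_1}]=0$. I expect the only genuinely delicate point to be that this criterion is consistent across different witnessing bases — that $U$ cannot simultaneously admit one partial eigenbasis on qubit 1 with commuting conditionals and another with non-commuting ones. This is not a separate computation but a consequence of Corollary 2 itself: the existence of a partial eigenbasis on qubit 2 is a property of $U$ alone, so if any witness on qubit 1 has commuting conditionals then every witness must, and the class assignment is unambiguous. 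With this observation the partition is fully justified.
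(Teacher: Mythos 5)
Your proposal is correct and takes essentially the same route as the paper: the classification is the logical partition by existence of partial eigenbases on each wire, each class is shown non-empty by an explicit gate, and Corollary 2 (non-commuting conditional operations) is the tool that rules out a second control wire for class (2). The only differences are cosmetic --- you witness class (2) with $\mathrm{diag}(X,Z)$ where the paper uses the Bell-state circuit ($H$ followed by CNOT, controlled on wire $B$ with non-commuting conditionals $H$ and $HZ$), and you add the correct, if routine, observation that the Corollary 2 criterion does not depend on the choice of witnessing basis.
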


(1) Ex. CNOT = 
\begin{quantikz}
    & \ctrl{1} & \\
    & \targ{} &
\end{quantikz}
$\begin{array}{lll}
    = \dyad{0}\otimes\eye + \dyad{1}\otimes X & & \text{(controlled by A in 0/1 basis)} \\
    = \eye\otimes\dyad{+} + Z\otimes\dyad{-} & & \text{(controlled by B in +/- basis)}
\end{array}$

(2) Ex. Bell = 
\begin{quantikz}
    & \gate{H} & \ctrl{1} & \\
    & & \targ{} &
\end{quantikz}
$\begin{array}{lll}
    \neq \dyad{\phi}\otimes U_\phi + \dyad{\phi_\perp}\otimes U_{\phi_\perp} & \forall\phi & \text{(not controlled by A)} \\
    = H\otimes\dyad{+} + HZ\otimes\dyad{-} & & \text{(controlled by B in +/- basis)}
\end{array}$

(3) Ex. SWAP = 
\begin{quantikz}[column sep=0.2cm]
& \permute{2,1} & \ghost{H} \\
& & \ghost{H}
\end{quantikz}
$\begin{array}{lll}
    \neq \dyad{\phi}\otimes U_\phi + \dyad{\phi_\perp}\otimes U_{\phi_\perp} & \forall\phi & \text{(not controlled by A)} \\
    \neq U_\phi\otimes\dyad{\phi} + U_{\phi_\perp}\otimes\dyad{\phi_\perp} & \forall\phi & \text{(not controlled by B)} \\
\end{array}$

(1) CNOT is controlled by wire A in the 0/1 basis and controlled by wire B in the +/- basis.

(2) Bell is controlled on wire B in the +/- basis, but the operations it implements on wire A do not commute: $\comm{H}{HZ}\neq 0$. By Corollary 2 no partial eigenbasis exists on wire A, meaning A is not controlled on wire A.

(3) SWAP has the action $\ket{\phi\psi} \mapsto \ket{\psi\phi} \forall \phi,\psi$ so no partial eigenbasis can exist on either wire.

\begin{lemma}
    $U_{AB}$ uncontrolled $\implies$ there exists an entangled eigenvector of $U_{AB}$.
\end{lemma}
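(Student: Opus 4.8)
The plan is to prove the contrapositive: if $U$ has \emph{no} entangled eigenvector---i.e.\ every eigenvector is a product state---then $U$ is controlled, meaning a partial eigenbasis exists on one of the two qubits. Since $U$ is unitary, hence normal, $\mathbf{C}^2\otimes\mathbf{C}^2$ decomposes into mutually orthogonal eigenspaces. First I would show that under the hypothesis one can choose an orthonormal eigenbasis $\{\ket{a_j}\ket{b_j}\}_{j=1}^{4}$ consisting \emph{entirely} of product states: a one-dimensional eigenspace is spanned by an eigenvector, which is a product state by assumption, while a higher-dimensional eigenspace consists entirely of eigenvectors, hence entirely of product states, so an orthonormal product basis for it can be chosen directly once its shape is known.

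The shape is supplied by a \emph{subspace lemma}: a linear subspace of $\mathbf{C}^2\otimes\mathbf{C}^2$ all of whose nonzero vectors are product states has dimension at most $2$, and a two-dimensional such subspace is either $\ket{\psi}_A\otimes\mathbf{C}^2$ or $\mathbf{C}^2\otimes\ket{\phi}_B$. I would prove this by a tensor-rank argument: given two product vectors $\ket{a}\ket{b}$ and $\ket{c}\ket{d}$ spanning the subspace, if $\ket{a}\parallel\ket{c}$ the subspace is $\ket{a}_A\otimes\mathbf{C}^2$ and if $\ket{b}\parallel\ket{d}$ it is $\mathbf{C}^2\otimes\ket{b}_B$; otherwise $\{\ket{a},\ket{c}\}$ and $\{\ket{b},\ket{d}\}$ are bases of the two factors and $\ket{a}\ket{b}+\ket{c}\ket{d}$ has a rank-$2$ coefficient matrix, so it is entangled, contradicting the hypothesis. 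The same rank bound rules out dimension $\ge 3$, and the scalar case $U=\lambda\eye$ is excluded because it has entangled eigenvectors. This already settles the degenerate cases, since a two-dimensional eigenspace of the stated form forces one of the two factors to take exactly two orthogonal values (two $\ket{a_j}$ equal to $\ket{\psi}$ and two equal to $\ket{\psi_\perp}$, or the analogous statement on $B$), which (see below) yields control immediately.

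The remaining work is the nondegenerate case, where $\{\ket{a_j}\ket{b_j}\}$ is a genuine orthonormal product basis. Here I would prove a \emph{structure theorem}: for such a basis, either the four first factors $\ket{a_j}$ occupy exactly two orthogonal directions (two each), or the four second factors $\ket{b_j}$ do. The argument is combinatorial: for each pair $i\neq j$, orthonormality gives $\langle a_i|a_j\rangle\langle b_i|b_j\rangle=0$, forcing $\ket{a_i}\perp\ket{a_j}$ or $\ket{b_i}\perp\ket{b_j}$, so the ``orthogonality graphs'' $G_A,G_B$ on vertices $\{1,2,3,4\}$ satisfy $G_A\cup G_B=K_4$. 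Because in $\mathbf{C}^2$ each direction has a unique orthogonal direction, each of $G_A,G_B$ is a disjoint union of complete bipartite graphs (between antipodal parallel-classes), hence triangle-free; running through the few possibilities on four vertices, the only admissible pairs covering all six edges are those with $G_A=K_{2,2}$ or $G_B=K_{2,2}$, which is exactly the two-orthogonal-values conclusion. Finally, once say $\ket{a_1}\parallel\ket{a_2}=:\ket{u}$ and $\ket{a_3}\parallel\ket{a_4}=:\ket{u_\perp}$, I verify that $\{\ket{u},\ket{u_\perp}\}$ is a partial eigenbasis: orthonormality forces $\ket{b_1}\perp\ket{b_2}$, so $\ket{u}\otimes\ket{\phi}$ lies in $\mathrm{span}\{\ket{u}\ket{b_1},\ket{u}\ket{b_2}\}$ and $U$ keeps it inside $\ket{u}\otimes\mathbf{C}^2$, and likewise for $\ket{u_\perp}$. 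Thus $U$ is controlled, completing the contrapositive.

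The main obstacle is the structure theorem for orthonormal product bases; everything else is routine linear algebra. The cleanest way to tame the case analysis is the graph-covering formulation above, together with the two facts that a qubit-orthogonality graph is triangle-free (three pairwise orthogonal states cannot exist in $\mathbf{C}^2$) and is in fact a union of complete bipartite pieces, which pins down its admissible shapes on four vertices. I expect the only real care needed is to confirm that no admissible pair $(G_A,G_B)$ covers $K_4$ unless one of them is already $K_{2,2}$.
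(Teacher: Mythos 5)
Your proof is correct and is essentially the paper's own argument: the paper likewise passes to a product eigenbasis, observes that each of the six pairwise orthogonalities $\langle a_i|a_j\rangle\langle b_i|b_j\rangle=0$ must be inherited by factor $A$ or factor $B$, uses the impossibility of three mutually orthogonal directions in $\mathbf{C}^2$ (your triangle-freeness) to exclude the lopsided distributions, and concludes that one factor must carry four constraints in the two-orthogonal-directions pattern --- your $K_{2,2}$ --- which is exactly a partial eigenbasis. Your graph-covering formulation and explicit handling of degenerate eigenspaces are a tidier packaging than the paper's informal constraint count (which tacitly uses the fact that, under the hypothesis, any orthonormal eigenbasis is automatically a product basis), but the underlying route is the same.
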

\begin{proof}
    (By contradiction) Assume $U_{AB}$ is a two-qubit unitary with no entangled eigenvectors, i.e. it has separable eigenvectors denoted $\{\ket{u_1},\ket{u_2},\ket{u_3},\ket{u_4}\} = \{\ket{a_1b_1},\ket{a_2b_2},\ket{a_3b_3},\ket{a_4b_4}\}$. Being an eigenbasis of $U_{AB}$, these eigenvectors satisfy $\braket{u_i}{u_j} = \delta_{ij}$, and in particular they must be pairwise orthogonal:
    

    \begin{align*}
    \braket{u_1}{u_2} = 0 \quad &\implies \quad \braket{a_1}{a_2} = 0 \text{  or } \braket{b_1}{b_2} = 0 \\
    \braket{u_1}{u_3} = 0 \quad &\implies \quad \braket{a_1}{a_3} = 0 \text{  or } \braket{b_1}{b_3} = 0 \\
    \braket{u_1}{u_4} = 0 \quad &\implies \quad \braket{a_1}{a_4} = 0 \text{  or } \braket{b_1}{b_4} = 0 \\
    \braket{u_2}{u_3} = 0 \quad &\implies \quad \braket{a_2}{a_3} = 0 \text{  or } \braket{b_2}{b_3} = 0 \\
    \braket{u_2}{u_4} = 0 \quad &\implies \quad \braket{a_2}{a_4} = 0 \text{  or } \braket{b_2}{b_4} = 0 \\
    \braket{u_3}{u_4} = 0 \quad &\implies \quad \braket{a_3}{a_4} = 0 \text{  or } \braket{b_3}{b_4} = 0 
    \end{align*}

    Each orthogonality contraint must be satisfied in either the $A$ subspace or the $B$ subspace, or both. Within each subspace we are further constrained by the fact that $A$ and $B$ are qubits, meaning if $\ket{a_1}\perp\ket{a_2}$ and $\ket{a_1}\perp\ket{a_3}$ then $\ket{a_2} = \ket{a_3}$.

    If 5 or 6 of these constraints are satisfied on either qubit we immediately reach a contradiction: the constraints contain an orthogonality cycle of odd length (e.g. $\ket{a_1}\perp\ket{a_2}\perp\ket{a_3}\perp\ket{a_1}$), meaning at least one $\ket{u_i}$ must be orthogonal to itself, which is a contradiction. If 0 or 1 constraints are satisfied on either qubit, the other must satisfy 5 or 6 constraints and we reach the same contradiction.

    If 3 constraints are satisfied on one qubit which do not create an odd-length orthogonality cycle, the fourth constraint is implied, e.g. $(\ket{a_1}\perp\ket{a_2} \text{ and } \ket{a_1}\perp\ket{a_3} \text{ and } \ket{a_2}\perp\ket{a_4}) \implies \ket{a_3}\perp\ket{a_4}$.
    
    This leaves us with only two cases to consider: 4 or 2 constraints being satisfied on either qubit. If 4 constraints are satisfied on one qubit, there is partial eigenbasis on that qubit, e.g. $(\ket{a_1}\perp\ket{a_2} \text{ and } \ket{a_1}\perp\ket{a_3} \text{ and } \ket{a_2}\perp\ket{a_4} \text{ and } \ket{a_1}\perp\ket{a_2}) \implies (\ket{a_1}=\ket{a_4} \text{ and } \ket{a_2}=\ket{a_3})$, meaning $\{\ket{a_1},\ket{a_2}\}$ is a partial eigenbasis of $U_AB$.
    If only 2 constraints are satisfied, no partial eigenbasis exists on that qubit.

    If both $A$ and $B$ satisfy 4 constraints each, we are in case (1) where partial eigenbases exist on both qubits. If $A$ satisfies 4 constraints and $B$ satisfies 2 constraints, we are in case (2) where only one partial eigenbasis exists on qubit $A$. For $U_{AB}$ to be uncontrolled it must satisfy just 2 constraints on both qubits, but since it must satisfy 6 total constraints and $2+2<6$, no such uncontrolled $U_{AB}$ can exist. Therefore $U$ is controlled.
\end{proof}

\section{Recap on Hamiltonians for the Computer Science reader}\label{appendix:hamiltonians}

Although unitary gates like $X$, $T$, or CNOT are the standard fare in quantum computing for manipulating a register of qubits, the time evolution of a generic quantum system $\ket{\psi(t)}$ is sometimes given in terms of another quantity called the \textit{Hamiltonian} via the Schrödinger equation
\begin{equation*}
    H\ket{\psi(t)} = i\hbar\frac{d}{dt}\ket{\psi(t)}.
\end{equation*}
The Hamiltonian $H$ is a Hermitian operator, and physically speaking is the observable describing the system's total energy. When $H$ is itself not changing with time, the Schrödinger equation has the solution
\begin{equation*}
    \ket{\psi(t)} = e^{-itH/\hbar}\ket{\psi(0)}
\end{equation*}
where and $e^{-itH/\hbar} =: U$ is the unitary operator evolving $\ket{\psi(0)}$ in time. Although $H$ and $U$ can be infinite dimensional in general (for example when they act on a continuous region of space), in quantum computing we only consider finite dimensional Hamiltonians, since our systems are finite collections of qubits.\footnote{or other discrete, finite systems like dual-rail encoded photons.}

The minus sign and the $\hbar$ in the exponent of $U$ are absorbed into $H$ in this paper to simplify the notation.\footnote{The $\hbar$ has physical units of energy$\cdot$time, so the remaining Hamiltonian in the exponent of $U=e^{i\pi tH}$ has units of time$^{-1}$ as needed to make the exponent dimensionless. Mathematicians studying Lie groups prefer to also absorb the $i$ into $H$, making it skew-Hermitian and leaving behind just $U = e^H$. Physicists typically leave the $i$ out so that $H$ has real eigenvalues instead of imaginary ones, as is expected of an observable.} In addition, a $\pi$ is factored out and left in the exponent in this paper so that common operations like $X$, $Z$, $H$, and CNOT (i.e. $\pi$-pulses, half-rotations of the Bloch sphere for the one-qubit gates) appear in the diagrammatic notation with a power of 1 rather than $\pi$, eliminating the need to write $\pi$ everywhere when working with diagrams. The numbers that appear in the diagrams can thus be interpreted as the time of application of the associated Hamiltonian, as measured in number of $\pi$-pulses. This leaves the functional form
\begin{equation*}
    \ket{\psi(t)} = e^{i\pi tH}\ket{\psi(0)}
\end{equation*}
for all gates. Simple gates ($t=1$) are $\pi$-pulses, and gates with a vertical line have a tensor product Hamiltonian along the line. For example, the symbol $\bm{\bullet}$ represents the Pauli matrix $Z$, and has Hamiltonian $\dyad{1}$. In 1 timestep this operator evolves a state as $\ket{\psi} \mapsto Z\ket{\psi}$. 

\begin{equation*}
\dyad{1} = 
\begin{bmatrix}
   0 \\
   1
\end{bmatrix}
\begin{bmatrix}
   0 & 1
\end{bmatrix}
=
\begin{bmatrix}
   0 & 0 \\
   0 & 1
\end{bmatrix}
\end{equation*}
\begin{equation*}
\begin{quantikz}
& \phase{} &
\end{quantikz} = e^{i\pi\dyad{1}} = \exp(i\pi\begin{bmatrix}
   0 & 0 \\
   0 & 1
\end{bmatrix})
= \exp(\begin{bmatrix}
   0 & 0 \\
   0 & i\pi
\end{bmatrix})
= \begin{bmatrix}
   e^0 & 0 \\
   0 & e^{i\pi}
\end{bmatrix}
= \begin{bmatrix}
   1 & 0 \\
   0 & -1
\end{bmatrix}
= Z
\end{equation*}

As a final note on the assumption of time-independence, it is always assumed in the circuit model of quantum computing that it is physically possible to apply a sequence of unitaries of the above form (i.e. gates) to the qubits, meaning that their associated time-independent Hamiltonians can be switched on and off instantaneously. Discontinuously jumping between different constant Hamiltonians instantaneously is clearly unphysical since it would require infinite energy, so the real statement is rather that we assume the action of the desired unitaries can be well-approximated by sufficiently good tuning of the mechanisms implementing the associated Hamiltonians; perhaps switching a laser pulse on and off as sharply as possible, and keeping the laser output as constant as possible during the pulse. This is the work of experimentalists and quantum hardware engineers, and the practical utility of all theory work done in the circuit model relies implicitly upon it.

\newpage
\section{Ruleset reference sheet}\label{appendix:referencesheet}

\begin{center}
\textbf{Notation}
\end{center}
\noindent\fbox{%
\begin{minipage}{\linewidth}
\begin{equation*}
\begin{aligned}
&\begin{quantikz}
    & \phase{} &
\end{quantikz}
:= e^{i\pi\dyad{1}} = Z
\quad \quad \quad
&\begin{quantikz}
    & \targ{} &
\end{quantikz}
&:= e^{i\pi\dyad{-}} = X \\
&\begin{quantikz}
    & \ophase{} &
\end{quantikz}
:= e^{i\pi\dyad{0}} = -Z
\quad \quad \quad
&\begin{quantikz}
    & \targm{} &
\end{quantikz}
&:= e^{i\pi\dyad{+}} = -X
\end{aligned}
\end{equation*}

\begin{equation*}
\text{if} \quad
\begin{quantikz}
    & \gate{U} &
\end{quantikz}
= e^{i\pi H_U} \quad \text{and} \quad
\begin{quantikz}
    & \gate{V} &
\end{quantikz}
= e^{i\pi H_V} \quad \text{then} \quad
\begin{quantikz}
    & \gate{U}\vqw{1} & \\
    & \gate{V} &
\end{quantikz}
:= e^{i\pi H_U \otimes H_V}
\end{equation*}

\begin{equation*}
\text{if} \quad
\begin{quantikz}
    & \gate{U} &
\end{quantikz}
= e^{i\pi H_U} \quad \text{then} \quad
\begin{quantikz}
    & \gate{U} & \wire[l][1]["\alpha"{above,pos=0.2,scale=1.3}]{a}
\end{quantikz}
:=
e^{i\pi \alpha H_U}
=
\begin{quantikz}
    & \gate{U^\alpha} &
\end{quantikz}
\end{equation*}
\end{minipage}
} 

Where all eigenvalues of $H_U,H_V$ are contained in the interval $[0,2)$ without loss of generality. \\

\begin{center}
\textbf{Ruleset}
\end{center}
\noindent\fbox{%

\begin{minipage}{0.5\linewidth}
\begin{equation*}
\textbf{(c)}
\begin{quantikz}
    & \phase{\alpha} & \ophase{\beta} &
\end{quantikz}
=
\begin{quantikz}
    & \ophase{\beta} & \phase{\alpha} &
\end{quantikz}
\end{equation*}

\begin{equation*}
\textbf{(ac)}
\begin{quantikz}
    & \phase{\alpha} & \targ{} &
\end{quantikz}
=
\begin{quantikz}
    & \targ{} & \ophase{\alpha} &
\end{quantikz}
\end{equation*}

\begin{equation*}
\textbf{(H)}
\begin{quantikz}
    & \phase{\alpha} & \gate{H} &
\end{quantikz}
=
\begin{quantikz}
    & \gate{H} & \targ{\alpha} &
\end{quantikz}
\end{equation*}

\begin{equation*}
\textbf{(i)}
\begin{quantikz}
    & \phase{2} & 
\end{quantikz}
=
\begin{quantikz}
    & \gate{H} & \wire[l][1]["2"{above,pos=0.2,scale=1.3}]{a}
\end{quantikz}
=
\begin{quantikz}
    & & & &
\end{quantikz}
\end{equation*}

\vspace{0.75cm}

\begin{equation*}
\textbf{(n+)}
\begin{quantikz}
& \gate{U} & \wire[l][1]["\alpha"{above,pos=0.2,scale=1.3}]{a} & & \gate{U} & \wire[l][1]["\beta"{above,pos=0.2,scale=1.3}]{a}
\end{quantikz}
=
\begin{quantikz}
& \gate{U} & & \wire[l][1]["\alpha+\beta"{above,pos=0.2,scale=1.3}]{a}
\end{quantikz}
\end{equation*}

\begin{equation*}
\textbf{(n*)}
\begin{quantikz}
& \gate{U}\vqw{1} & \wire[l][1]["\alpha"{above,pos=0.2,scale=1.3}]{a} \\
& \gate{V} & \wire[l][1]["\beta"{above,pos=0.2,scale=1.3}]{a}
\end{quantikz}
=
\begin{quantikz}
& \gate{U}\vqw{1} & \wire[l][1]["\alpha\beta"{above,pos=0.2,scale=1.3}]{a} \\
& \gate{V} &
\end{quantikz}
\end{equation*}

\vspace{0.75cm}

\begin{equation*}
\textbf{(pdm)}
\begin{quantikz}
    & \meter{}\vcw{1} \\
    & \gate{U} &
\end{quantikz}
=
\begin{quantikz}
    & \ctrl{1} & \meter{} \\
    & \gate{U} & 
\end{quantikz}
\end{equation*}

\begin{equation*}
\textbf{(mb)}
\begin{quantikz}
    & \meter{U}
\end{quantikz}
=
\begin{quantikz}
    & \gate{U^\dagger} & \meter{}
\end{quantikz}
\end{equation*}

\end{minipage}

\begin{minipage}{0.5\linewidth}
\begin{equation*}
\textbf{(C0)}
\begin{quantikz}
& \gate{\eye}\vqw{1} & \\
& \gate{U} & 
\end{quantikz}
=
\begin{quantikz}
& \ghost{H} & \\
& \ghost{H} & 
\end{quantikz}
\end{equation*}

\begin{equation*}
\textbf{(d)} 
\begin{quantikz}[column sep=0.1cm,row sep=0.25cm]
    & \ctrl{1} & \ctrl{1} & \\
    & \gate{U} & \gate{V} &
\end{quantikz}
=
\begin{quantikz}[column sep=0.1cm,row sep=0.3cm]
    & \ctrl{1} & \\
    & \gate{U\circ V} &
\end{quantikz}
\end{equation*}

\begin{equation*}
\textbf{(e)}
\begin{quantikz}[column sep=0.1cm,row sep=0.3cm]
    & & \\
    & \gate{U} &
\end{quantikz}
=
\begin{quantikz}[column sep=0.1cm,row sep=0.25cm]
    & \octrl{1} & \ctrl{1} & \\
    & \gate{U} & \gate{U} &
\end{quantikz}
\end{equation*}

\begin{equation*}
\textbf{(f)}
\begin{quantikz}[column sep=0.1cm,row sep=0.25cm]
    & \octrl{1} & \ctrl{1} & \\
    & \gate{U} & \gate{V} &
\end{quantikz}
=
\begin{quantikz}[column sep=0.1cm,row sep=0.25cm]
    & \ctrl{1} & \octrl{1} & \\
    & \gate{V} & \gate{U} &
\end{quantikz}
\end{equation*}

\begin{equation*}
\begin{aligned}
\textbf{(g)}
\text{ if} \quad
\begin{quantikz}[column sep=0.1cm,row sep=0.25cm]
    & \gate{U} & \gate{V} &
\end{quantikz}
=
\begin{quantikz}[column sep=0.1cm,row sep=0.25cm]
    & \gate{V'} & \gate{U} &
\end{quantikz} \\
\quad \text{then} \quad
\begin{quantikz}[column sep=0.1cm,row sep=0.25cm]
    & & \ctrl{1} & \\
    & \gate{U} & \gate{V} &
\end{quantikz}
=
\begin{quantikz}[column sep=0.1cm,row sep=0.25cm]
    & \ctrl{1} & & \\
    & \gate{V'} & \gate{U} &
\end{quantikz}
\end{aligned}
\end{equation*}

\begin{equation*}
\textbf{(s)}
\begin{quantikz}
& \permute{2,1} & \ghost{H} \\
& & \ghost{H}
\end{quantikz}
=
\begin{quantikz}
& \ctrl{1} & \targ{} & \ctrl{1} & \\
& \targ{} & \ctrl{-1} & \targ{} &
\end{quantikz}
\end{equation*}

\begin{equation*}
\textbf{(t)}
\begin{quantikz}
& \permute{2,1} & \gate[2]{U_{AB}} & \\
& & \ghost{U_{AB}} &
\end{quantikz}
=
\begin{quantikz}
& \gate[2]{U_{BA}} & \permute{2,1} & \\
& \ghost{U_{BA}} & &
\end{quantikz}
\end{equation*}

\end{minipage}

} 

All rules hold under arbitrary permutation of symbols within types (flip) and/or joint interchange of symbol types (flop). Applying the flip maps $(\text{flip}_\text{z}\text{: }\bm{\bullet}\leftrightarrow\bm{\circ})$ and/or $(\text{flip}_\text{x}\text{: }\bm{\oplus}\leftrightarrow\bm{\ominus})$, and/or the flop map $(\text{flop: }\bm{\bullet}\leftrightarrow\bm{\oplus}$ and $\bm{\circ}\leftrightarrow\bm{\ominus})$ to any rule yields another valid rule.

\end{document}